\newtheorem{definition}{Definition}[section]
\newtheorem{theorem}{Theorem}[section]
\newtheorem{lemma}{Lemma}[section]
\newtheorem{observation}{Observation}[section]
\newtheorem{corollary}{Corollary}[section]
\newcommand{\ie}{{\em i.e.,}\xspace}
\newcommand{\aka}{{\em a.k.a.}\xspace}
\newcommand{\eg}{{\em e.g.,}\xspace}
\newcommand{\etc}{{\em etc.}}
\newcommand{\PEF}{$\mathbb{SELF}$-$\mathbb{STAB\_PEF}\_3$}
\newcommand{\PEFR}{$\mathbb{SELF}$-$\mathbb{STAB\_PEF}\_2$}
\title{Self-Stabilizing Robots in\\Highly Dynamic Environments\thanks{A preliminary 
version of this work appears in the proceedings of the 18th International Symposium 
on Stabilization, Safety, and Security of Distributed Systems (SSS 2016) \cite{BDD16}.\newline 
\indent This work has been partially supported by the ANR project ESTATE and was initiated while 
the second author was visiting professor at UPMC Sorbonne Universit\'es.\newline
\indent The authors want to thank Florence Lev\'e from Universit\'e de Picardie-Jules Verne (France) for her help on the proofs of Lemmas \ref{separation_possible} and \ref{separation_possible_bis}.}}
\renewcommand*{\thefootnote}{\fnsymbol{footnote}}
\author{
Marjorie Bournat\footnotemark[2]
\and
Ajoy K. Datta\footnotemark[3]
\and
Swan Dubois\footnotemark[2]
}
\date{}
\begin{document}

\maketitle


   \begin{abstract}
This paper deals with the classical problem of exploring a
ring by a cohort of synchronous robots. We focus on the
perpetual version of this problem in which it is required
that each node of the ring is visited by a robot infinitely 
often. 

The challenge in this paper is twofold. First, we assume that 
the robots evolve in a highly dynamic ring, \ie edges may appear 
and disappear unpredictably without any recurrence, periodicity, 
nor stability assumption. The only assumption we made
(known as temporal connectivity assumption)
is that each node is infinitely often reachable from any
other node. 
Second, we aim at providing a self-stabilizing
algorithm to the robots, \ie the algorithm must guarantee
an eventual correct behavior regardless of the initial state and
positions of the robots.

In this harsh environment, our contribution is to fully characterize,
for each size of the ring, the necessary and sufficient number of robots 
to solve deterministically the problem.
\end{abstract}

\footnotetext[2]{UPMC Sorbonne Universit\'es, CNRS, Inria, LIP6 UMR 7606, France}

\footnotetext[3]{University of Nevada, Las Vegas, United States}

\renewcommand*{\thefootnote}{\arabic{footnote}}
\setcounter{footnote}{0}

\section{Introduction}\label{sec:intro}

   We consider a cohort of autonomous and synchronous robots that are
equipped with motion actuators and sensors, but that are otherwise
unable to communicate \cite{SY99}. They evolve in a \emph{discrete
environment},
represented by a graph, where the nodes represent the
possible locations of robots and the edges the possibility
for a robot to move from one location to another. Refer to \cite{PRT11}
for a survey of results in this model. One fundamental problem is the
\emph{exploration} of graphs by robots. Basically, each node of the
graph has to be visited by at least one robot. There exist several
variants of this problem depending on whether the robots are required
to stop once they completed the exploration of the graph or not.

Typically, the environment of the robots is modeled by a \emph{static}
undirected connected graph meaning that both vertex and edge sets do not evolve with time.
In this paper, we consider {\em dynamic} environments that may change over time, 
for instance, a transportation network, a building in which doors are closed and 
open over time, or streets that are closed over time due to work in process or 
traffic jam in a town. More precisely, we consider dynamic graphs \cite{XFJ03,CFQS12}
in which edges may appear and disappear unpredictably without any stability, recurrence, 
nor periodicity assumption. However, to ensure that the problem is not trivially 
unsolvable, we made the assumption that each node is infinitely often 
reachable from any other one through a {\em temporal path} (\aka~{\em journey}~\cite{CFQS12}). 
In the following, such dynamic graphs are indifferently called \emph{highly dynamic} 
or \emph{connected-over-time}.


As in other distributed systems, \emph{fault-tolerance} is a central issue
in robot networks. Indeed, it is desirable that the misbehavior of some
robots does not prevent the whole system to reach its objective.
\emph{Self-stabilization} \cite{D74,D00,T09} is a versatile technique to
tolerate
\emph{transient} (\ie of finite duration) faults. After the occurrence of a
catastrophic failure that may take the system to some arbitrary
global state, self-stabilization guarantees recovery to a correct behavior
in finite time without external (\ie human) intervention. In the
context of robot networks, that implies that the algorithm must guarantee
an eventual correct behavior regardless of the initial state and positions of
the robots.

Our objective in this paper is twofold. First, we want to investigate for the 
first time the problem of exploration of a highly dynamic graph by a cohort of 
self-stabilizing deterministic robots. Second, we aim at characterizing, for 
the specific case of the ring, the necessary and sufficient number of robots
to perform this task (in function of the size of the ring).

\paragraph{Related Work.} Since the seminal work of Shannon \cite{S51},
exploration of graphs by a cohort of robots has been extensively studied. There
exist mainly three variants of the problem: $(i)$ \emph{exploration with
stop}, where robots are required to detect the end of the exploration, then
stop moving (\eg \cite{FIPS07}); $(ii)$ \emph{exploration with return},
where robots must come back to their initial location once the exploration
completed (\eg \cite{DFKP04}); and $(iii)$ \emph{perpetual exploration}, where
each node has to be infinitely often visited by some robots (\eg
\cite{BBMR08}). Even if we restrict ourselves to deterministic approaches,
there exist numerous solutions to these problems depending on the topology
of the graphs to explore (\eg ring-shaped \cite{FIPS07}, line-shaped \cite{FIPS11},
tree-shaped \cite{FIPS10}, or arbitrary network \cite{CFMS10}), and
the assumptions made on robots (\eg limited range of visibility \cite{DLLP13},
common sense of orientation \cite{BMPT10}, \etc).

Note that all the above work consider only static graphs.
Recently, some work dealt with the exploration of dynamic graphs. 
We review them in the following.

The first two papers \cite{FMS09,IW11} focused on the exploration (with stop) by a 
single agent of so-called \emph{PV-graphs}. A PV-graph is a very specific 
kind of dynamic graph in which a set of entities (called carriers) infinitely often 
move in a predetermined way, each of them periodically visiting a subset of nodes 
of the graph. An agent (controlled by the algorithm) is initially located at a node 
and can move from node to node only using a carrier. This is a relevant model for 
transportation networks. In this context, these two papers study the necessity and 
the sufficiency of various assumptions (like the anonymity of nodes, the 
knowledge of the size of the network, 
or of a bound on the periodicity of the carriers, \etc) as well as their impact on the complexity 
of the exploration. The main difference between these two works lies on the assumption 
whether the agent is able to wait a carrier on nodes \cite{IW11} or not \cite{FMS09}.

A second line of research \cite{IW13,IKW14,DDFS16} considers another
restriction on dynamicity by targeting \emph{$T$-interval-connected graphs}, \ie
the graph is connected at each step and there exists a stability of this
connectivity in any interval of time of length $T$ \cite{KLO10}.
The two first papers \cite{IW13,IKW14} investigate an \emph{off-line} version of the
exploration meaning that the single agent knows in advance the evolution of the graph
over time and uses it to compute its route before the beginning of the execution.
In both papers, the authors provide lower and upper bounds on the exploration time 
in this context. The first one focuses on ring-shaped graphs, the second one on
cactus-shaped (\ie trees of rings). Finally, \cite{DDFS16} deals with the exploration
with stop of $1$-interval-connected rings by several robots. The authors study the impact
of numerous assumptions (like the synchrony assumption, the anonymity of the graph, the
chirality of robots, or the knowledge of some characteristic of the graph) on the 
solvability of the problem depending on the number of robots involved.
They particularly show that these assumptions may influence
the capacity of robots to \emph{detect} the end of the exploration and hence to 
systematically terminates their execution or not. 

In summary, previous work on exploration of dynamic graphs restricts strongly the 
dynamic of the considered graph. The notable exception is a recent work on perpetual 
exploration of highly dynamic rings \cite{BDP17}. This paper shows that, three 
(resp. two) synchronous anonymous robots are necessary and sufficient to 
perpetually explore highly dynamic rings of size greater (resp. equals) to four (resp. three).
Nonetheless, algorithms from \cite{BDP17} do not tolerate initial memory corruption nor 
arbitrary initial positions of robots. In other words, they are not self-stabilizing.
Moreover, to the best of our knowledge, there exist no self-stabilizing algorithm for 
exploration either in a static or a dynamic environment. Note that there exist
such fault-tolerant solutions in static graphs to other problems (\eg naming and 
leader election \cite{BPT07}).

\paragraph{Our Contribution.} The main contribution of this paper is to prove that
the necessary and sufficient numbers of robots for perpetual exploration of highly 
dynamic rings exhibited in \cite{BDP17} also hold in a self-stabilizing setting 
\emph{at the price of the loss of anonymity of robots}.

More precisely, this result is achieved through the following technical achievements.
Section \ref{sec:necessary} presents two impossibility results establishing that
at least two (resp. three) self-stabilizing robots are necessary to perpetually explore
highly dynamic rings of size greater than 3 (resp. 4) \emph{even if robots are not anonymous}.
Note that these necessity results are not implied by the ones of \cite{BDP17} (that focuses
on anonymous robots).  Then, Sections \ref{sec:algo3robots}
and \ref{sec:algo2robots} present and prove two algorithms showing the sufficiency 
of these conditions.

\section{Model}\label{sec:model}

In this section, we propose an extension of the classical model of robot networks 
in static graphs introduced by \cite{KMP06} to the context of dynamic graphs.
 
\paragraph{Dynamic graphs.} In this paper, we consider the model of \emph{evolving
graphs} introduced in \cite{XFJ03}. We hence consider the time as discretized 
and mapped to $\mathbb{N}$. An evolving graph $\mathcal{G}$ is an ordered 
sequence $\{G_{0},G_{1}, G_{2}, \ldots\}$ of subgraphs of a given static graph 
$G=(V,E)$. The (static) graph $G$ is called the \emph{footprint} of $\mathcal{G}$.
In the following, we restrict ourselves to evolving graphs whose footprints are 
anonymous, bidirectional, unoriented, and simple graphs.
For any $i\geq 0$, we have $G_{i} = (V, E_{i})$ and we say that
the edges of $E_{i}$ are \emph{present} in $\mathcal{G}$ at time $i$. The 
\emph{underlying graph} of $\mathcal{G}$, denoted $U_\mathcal{G}$, is the static graph
gathering all edges that are present at least once in $\mathcal{G}$ (\ie 
$U_\mathcal{G}=(V,E_\mathcal{G})$ with $E_\mathcal{G}=\bigcup_{i=0}^{\infty}E_i)$. 
An \emph{eventual missing edge} is an edge of $E_\mathcal{G}$ such that there 
exists a time after which this edge is never present in $\mathcal{G}$. A 
\emph{recurrent edge} is an edge of $E_\mathcal{G}$ that is not eventually missing.
The \emph{eventual underlying graph} of $\mathcal{G}$, denoted $U_\mathcal{G}^\omega$, 
is the static graph gathering all recurrent edges of $\mathcal{G}$ 
(\ie $U_\mathcal{G}^\omega=(V,E_\mathcal{G}^\omega)$ where 
$E_\mathcal{G}^\omega$ is the set 
of recurrent edges of $\mathcal{G}$). In this paper, we chose to make minimal
assumptions on the dynamicity of our graph since we restrict ourselves on 
\emph{connected-over-time} evolving graphs. The only constraint we impose on evolving 
graphs of this class is that their eventual underlying graph is connected \cite{DKP15} 
(intuitively, that means that any node is infinitely often reachable from 
any other one). For the sake of the proof, we also consider the weaker class 
of \emph{edge-recurrent} evolving graphs where the eventual underlying graph is connected
and matches to the footprint. In the following, we consider
only connected-over-time evolving graphs whose footprint is a ring of arbitrary size
called connected-over-time rings for simplicity. We call $n$ the size of the ring. Although the ring is unoriented,
to simplify the presentation and discussion, we, as external observers, 
distinguish between the clockwise and the
counter-clockwise direction in the ring.

For the sake of some proofs in this paper, we need to introduce an operator denoted $\backslash$ that
removes some edges of an evolving graph for some time ranges. More formally,
from an evolving graph $\mathcal{G}=\{(V, E_0), (V, E_1),(V, E_2),$ $\ldots\}$, 
we define the evolving graph 
$\mathcal{G} \backslash \{(e_{1}, \tau_{1}), \ldots (e_{k}, \tau_{k})\}$ (with 
for any $i \in \{1, \ldots, k\}$, $e_{i} \in E$ and $\tau_{i} \subseteq \mathbb{N}$) as the 
evolving graph $\{(V, E_0'), (V, E_1'), (V, E_2'),\ldots\}$ such that:
$\forall t\in\mathbb{N},\forall e\in E_\mathcal{G},
e\in E_t' \Leftrightarrow e\in E_t\wedge(\forall i\in\{1,\ldots,k\}, e\neq e_i \vee t\notin \tau_i)$.


\paragraph{Robots.} We consider systems of autonomous mobile entities called
robots moving in a discrete and dynamic environment modeled by an evolving graph
$\mathcal{G}=\{(V,E_1),(V,E_2)\ldots\}$, $V$ being a set of nodes representing the set of
locations where robots may be, $E_i$ being the set of bidirectional edges representing connections through
which robots may move from a location to another one at time $i$. Robots are uniform
(they execute the same algorithm), identified (each of them has a distinct identifier),
have a persistent memory but are unable to directly communicate with one
another by any means.
Robots are endowed with local strong multiplicity detection (\ie they are able to
detect the exact number of robots located on their current node).
They have no a priori knowledge about the ring they explore (size, diameter, dynamicity, \ldots)
nor on the robots (number, bound on size of identifiers\ldots). 
Finally, each robot has its own stable chirality (\ie each robot is able to locally label 
the two ports of its current node with \emph{left} and \emph{right} consistently 
over the ring and time but two different robots may not agree on this labeling). 
We assume that each robot has a variable $dir$ that stores a direction (either
\emph{left} or \emph{right}). At any time, we say that a robot points to
\emph{left} (resp. \emph{right}) if its variable $dir$ is equal
to this (local) direction. We say that 
a robot considers the clockwise (resp., counter-clockwise) direction if the (local) 
direction pointed to by this robot corresponds to the (global) direction seen by an 
external observer. 

\paragraph{Execution.} The configuration of the system at time $t$ (denoted $\gamma_t$)
captures the position (\ie the node where the robot is currently located) and the state 
(\ie the value of every variable of the robot) of each robot at a given time. 
We say that robots form a \emph{tower} on a node $v$ in $\gamma_t$ if at least two robots 
are co-located on $v$ in $\gamma_t$. Given an evolving graph 
$\mathcal{G}=\{G_{0},G_{1}, G_{2}, \ldots\}$, an algorithm $\mathcal{A}$, and an initial
configuration $\gamma_0$, the execution $\mathcal{E}$ of $\mathcal{A}$ on $\mathcal{G}$
starting from $\gamma_0$ is the infinite sequence $(G_0,\gamma_0),(G_1,\gamma_1),
(G_2,\gamma_2),\ldots$ where, for any $i\geq 0$, the configuration $\gamma_{i+1}$ is
the result of the execution of a synchronous round by all robots 
from $(G_i,\gamma_i)$ as explained below.

The round that transitions the system from $(G_i,\gamma_i)$ to $(G_{i+1},\gamma_{i+1})$ 
is composed of three atomic and synchronous phases: Look, Compute, Move. During 
the Look phase, each robot gathers information about its environment in $G_i$. More
precisely, each robot updates the value of the following local predicates:
$(i)$ $Number\-Of\-Ro\-bots\-On\-No\-de()$ that returns the exact number of robots present
at the node of the robot;
$(ii)$ $Exists\-Edge\-On\-Left()$ that returns true if an edge in the left 
direction of the robot is present, false otherwise;  
$(iii)$ $Exists\-Edge\-On\-Right()$ that returns true if an edge in the right 
direction of the robot is present, false otherwise;
$(iv)$ $Exists\-Adjacent\-Edge()$ returns true if an edge adjacent to
the current node of the robot is present, false otherwise.
During the Compute phase, each robot executes the algorithm $\mathcal{A}$
that may modify some of its variables (in particular $dir$) depending on its 
current state and on the values of the predicates updated during the Look phase. 
Finally, the Move phase consists of moving each robot trough one edge 
in the direction it points to if there exists an edge in that direction, otherwise,
\ie if the edge is missing at that time, the robot remains at its current node.
Note that the $i^{th}$ round is entirely executed on $G_{i}$ and that the 
transition from $G_{i}$ to $G_{i + 1}$ occurs only at the end of this round.
We say that a robot
is \emph{edge-activated} during a round if there exists at least one edge 
adjacent to its location during that round. To simplify the pseudo-code of the 
algorithms, we assume that the robots have access to two predicates: 
$Exists\-Edge\-On\-Current\-Dire\-ction()$ (that returns true if an edge is 
present at the direction currently pointed by the robot, false otherwise) and 
$Exists\-Edge\-On\-Oppo\-si\-te\-Dire\-ction()$ (that returns true if an edge is
present in the direction opposite to the one currently pointed by the robot, 
false otherwise). Both of these two predicates 
depend on the values of the predicates $ExistsRightEdge()$ and 
$ExistsLeftEdge()$, and on the value of the variable $dir$.

\paragraph{Self-Stabilization.} Intuitively, a self-stabilizing 
algorithm is able to recover in a finite time a correct
behavior from any arbitrary initial configuration (that captures the effect of an
arbitrary transient fault in the system). More formally, an algorithm $\mathcal{A}$ 
is \emph{self-stabilizing} for a problem on a class of evolving graphs $\mathcal{C}$ 
if and only if it ensures that, for any configuration $\gamma_0$, the execution of 
$\mathcal{A}$ on any $\mathcal{G}\in\mathcal{C}$ starting from $\gamma_0$ contains 
a configuration $\gamma_i$ such that the execution of $\mathcal{A}$ on $\mathcal{G}$
starting from $\gamma_i$ satisfies the specification of the problem. Note that,
in the context of robot networks, this definition implies that robots must tolerate
both arbitrary initialization of their variables and arbitrary initial positions
(in particular, robots may be stacked in the initial configuration).

\paragraph{Perpetual Exploration.} Given an evolving graph $\mathcal{G}$, 
a perpetual exploration algorithm guarantees that every node of
$\mathcal{G}$ is infinitely often visited by at least one robot (\ie a robot 
is infinitely often located at every node of $\mathcal{G}$). Note that this 
specification does not require that every robot visits infinitely often every 
node of $\mathcal{G}$.

\section{Necessary Number of Robots}\label{sec:necessary}

   This section is devoted to the proof of the necessity of two (resp. three) 
self-stabilizing identified robots to perform perpetual exploration of highly 
dynamic rings of size at least 3 (resp. 4). To reach this goal, we provide two
impossibility results. 

First, we prove (see Theorem~\ref{no_perpetual_exploration_two_robots})
that two robots with distinct identifiers are not able to perpetually explore in a 
self-stabilizing way connected-over-time rings of size greater than 4.
Then, we show that we can borrow arguments from \cite{BDP17} to prove Theorem~
\ref{no_perpetual_exploration_one_robot} that states that only one robot cannot
complete the self-stabilizing perpetual exporation of connected-over-time rings 
of size greater than 3.


\subsection{Highly Dynamic Rings of Size $4$ or More}

The proof of Theorem~\ref{no_perpetual_exploration_two_robots} makes use of a generic 
framework proposed in \cite{BDKP16}. Note that, even if this generic framework 
is designed for another model (namely, the classical message passing model), it 
is straightforward to borrow it for our current model. Indeed, its proof only 
relies on the determinism of algorithms and indistinguishability of dynamic
graphs, these arguments being directly translatable in our model. We present
briefly this framework here. The interested reader is referred to \cite{BDKP16}
for more details. 

This framework is based on a theorem that ensures that, if we take a sequence of 
evolving graphs with ever-growing common prefixes (that hence converges to the
evolving graph that shares all these common prefixes), then the sequence of 
corresponding executions of any deterministic algorithm also converges. 
Moreover, we are able to describe the execution to which it converges as the 
execution of this algorithm on the evolving graph to which the sequence
converges. This result is useful since it allows us to construct counter-example
in the context of impossibility results. Indeed, it is sufficient to construct
an evolving graphs sequence (with ever-growing common prefixes) and to prove 
that their corresponding execution violates the specification of the problem 
for ever-growing time to exhibit an execution that never satisfies the
specification of the problem.

In order to build the evolving graphs sequence suitable for the proof of our
impossibility result, we need the following technical lemma.

\begin{lemma} \label{lemma_modification_direction_bis}
 Let $\mathcal{A}$ be a self-stabilizing deterministic perpetual exploration
 algorithm in connected-over-time rings of size $4$ or more using 2 robots 
 $r_{1}$ and $r_{2}$ with distinct identifiers. Any execution of $\mathcal{A}$
 satisfies: For any time $t$, for any states $s_{1}$ and $s_{2}$, for any
 distinct identifiers $id_{1}$ and $id_{2}$, it exists $t'$ such that if
 $r_{1}$, of identifier $id_{1}$, is on node $u_{1}$ in state $s_{1}$, and
 $r_{2}$, of identifier $id_{2}$, is on node $u_{2}$ in state $s_{2}$ such that 
 there exists only one adjacent edge to each position of the robots continuously
 present from time $t$ to time $t'$, then $r_{1}$ and/or $r_{2}$ moves at time
 $t'$. This lemma holds even if the robots have the same chirality.
\end{lemma}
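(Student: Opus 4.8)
The plan is to argue by contradiction, playing the connected-over-time assumption against self-stabilization. Suppose the statement fails; negating it provides an execution $\mathcal{E}$ of $\mathcal{A}$ on some connected-over-time ring $\mathcal{G}$ of size $n\geq 4$, a time $t$, states $s_1,s_2$ and distinct identifiers $id_1,id_2$ such that, in $\mathcal{E}$, at time $t$ each robot $r_i$ (of identifier $id_i$) is on a node $u_i$ in state $s_i$ and, for \emph{every} $t'\geq t$, exactly one edge adjacent to each $u_i$ is present throughout $[t,t']$ while neither robot moves at round $t'$. In particular no robot moves at any round $\geq t$; and since only the (at most one) edge present at time $t$ can stay continuously present from $t$ on, for each $i$ there is exactly one edge $e_i$ adjacent to $u_i$ that is present at every time $\geq t$, the other edge $f_i$ being absent at time $t$.

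First I would extract the rigidity forced by the absence of moves. Since $r_i$ does not move at time $t$, it is still on $u_i$ at time $t+1$, and by induction $r_i$ stays on $u_i$ forever; hence the nodes occupied from time $t$ on are contained in $\{u_1,u_2\}$. Moreover, at any round $t'\geq t$ the edge $e_i$ is present, so if the variable $dir$ of $r_i$ pointed towards $e_i$ the robot would cross $e_i$; as $r_i$ does not move it must therefore point towards $f_i$, and then $f_i$ must be absent at round $t'$. Since this holds for every $t'$, each $f_i$ is eventually missing in $\mathcal{G}$ while each $e_i$ is recurrent.

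Then I would conclude by a short case analysis on whether $f_1$ and $f_2$ are the same edge. If $f_1\neq f_2$, the eventual underlying graph $U_\mathcal{G}^\omega$ is a spanning subgraph of the ring with the two distinct edges $f_1$ and $f_2$ deleted, that is, of a graph made of two non-empty disjoint arcs; hence $U_\mathcal{G}^\omega$ is disconnected, contradicting the fact that $\mathcal{G}$ is connected-over-time. Otherwise $f_1=f_2$ (which forces $u_1$ and $u_2$ to coincide or to be adjacent through that unique missing edge); then both robots remain in $\{u_1,u_2\}$ forever, so, as $n\geq 4$, some node of $\mathcal{G}$ is visited only finitely often in $\mathcal{E}$, and no suffix of $\mathcal{E}$ visits it infinitely often. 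This contradicts the self-stabilization of $\mathcal{A}$ for perpetual exploration, since every execution of a self-stabilizing algorithm must contain a configuration from which the specification holds. In both cases we reach a contradiction, which establishes the lemma; chirality is used nowhere, which gives the final sentence.

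I expect the delicate part to be the bookkeeping in this case analysis: one must check that the hypothesis ``exactly one adjacent edge continuously present for \emph{every} $t'$'' is genuinely consistent with what we deduced ($f_i$ absent from time $t$ on and $e_i$ present from time $t$ on), that deleting two distinct edges really disconnects a ring of size $\geq 4$ so the case $f_1\neq f_2$ is a true contradiction with connectivity, and that in the case $f_1=f_2$ the presence of a finitely-visited node contradicts self-stabilization and not just the specification read from a single configuration. Everything else is routine.
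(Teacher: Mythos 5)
Your argument is internally consistent, but it proves a weaker statement than the one the paper needs, and the difference is exactly the step you skipped. You read the negation as handing you a single execution $\mathcal{E}$ in which the robots are \emph{already} in the arbitrary states $s_1,s_2$ at time $t$ and in which the one-edge pattern happens to persist for every $t'$. The lemma, however, is quantified over \emph{all} states $s_1,s_2$ (including states never reached from any ``clean'' start), and, as it is used in the proof of Theorem~\ref{no_perpetual_exploration_two_robots}, it must return a time $t'$ depending only on $(t,s_1,s_2,id_1,id_2)$: that $t'$ is subsequently used to \emph{define} the next dynamic graph $\mathcal{G}_{i+1}$, in which the adversary keeps the pattern alive precisely until $t'$. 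So one cannot assume that an execution realizing the pattern forever, with the robots in states $s_1,s_2$, is given by the negation; one has to \emph{build} it. This is what the paper's proof does, and it is where self-stabilization is genuinely used: starting from the always-present ring, it deletes one well-chosen edge for all times, and then invokes self-stabilization to \emph{initialize} the two robots in the prescribed states $s_1$ and $s_2$ on the two endpoints of the deleted edge (or stacked on one of them), so that the pattern holds from time $0$ to $+\infty$ by construction; the contradiction hypothesis then forces both robots to stay put forever, killing perpetual exploration on a ring of size at least $4$. In your proof, self-stabilization appears only in the weaker form ``every execution has a suffix satisfying the specification''; the crucial use --- arbitrary initial states and positions let us realize any $(s_1,u_1,s_2,u_2)$ as the initial configuration on a graph we design --- is missing.

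That said, the local reasoning you do carry out is sound and would survive inside the paper's construction: a robot that does not move while one adjacent edge is present must be pointing at the other, absent, edge, so the two ``blocking'' edges $f_1,f_2$ are absent at every round after $t$; and your dichotomy ($f_1\neq f_2$ contradicts connectivity-over-time of the ring, $f_1=f_2$ confines both robots to at most two nodes and contradicts exploration since $n\geq 4$) is correct. The paper simply sidesteps the first branch by choosing the robots' positions so that a single deleted edge realizes the pattern, which also keeps the constructed graph connected-over-time.
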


\begin{proof}
 Consider a self-stabilizing algorithm $\mathcal{A}$ that solves 
 deterministically the perpetual exploration problem for connected-over-time 
 rings of size $4$ or more using two robots with distinct identifiers.
 Let $\mathcal{G} = \{G_{0}, G_{1}, \ldots\}$ be a connected-over-time ring
 whose footprint $G$ is a ring of size $4$ or more, and such that
 $\forall i \in \mathbb{N}, G_{i} = G$. Let $r_{1}$ and $r_{2}$ be two robots 
 executing $\mathcal{A}$ on $\mathcal{G}$. 
 
 By contradiction, assume that there exists a time $t$, two states $s_{1}$ and
 $s_{2}$, and two distinct identifiers $id_{1}$ and $id_{2}$ such that at time
 $t$, $r_{1}$ of identifier $id_{1}$ is in state $s_{1}$ on a node $u_{1}$ of
 $\mathcal{G}$, $r_{2}$ of identifier $id_{2}$ is in state $s_{2}$ on a node 
 $u_{2}$ (it is possible to have $u_{2} = u_{1}$) of $\mathcal{G}$, such that,
 for any time $t' \geq t$, if there exists only one adjacent edge to each 
 position of the robots continuously present from time $t$ to time $t'$, then
 none of the robots move. 
 
 Consider the graph $\mathcal{G}'$ such that 
 $G' = G$ and such that 
 $\mathcal{G}' = \mathcal{G} \backslash \{(e, \{0, \ldots, +\infty\})\}$, where $e$
 is the edge linking $u_{1}$ and its adjacent node in the clockwise direction.
 Note that $\mathcal{G}'$ is a connected-over-time ring, since it only possesses
 one eventual missing edge.
  
 In the case where at time $t$ in $\mathcal{G}$, $r_{1}$ and $r_{2}$ are on the
 same node, since $\mathcal{A}$ is a self-stabilizing algorithm, we can 
 initially place $r_{1}$ and $r_{2}$ on node $u_{1}$ of $\mathcal{G}'$ in state
 $s_{1}$ and $s_{2}$ respectively.
 
 In the case where at time $t$ in $\mathcal{G}$, $r_{1}$ and $r_{2}$ are not on
 the same node, since $\mathcal{A}$ is a self-stabilizing algorithm, we can 
 initially place $r_{1}$ on node $u_{1}$ of $\mathcal{G}'$ in state $s_{1}$ and
 $r_{2}$ on the adjacent node of node $u_{1}$ in the clockwise direction in
 state $s_{2}$.

 In these two cases, by construction, there is only one adjacent edge to each 
 position of the robots continuously present from time $0$ to $+\infty$, and 
 $r_{1}$ and $r_{2}$ are respectively in state $s_{1}$ and $s_{2}$. Then, by 
 assumption, $r_{1}$ and $r_{2}$ does not leave their respective nodes after
 time $0$. As $\mathcal{G}'$ counts 4 nodes or more, we obtain a
 contradiction with the fact that $\mathcal{A}$ is a self-stabilizing algorithm
 solving deterministically the perpetual exploration problem for 
 connected-over-time rings of size $4$ or more using two robots.
\end{proof}

\begin{theorem} \label{no_perpetual_exploration_two_robots}
 There exists no deterministic algorithm satisfying the perpetual exploration
 specification in a self-stabilizing way on the class of connected-over-time 
 rings of size $4$ or more with two fully synchronous robots possessing distinct
 identifiers.
\end{theorem}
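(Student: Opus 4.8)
The plan is to assume for contradiction that such an algorithm $\mathcal{A}$ exists and to exhibit one connected-over-time ring $\mathcal{G}$ of size $n\geq 4$, together with an initial configuration $\gamma_0$ (with both robots placed on, say, $v_1$ in arbitrary states, which is legitimate since $\mathcal{A}$ must be self-stabilizing), such that the execution of $\mathcal{A}$ on $\mathcal{G}$ from $\gamma_0$ never visits a fixed node $v_0$. Since every suffix of an execution that never visits $v_0$ also never visits $v_0$, no configuration of this execution can be a ``recovered'' one, so $\mathcal{A}$ is not self-stabilizing, contradicting the hypothesis. I would build $\mathcal{G}$ round by round together with the induced execution; the convergence framework of \cite{BDKP16} is exactly what legitimizes this staged construction, each stage committing finitely many more rounds of $\mathcal{G}$ while preserving the invariant ``$v_0$ still unvisited and both robots inside the path $v_1,v_2,\ldots,v_{n-1}$''. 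Under this invariant, the only edges that need care are the two edges incident to $v_0$, namely $(v_0,v_1)$ and $(v_{n-1},v_0)$: all other edges never lead to $v_0$, so presenting them is harmless whatever the robots do.

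The adversary strategy alternates two kinds of phases. A \emph{recurrence phase} simply presents each path edge $(v_1,v_2),\ldots,(v_{n-2},v_{n-1})$ once, one per round; none is incident to $v_0$, so the robots stay in the path. A \emph{clearing phase} aims to present one of the two $v_0$-incident edges without letting a robot cross it. Because there are only two robots, whenever $v_1$ (resp.\ $v_{n-1}$) is empty, the adversary presents $(v_0,v_1)$ (resp.\ $(v_{n-1},v_0)$) alone for one round: no robot can step onto that endpoint that round and, by the invariant, $v_0$ itself is empty, so no visit to $v_0$ occurs and the phase ends. The remaining case is when one robot sits on $v_1$ and the other on $v_{n-1}$; there the adversary presents exactly the two edges $(v_1,v_2)$ and $(v_{n-2},v_{n-1})$ and nothing else, so that the current node of each robot has exactly one adjacent present edge, continuously. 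By Lemma~\ref{lemma_modification_direction_bis} applied at that time with the current states, within a bounded number of rounds a robot moves; the only move available to the robot on $v_1$ (resp.\ $v_{n-1}$) goes to $v_2$ (resp.\ $v_{n-2}$), and since $n\geq 4$ ensures $v_1\neq v_{n-1}$ and that $v_2,v_{n-2}$ are interior path nodes, this empties $v_1$ or $v_{n-1}$, whereupon the phase completes as above. Iterating the two kinds of phases forever, every path edge is present infinitely often and at least one of $(v_0,v_1),(v_{n-1},v_0)$ is presented in infinitely many clearing phases; hence the eventual underlying graph contains a spanning tree of the ring, is connected, and $\mathcal{G}$ is a connected-over-time ring — while $v_0$ is never visited, the desired contradiction.

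The step I expect to be the genuine obstacle, and the only place Lemma~\ref{lemma_modification_direction_bis} is really used, is showing that the clearing phase always terminates, i.e.\ that the adversary cannot get permanently stuck with the two robots straddling the extremities $v_1$ and $v_{n-1}$. Without the lemma an adversarial algorithm could ``camp'' one robot on each of $v_1$ and $v_{n-1}$, both forever pointing at the absent $v_0$-incident edge and never moving; this would force both $(v_0,v_1)$ and $(v_{n-1},v_0)$ to be eventually missing, disconnect the eventual underlying graph, and break connected-over-time. The lemma is precisely what excludes such a standstill. A secondary point needing care is bookkeeping: the length of a clearing phase is a function of the robots' states (through the $t'$ of the lemma), and those states are a priori unbounded, so the construction should be phrased as a limit of ever-growing finite prefixes à la \cite{BDKP16} rather than in closed form; and one must check that whenever the lemma is invoked its ``exactly one adjacent present edge per robot node, continuously present'' hypothesis indeed holds (which is why the straddling maneuver presents exactly two edges and nothing else), while recurrence phases never rely on it. Finally, one should note that the argument genuinely breaks for $n=3$ — there $v_1$ and $v_{n-1}=v_2$ are adjacent and the two robots could simply swap forever without emptying either — which is consistent with two robots sufficing on the ring of size $3$.
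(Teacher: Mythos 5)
Your proposal is correct and follows essentially the same strategy as the paper's proof: assume $\mathcal{A}$ exists, build a convergent sequence of evolving graphs via the framework of \cite{BDKP16}, and invoke Lemma~\ref{lemma_modification_direction_bis} exactly where the paper does --- to rule out the standoff in which each robot camps at an extremity with a single continuously present adjacent edge --- so that the adversary can keep the graph connected-over-time while forever hiding one node. The only difference is cosmetic: you confine the robots to the complement of a single node $v_0$ (with explicit recurrence phases to keep the path edges recurrent), whereas the paper confines them to a fixed three-node window $\{v,w,x\}$, which entails a longer case analysis over the robots' placements but makes recurrence of the interior edges automatic.
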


\begin{proof}
 By contradiction, assume that there exists a deterministic algorithm
 $\mathcal{A}$ satisfying the perpetual exploration specification in a
 self-stabilizing way on any connected-over-time rings of size $4$ or 
 more using two robots $r_{1}$ and $r_{2}$ possessing distinct identifiers.

 Consider the connected-over-time graph $\mathcal{G}= \{G_{0}, G_{1}, \ldots\}$
 whose footprint $G$ is a ring of size strictly greater than 3 and such that 
 $\forall i \in \mathbb{N}, G_{i} = G$.
 
 Consider four nodes $u$, $v$, $w$ and $x$ of $\mathcal{G}$, such that node $v$
 is the adjacent node of $u$ in the clockwise direction, $w$ is the adjacent
 node of $v$ in the clockwise direction, and $x$ is the adjacent node of $w$ in
 the clockwise direction. We denote respectively $e_{ur}$ and $e_{ul}$ the
 clockwise and counter clockwise adjacent edges of $u$, $e_{vr}$ and $e_{vl}$ 
 the clockwise and counter clockwise adjacent edges of $v$, $e_{wr}$ and 
 $e_{wl}$ the clockwise and counter clockwise adjacent edges of $w$, and 
 $e_{xr}$ and $e_{xl}$ the clockwise and counter clockwise adjacent edges of
 $x$. Note that $e_{ur} = e_{vl}$, $e_{vr} = e_{wl}$, and $e_{wr} = e_{xl}$.
 
 Let $\varepsilon$ be the execution of $\mathcal{A}$ on $\mathcal{G}$ starting
 from the configuration where $r_{1}$ (resp. $r_{2}$) is located on node $v$ 
 (resp. $w$).


 \begin{figure} \vspace{-2cm}

 		 \centerline{\includegraphics[scale=0.88]{./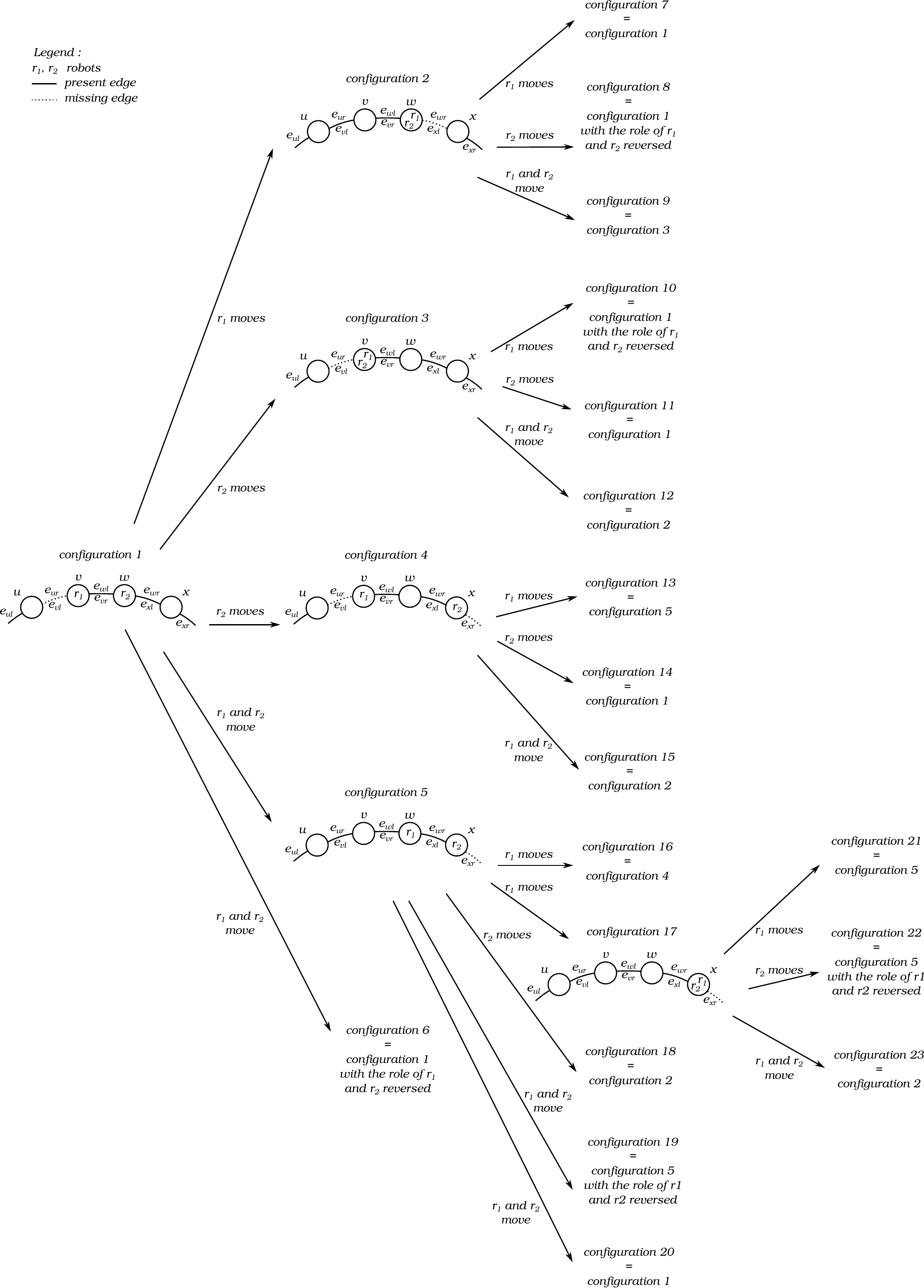}}

 	\caption{Construction of $\mathcal{G}'$ in proof of Lemma 
            \ref{lemma_modification_direction_bis}.} \label{impossibility_bis}
 \end{figure}

 
 We construct a sequence of connected-over-time graphs 
 ($\mathcal{G}_{m}$)$_{m \in \mathbb{N}}$ such that
 $\mathcal{G}_{0} = \mathcal{G}$ and for any $i \geq 0$, if $\mathcal{G}_{i}$ 
 exists, it is a connected-over-time ring, and at time $t_{i}$ only nodes among 
 $\{v, w, x\}$ have been visited, then we can define $\mathcal{G}_{i + 1}$ as
 shown on Figure~\ref{impossibility_bis}, and as explained in the following
 paragraph (denote by $\varepsilon_i$ the execution of $\mathcal{A}$ on
 $\mathcal{G}_{i}$ starting from the same configuration as $\varepsilon$).
 
 Since $\mathcal{G}_{i}$ is a connected-over-time ring, and since $\mathcal{A}$
 is a deterministic algorithm solving the perpetual exploration algorithm in a 
 self-stabilizing way on connected-over-time rings of size $4$ or more using 2
 robots possessing distinct identifiers, when the configuration $\gamma_i$ at 
 time $t_{i}$ is such that the two robots possess exactly one adjacent edge
 present, we use Lemma~\ref{lemma_modification_direction_bis} to exhibit time
 $t_{i}' \geq t_{i}$ such that if the configuration $\gamma_i$ last from time 
 $t_{i}$ to time $t_{i}'$, then one or both of the robots move. Similarly, since 
 $\mathcal{G}_{i}$ is a connected-over-time ring, and since $\mathcal{A}$
 is a deterministic algorithm solving the perpetual exploration algorithm in a 
 self-stabilizing way on connected-over-time rings of size $4$ or more using 2
 robots possessing distinct identifiers, when the configuration $\gamma_i$ at 
 time $t_{i}$ is such that there is only one missing edge, and that only one
 robot is adjacent to this missing edge, then we can also exhibit a time
 $t_{i}'$ at which at least one of the robots move. Indeed, if this
 configuration lasts from time $t_{i}$ to time $+\infty$, $\mathcal{G}_{i}$ is a 
 connected-over-time ring, and if none of the robot move in this configuration, 
 the exploration cannot be solved. Therefore such a time $t_{i}'$ exists. In 
 the following we show how we construct the dynamic graph $\mathcal{G}_{i + 1}$,
 in function of $t_{i}'$ and $\mathcal{G}_{i}$. If the two robots are on two 
 distinct nodes in $\mathcal{G}_{i}$ then:
 \begin{itemize}
  \item if one of the robot is on node $v$ and the other robot is on node $w$ 
  then we define $\mathcal{G}_{i + 1}$ such that $\mathcal{G}_{i + 1}$ and 
  $\mathcal{G}_{i}$ have the same footprint, and $\mathcal{G}_{i + 1} = 
  \mathcal{G}_{i} \backslash \{(e_{vl}, \{t_{i}, \ldots, t_{i}'\})\}$.
  \item if one of the robot is on node $x$ and the other robot is on node $w$
  then we define $\mathcal{G}_{i + 1}$ such that $\mathcal{G}_{i + 1}$ and 
  $\mathcal{G}_{i}$ have the same footprint, and $\mathcal{G}_{i + 1} = 
  \mathcal{G}_{i} \backslash \{(e_{xr}, \{t_{i}, \ldots, t_{i}'\})\}$.
  \item if one of the robot is on node $v$ and the other robot is on node $x$
  then we define $\mathcal{G}_{i + 1}$ such that $\mathcal{G}_{i + 1}$ and 
  $\mathcal{G}_{i}$ have the same footprint, and $\mathcal{G}_{i + 1} = 
  \mathcal{G}_{i} \backslash \{(e_{vl}, \{t_{i}, \ldots, t_{i}'\}), 
  (e_{xr}, \{t_{i}, \ldots, t_{i}'\})\}$.
 \end{itemize}

 If the two robots are on the same node in $\mathcal{G}_{i}$ then:
 \begin{itemize}
  \item if the two robots are on node $v$, then we define $\mathcal{G}_{i + 1}$ 
  such that $\mathcal{G}_{i + 1}$ and $\mathcal{G}_{i}$ have the same footprint,
  and $\mathcal{G}_{i + 1} = 
  \mathcal{G}_{i} \backslash \{(e_{vl}, \{t_{i}, \ldots, t_{i}'\})\}$.
  \item if the two robots are on node $w$ then we define $\mathcal{G}_{i + 1}$ 
  such that $\mathcal{G}_{i + 1}$ and $\mathcal{G}_{i}$ have the same footprint,
  and $\mathcal{G}_{i + 1} = 
  \mathcal{G}_{i} \backslash \{(e_{wr}, \{t_{i}, \ldots, t_{i}'\})\}$.
  \item if the two robots are on node $x$, then we define $\mathcal{G}_{i + 1}$ 
  such that $\mathcal{G}_{i + 1}$ and $\mathcal{G}_{i}$ have the same footprint,
  and $\mathcal{G}_{i + 1} =  
  \mathcal{G}_{i} \backslash \{(e_{xr}, \{t_{i}, \ldots, t_{i}'\})\}$.
 \end{itemize}

 Note that $\mathcal{G}_{i}$ and $\mathcal{G}_{i + 1}$ are indistinguishable for
 robots before time $t_{i}$. This implies that, at time $t_{i}$, $r_{1}$ and 
 $r_{2}$ are on the same node in $\varepsilon_{i}$ and in $\varepsilon_{i + 1}$.
 By construction of $t_{i}'$, either $r_{1}$ or $r_{2}$ or both of the two
 robots move at time $t_{i}'$ in $\varepsilon_{i + 1}$.  $\mathcal{G}_{i + 1}$ 
 is a connected-over-time ring (since it is indistinguishable from $\mathcal{G}$
 after $t_{i}' + 1$). Moreover, even if one or both of the robots move during 
 the Move phase of time $t_{i}'$, at time $t_{i}' + 1$ the robots are still on
 nodes among $\{v, w, x\}$, by assumption on $\mathcal{G}_{i}$ and since from 
 time $t_{i}$ to time $t_{i}'$ the edges permitting to go on a node other than
 the nodes among $\{v, w, x\}$ are missing. 
 
 Let $t_{i + 1} = t_{i}' + 1$. Then we can construct recursively each dynamic
 ring of ($\mathcal{G}_{m}$)$_{m \in \mathbb{N}}$ by applying the argues above 
 on all the possible configurations reached by the movements of the robots at 
 time $t_{i}'$ on $\mathcal{G}_{i + 1}$. 
 
 Note that the recurrence can be initiated, since $\mathcal{G}_{0}$
 exists, is a connected-over-time ring and that at time $t_{0} = 0$ only nodes 
 among $\{v, w, x\}$ have been visited. In other words, 
 ($\mathcal{G}_{m}$)$_{m \in \mathbb{N}}$ is well-defined.
 
 We can then define the evolving graph $\mathcal{G}_{\omega}$ such that
 $\mathcal{G}_{\omega}$ and $\mathcal{G}_{0}$ have the same footprint, and such
 that for all $i \in \mathbb{N}$, $\mathcal{G}_{\omega}$ shares a common prefix
 with $\mathcal{G}_{i}$ until time $t_{i}'$.
 
 Note that among the configurations presented in Figure~\ref{impossibility_bis},
 only Configuration 4 contains 2 missing edges. However, if this configuration
 is reached in $\mathcal{G}_{\omega}$ the following configuration reached in
 $\mathcal{G}_{\omega}$ is either Configuration 13 or Configuration 14 or 
 Configuration 15. Since these three configurations possess only one missing
 edge, this implies that $\mathcal{G}_{\omega}$ is a connected-over-time ring.
 
 As the sequence $(t_{m}$)$_{m \in \mathbb{N}}$ is increasing by construction,
 this implies that the sequence ($\mathcal{G}_{m}$)$_{m \in \mathbb{N}}$
 converges to $\mathcal{G}_{\omega}$.

 Applying the theorem of \cite{BDKP16}, we obtain that, until time $t_{i}'$, the 
 execution of $\mathcal{A}$ on $\mathcal{G}_{\omega}$ is identical to the one on 
 $\mathcal{G}_{i}$. This implies that, executing $\mathcal{A}$ on 
 $\mathcal{G}_{\omega}$ (whose footprint is a ring of size strictly greater than
 3), $r_{1}$ and $r_{2}$ only visit the nodes among $\{v, w, x\}$. This is 
 contradictory with the fact that $\mathcal{A}$ satisfies the perpetual
 exploration specification on connected over time rings of size strictly greater
 than 3 using 2 robots. 
\end{proof}

\subsection{Highly Dynamic Rings of Size $3$ or More}



In \cite{BDP17}, the authors prove (in Theorem V.1) that a single anonymous and
synchronous robot cannot perpetually explore connected-over-time rings of size 
$3$ or more in a fault-free setting. We can do two observations. First, any 
fault-free synchronous execution is possible in a self-stabilizing setting.
Second, in the case of a \emph{single} robot, the anonymous and the identified model
are equivalent.

These observations are sufficient to directly state the following result:

\begin{theorem} \label{no_perpetual_exploration_one_robot}
 There exists no deterministic algorithm satisfying the perpetual exploration
 specification in a self-stabilizing way on the class of connected-over-time 
 rings of size $3$ or more using one robot possessing an identifier.
\end{theorem}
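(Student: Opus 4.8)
The plan is to argue by contradiction and reduce to the impossibility result of \cite{BDP17}, using precisely the two observations already stated just before the theorem. Suppose some deterministic algorithm $\mathcal{A}$ solves perpetual exploration in a self-stabilizing way on the class of connected-over-time rings of size $3$ or more with a single identified robot.

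First I would fix an arbitrary identifier $id$ and consider the behaviour $\mathcal{A}_{id}$ obtained from $\mathcal{A}$ when the unique robot carries $id$. Since there is only one robot, the strong multiplicity detector always returns $1$, and no comparison of identifiers with another robot is ever possible; hence during every Compute phase the output of $\mathcal{A}_{id}$ depends only on the robot's current state and on the values of the edge predicates, exactly as for an algorithm designed for a single anonymous robot. Thus $\mathcal{A}_{id}$ is a legitimate deterministic algorithm in the anonymous model of \cite{BDP17}. This is the formalization of the second observation (equivalence of the anonymous and identified models for a single robot), and it amounts to a bookkeeping check on what information the Look phase can expose when the robot runs alone.

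Next I would invoke the definition of self-stabilization: since $\mathcal{A}$, and therefore $\mathcal{A}_{id}$, tolerates any initial configuration, it works in particular from the clean initial configurations used in the fault-free model of \cite{BDP17} (this is the first observation). Self-stabilization only guarantees a correct \emph{suffix} of the execution, but for perpetual exploration ``correct suffix'' means that every node is visited infinitely often from some point on, which trivially implies that every node is visited infinitely often from the start. Hence the fault-free execution of $\mathcal{A}_{id}$ on any connected-over-time ring of size $3$ or more satisfies the perpetual exploration specification. This contradicts Theorem~V.1 of \cite{BDP17}, which asserts that a single anonymous synchronous robot cannot perpetually explore connected-over-time rings of size $3$ or more even in a fault-free setting, and the proof is complete.

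I do not anticipate any real obstacle. The only step requiring a little care is the reduction in the second paragraph: one must be sure that the identifier genuinely carries no exploitable information when a single robot acts in isolation, so that $\mathcal{A}_{id}$ is admissible as an anonymous algorithm in the model of \cite{BDP17}; everything else is a direct unwinding of the definitions of self-stabilization and of perpetual exploration.
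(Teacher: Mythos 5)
Your proof is correct and follows essentially the same route as the paper: the paper's argument consists precisely of the two observations you formalize (a single identified robot is equivalent to a single anonymous one, and a self-stabilizing algorithm must in particular succeed from clean fault-free initial configurations), followed by a direct appeal to Theorem~V.1 of \cite{BDP17}. Your elaboration of why the identifier carries no exploitable information and why a correct suffix suffices for perpetual exploration is just a more explicit unwinding of what the paper states in two sentences.
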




\section{Sufficiency of Three Robots for $n\geq 4$}\label{sec:algo3robots}

   In this section, we present our self-stabilizing deterministic algorithm for 
the perpetual exploration of any connected-over-time ring of size greater than 4 with three robots. 
In this context, the difficulty to complete the exploration is twofold. First, 
in connected-over-time graphs, robots must deal with the possible existence
of some eventual missing edge (without the guarantee that such edge always exists).
Note that, in the case of a ring, there is at most one eventual missing edge
in any execution (otherwise, we have a contradiction with the connected-over-time
property). Second, robots have to handle the arbitrary initialization of the system
(corruption of variables and arbitrary position of robots).

\subsection{Presentation of the algorithm}

\paragraph{Principle of the algorithm.} The main idea behind our algorithm is
that a robot does not change its direction (arbitrarily initialized) while it is 
isolated. This allows robots to perpetually explore connected-over-time rings with no
eventual missing edge regardless of the initial direction of the robots. 

Obviously, this idea is no longer sufficient when there 
exists an eventual missing edge since, in this case, at least two robots will
eventually be stuck (\ie they point to an eventual missing edge that they are 
never able to cross) forever at one end of the eventual missing edge. When
two (or more) robots are located at the same node, we say that they form a tower.
In this case, our algorithm succeed (as we explain below) to ensure that at least 
one robot leaves the tower in a finite time. In this way, we obtain that, in a finite time,
a robot is stuck at each end of the eventual missing edge. These two robots 
located at two ends of the eventual missing edge play the role of ``sentinels'' 
while the third one (we call it a ``visitor'') visits other nodes of the ring in the following way. The 
``visitor'' keeps its direction until it meets one of these ``sentinels'',
they then switch their roles: After the meeting, the ``visitor'' still 
maintains the same direction (becoming thus a ``sentinel'') while the ``sentinel''
robot changes its direction (becoming thus a ``visitor'' until reaching 
the other ``sentinel'').

In fact, robots are never aware if they are actually stuck at an eventual missing 
edge or are just temporarily stuck on an edge that will reappear in a 
finite time. That is why it is important that the robots keep consider their directions 
and try to move forward while there is no meeting in order to track a possible 
eventual missing edge. Our algorithm only guarantees a convergence in a finite 
time towards a configuration where a robot plays the role of ``sentinel'' at each 
end of the eventual missing edge if such an edge exists. Note that, in 
the case where there is no eventual missing edge, this mechanism does not 
prevent the correct exploration of the ring since it is impossible for a robot 
to be stuck forever.

Our algorithm easily deals with the initial corruption of its variables. 
Indeed, all variables of a robot (at the exception of a counter and the variable
$dir$ whose initial respective values have no particular impact) store
information about the environment of this robot in the previous round it was 
edge-activated. These variables are updated each time a robot is edge-activated.
Since we consider connected-over-time rings, it can only exist one eventual
missing edge, therefore all robots are infinitely often edge-activated. The 
initial values of these variables are hence reset in a finite time.
The main difficulty to
achieve self-stabilization is to deal with the arbitrary initial position of robots.
In particular, the robots may initially form towers. In the worst case, all robots of a 
tower may be stuck at an eventual missing edge and be in the same state. They are then
unable to start the ``sentinels''/``visitor'' scheme explained above. Our algorithm 
needs to ``break'' such a tower in a finite time 
(\ie one robot must leave the node where the tower is located).
In other words, we tackle a classical problem of symmetry breaking. We succeed
by providing each robot with a function that returns, in a finite number of invocations,
different global directions to two robots of the tower based on the private 
identifier of the robot and without any communication among 
the robots. More precisely, this is done thanks to a transformation of the 
robot identifier: each bit of the binary representation
of the identifier is duplicated and we add the bits ``010'' at the end of the 
sequence of these duplicated bits. Then, at each invocation of the function,
a robot reads the next bit of this transformed identifier. If the robot reads
zero, it try to move to its left. Otherwise, it try to move to its right.
Doing so, in a finite number of invocation of this function, at least one robot 
leaves the tower.  
If necessary, we repeat this ``tower breaking'' scheme until we are able to start the 
``sentinels''/``visitor'' scheme.

The main difficulty in designing this algorithm is to ensure 
that these two mechanisms (``sentinels''/``visitor'' and ``tower breaking'') 
do not interfere with each other and prevent the correct exploration. We solve this problem
by adding some waiting at good time, especially before starting the procedure of tower breaking
by identifier to ensure that robots do not 
prematurely turn back and ``forget'' to explore some parts of the ring.

\paragraph{Formal presentation of the algorithm.} Before presenting formally our 
algorithm, we need to introduce the set of constants (\ie variables assumed to be not
corruptible) and the set of variables of each robot. We also introduce
three auxiliary functions.

As stated in the model, each robot has an unique identifier. We denote it by 
$id$ and represent
it in binary as $b_{1}b_{2}\ldots b_{|id|}$. We define, for the purpose 
of the ``breaking tower'' scheme, the constant $TransformedIdentifier$ by its binary
representation $b_{1}b_{1}b_{2}b_{2} \ldots b_{|id|}b_{|id|} 010$ 
(each bit of $id$ is duplicated and we add the three bits $010$ at the end). 
We store the
length of the binary representation of $TransformedIdentifier$ in the constant $\ell$ 
and we denote its $i$th bit by $TransformedIdentifier[i]$ for any $1\leq i\leq \ell$.

In addition to the variable $dir$ defined in the model, each robot has the
following three variables: 
$(i)$ the variable $i\in\mathbb{N}$ corresponds to an index to store 
the position of the last bit read from $TransformedIdentifier$;
$(ii)$ the variable $Number\-Ro\-bots\-Pre\-vious\-Edge\-Acti\-va\-tion\in\mathbb{N}$ stores the 
number of robots that were present at the node of the robot during the Look phase
of the last round where it was edge-activated; and
$(iii)$ the variable $Has\-Moved\-Pre\-vious\-Edge\-Acti\-va\-tion\in\{true,false\}$ indicates 
if the robot has crossed an edge during its last edge-activation. 

Our algorithm makes use of a function \textsc{Update} that updates the values of the 
two last variables according to the current environment of the robot each time it is 
edge-activated. We provide the pseudo-code of this function in Algorithm 
\ref{algo:UpdateFunction}. Note that this function also allows us to deal with 
the initial corruption of the two last variables since it resets them in the 
first round where the robot is edge-activated.

We already stated that, whenever robots are stuck forming a tower, they make use of a 
function to ``break'' the tower in a finite time. The pseudo-code of this function 
\textsc{GiveDirection} appears in Algorithm \ref{algo:GiveDirection}. It
assigns the value $left$ or $right$ to the variable $dir$ of the robot depending on the 
$i$th bit of the value of $TransformedIdentifier$. The variable $i$ is incremented 
modulo $\ell$ (that implicitly resets this variable when it is corrupted)
to ensure that successive calls to \textsc{GiveDirection} will consider each bit of
$TransformedIdentifier$ in a round-robin way. As shown in the next section, 
this function guarantees that, if two robots 
are stuck together in a tower and invoke repeatedly their own function 
\textsc{GiveDirection}, then two distinct global directions are given in finite time 
to the two robots regardless of their chirality. This property allows the algorithm
to ``break'' the tower since at least one robot is then able to leave the node where the
tower is located.

\begin{algorithm}
\caption{Function Update} \label{algo:UpdateFunction}
\footnotesize
    \begin{algorithmic} [1]
    \Function{Update}{}
        \If {$ExistsAdjacentEdge()$}
            \State $NumberRobotsPreviousEdgeActivation\leftarrow NumberOfRobotsOnNode()$
            \State $HasMovedPreviousEdgeActivation\leftarrow ExistsEdgeOnCurrentDirection()$
        \EndIf
    \EndFunction
    \end{algorithmic}
\end{algorithm}

Finally, we define the function \textsc{OppositeDirection} that simply affects the value
$left$ (resp. $right$) to the variable $dir$ when $dir=right$ (resp. $dir=left$).

There are two types of configurations in which the robots may change the direction 
they consider. So, our algorithm needs to identify them. We do so by defining a 
predicate that characterizes each of these configurations.

The first one, called $We\-Are\-Stuck\-In\-The\-Same\-Dire\-ction()$, is dedicated to the detection 
of configurations in which the robot must invoke the ``tower breaking'' mechanism. 
Namely, the robot is stuck since at least one edge-activation with at least another
robot and the edge in the direction opposite to the one considered by the robot is
present. More formally, this predicate is defined as follows:

\vspace{0.2cm}
\noindent\begin{tabular}{l@{}l@{}l}
$We$ & $Ar$ & $eStuckInTheSameDirection()\equiv$\\
& & $(NumberOfRobotsOnNode() > 1)$\\
& $\wedge$ & $(NumberOfRobotsOnNode()=NumberRobotsPreviousEdgeActivation)$\\
& $\wedge$ & $\lnot ExistsEdgeOnCurrentDirection()$\\
& $\wedge$ & $ExistsEdgeOnOppositeDirection()$\\
& $\wedge$ & $\lnot HasMovedPreviousEdgeActivation$
\end{tabular}
\vspace{0.1cm}

The second predicate, called $I\-Was\-Stuck\-On\-My\-No\-de\-And\-Now\-We\-Are\-More\-Ro\-bots()$, is 
designed to detect configurations in which the robot must transition from 
the ``sentinel'' to the ``visitor'' role in the ``sentinel''/``visitor'' scheme. 
More precisely, such configuration is characterized by the fact that the robot is 
edge-activated, stuck during its previous edge-activation, and there are strictly
more robots located at its node than at its previous edge-activation. More formally, 
this predicate is defined as follows:

\vspace{0.2cm}
\noindent\begin{tabular}{l@{}l@{}l}
$IW$ & $as$ & $StuckOnMyNodeAndNowWeAreMoreRobots()\equiv$\\
& & $(NumberOfRobotsOnNode()>NumberRobotsPreviousEdgeActivation)$\\
& $\wedge$ & $\lnot HasMovedPreviousEdgeActivation$ \\
& $\wedge$ & $ExistsAdjacentEdge()$
\end{tabular}
\vspace{0.1cm}


\begin{algorithm}
\caption{Function GiveDirection} \label{algo:GiveDirection}
\footnotesize
	\begin{algorithmic} [1]
	\Function{GiveDirection}{}
		\State $i$ $\leftarrow$ $i + 1 \pmod{\ell} + 1$
		\If {$TransformedIdentifier[i] = 0$}
			\State $dir\leftarrow left$
		\Else
			\State $dir\leftarrow right$
		\EndIf
	\EndFunction
	\end{algorithmic}
\end{algorithm}

Now, we are ready to present the pseudo-code of the core of our algorithm 
(see Algorithm \ref{algo:SSPE}). The basic idea of the algorithm is the following. 
The function \textsc{GiveDirection} is invoked when $We\-Are\-Stuck\-In\-The\-Same\-Dire\-ction()$ is true 
(to try to ``break'' the tower after the appropriate waiting),
while the function \textsc{OppositeDirection}
is called when $I\-Was\-Stuck\-On\-My\-Node\-And\-Now\-We\-Are\-More\-Robots()$ 
is true (to implement the ``sentinel''/``visitor'' scheme).  Afterwards, the function 
\textsc{Update} is called (to update the state of the robot according to 
its environment). 

\begin{algorithm}
\caption{\PEF} \label{algo:SSPE}
\footnotesize
    \begin{algorithmic} [1]
            \If {$WeAreStuckInTheSameDirection()$} \label{stuckCondition}
                \State \Call{GiveDirection}{} \label{givedir}
            \EndIf
            \If {$IWasStuckOnMyNodeAndNowWeAreMoreRobots()$} \label{oppositeDirection}
                \State \Call{OppositeDirection}{} \label{changeDirection}
            \EndIf
        \State \Call{Update}{} \label{updateFunction}
    \end{algorithmic}
\end{algorithm}

   \subsection{Preliminaries to the Correctness Proof} 

First, we introduce some definitions and 
preliminary results that are extensively used in the proof.

We saw previously that the notion of tower is central in our algorithm. Intuitively, 
a tower captures the simultaneous presence of all robots of a given set 
on a node at each time of a given interval. We require either the set of robots or
the time interval of each tower to be maximal. Note that the tower is not required 
to be on the same node at each time of the interval (robots of the tower may move
together without leaving the tower). 

We distinguish two kinds of towers according to the agreement of their robots on the global
direction to consider at each time there exists an adjacent edge to their current location
(excluded the last one). If they agreed, the robots form a long-lived tower while they form a
short-lived tower in the contrary case. This implies that a short-lived tower
is broken as soon as the robots forming the tower are edge-activated, while the 
robots of a long-lived tower move together at each edge activation of the tower
(excluded the last one).

\begin{definition}[Tower]
 A tower $T$ is a couple $(S, \theta)$, where $S$ is a set of robots 
 ($|S| > 1$) and $\theta=[t_{s}, t_{e}]$ is an interval of $\mathbb{N}$,
 such that all the robots of $S$ are located at a same node at each instant of time
 $t$ in $\theta$ and $S$ or $\theta$ are maximal for this property.
 Moreover, if the robots of $S$ move during a round 
 $t \in [t_{s}, t_{e}[$, they are required to
 traverse the same edge.
\end{definition}

\begin{definition}[Long-lived tower] \label{LongLivedTower}
 A long-lived tower $T = (S, [t_{s}, t_{e}])$ is a tower such that there is at least one 
 edge-activation of all robots of $S$ in the time interval $[t_{s}, t_{e}[$.
\end{definition}

\begin{definition} [Short-lived tower]
 A short-lived tower $T$ is a tower that is not a long-lived tower.
\end{definition}

For $k > 1$, a long-lived (resp., a short-lived) tower $T=(S, \theta)$ 
with $|S|=k$ is called a $k$-long-lived (resp., a $k$-short-lived) tower.

As there are only three robots on our system, and that in each round each of them consider a 
global direction, we can make the following observation.

\begin{observation}\label{2robotsInTheSameDirection}
There are at least two robots having the same global direction at each instant time.
\end{observation}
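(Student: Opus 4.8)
The plan is to treat this as an immediate pigeonhole argument; there is essentially no machinery involved. First I would recall the relevant facts from the model: at any instant $t$, the variable $dir$ of each robot holds exactly one of the two local values $left$ or $right$, and since each robot has a \emph{stable chirality}, an external observer can translate this local value unambiguously into exactly one of the two global directions (clockwise or counter-clockwise). Consequently, at every time $t$ there is a well-defined function from the set of the three robots of the system to the two-element set of global directions.

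Then I would simply invoke the pigeonhole principle: the domain of this function has cardinality $3$ and its codomain has cardinality $2$, so at least $\lceil 3/2\rceil = 2$ robots are mapped to the same global direction. Since $t$ was arbitrary, the statement holds at every instant of time, which is exactly the claim.

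The only point that deserves a sentence of care — and it is a matter of quoting the model rather than a genuine obstacle — is that the "global direction considered by a robot" must be defined for every robot at every time, including in an arbitrary (possibly transiently corrupted) configuration. This is guaranteed because $dir$ always stores one of the two admissible local labels and the chirality is a non-corruptible constant, so no configuration (even an arbitrary initial one) can leave a robot without a well-defined global direction. Hence no self-stabilization argument is needed here: the observation is purely combinatorial and holds configuration by configuration.
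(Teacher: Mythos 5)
Your proof is correct and matches the paper's reasoning: the observation is justified there by exactly the same pigeonhole argument (three robots, two possible global directions), stated without further elaboration. Your extra remark that $dir$ and the stable chirality make the global direction well-defined in any configuration is a harmless and accurate clarification.
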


In the remainder of this section, we consider an execution $\mathcal{E}$ of 
Algorithm~\ref{algo:SSPE} executed by three robots $r_{1}$, 
$r_{2}$, and $r_{3}$ on a connected-over-time ring $\mathcal{G}$ of size
 $n\in\mathbb{N}$, with $n \geq 4$, starting from an arbitrary configuration.

For the sake of clarity, the value of a variable or a predicate $name$ of
a given robot $r$ after the Look 
phase of a given round $t$ is denoted by the notation $name(r, t)$.

We say that a robot $r$ has a coherent state at time $t$, if the value of its variable 
$Number\-Ro\-bots\-Pre\-vious\-Edge\-Acti\-va\-tion(r, t)$ corresponds to the value of its predicate
$Number\-Of\-Ro\-bots\-On\-No\-de()$ at its previous edge-activation and the 
value of its variable $Has\-Moved\-Pre\-vious\-Edge\-Acti\-va\-tion(r, t)$ corresponds to the value
of its predicate $Exists\-Edge\-On\-Current\-Dire\-ction()$ at its previous edge-activation. The following lemma
states that, for each robot, there exists
a suffix of the execution in which the state of the robot is coherent.

\begin{lemma} \label{convergence}
For any robot, there exists a time from which its state is always coherent.
\end{lemma}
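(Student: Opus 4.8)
The plan is to show that each robot is edge-activated at least once, and that the function \textsc{Update} restores coherence at that first edge-activation and preserves it thereafter. First I would recall the key structural fact about connected-over-time rings: since the eventual underlying graph is connected and the footprint is a ring, there is \emph{at most one} eventual missing edge in $\mathcal{E}$; hence every node has at least one adjacent recurrent edge, and consequently every robot is adjacent to a present edge infinitely often. In particular, fix an arbitrary robot $r$; there is a first round $t_0$ at which $r$ is edge-activated (i.e.\ $ExistsAdjacentEdge()$ holds after the Look phase of round $t_0$). I would take the claimed time to be $t_0+1$.

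The core of the argument is then a simple two-part claim. (1) \emph{Coherence holds at time $t_0+1$.} During the Compute phase of round $t_0$, Algorithm~\ref{algo:SSPE} calls \textsc{Update}, and since $ExistsAdjacentEdge()$ is true, the guard in line~2 of Algorithm~\ref{algo:UpdateFunction} passes, so the robot sets $NumberRobotsPreviousEdgeActivation \leftarrow NumberOfRobotsOnNode()$ and $HasMovedPreviousEdgeActivation \leftarrow ExistsEdgeOnCurrentDirection()$, where both right-hand sides are exactly the values of those predicates as observed during the Look phase of round $t_0$. By definition, $t_0$ \emph{is} the previous edge-activation of $r$ as seen from time $t_0+1$, so the two variables match the required predicate values: $r$ has a coherent state at time $t_0+1$. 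Note one subtlety to address in passing: $ExistsEdgeOnCurrentDirection()$ is read before the Move phase, and ``has moved'' in the definition of coherence should be understood as ``the edge in the pointed direction was present'' (which, given synchrony, is precisely whether the robot did cross an edge in that round) — this is consistent with how \textsc{Update} and the predicate $IWasStuckOnMyNodeAndNowWeAreMoreRobots()$ use it, so no real gap arises, but I would state it explicitly to avoid confusion. (2) \emph{Coherence is preserved.} Suppose $r$ has a coherent state at some time $t \geq t_0+1$ and let $t' \geq t$ be the next round in which $r$ is edge-activated (if none, there is nothing to prove and coherence trivially persists forever, because neither the two variables nor the ``previous edge-activation'' change). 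In rounds strictly between $t$ and $t'$ the robot is not edge-activated, so the \textsc{Update} guard fails and the two variables are untouched, while simultaneously the ``previous edge-activation'' pointer is also unchanged — so coherence is maintained up to time $t'$. At round $t'$, exactly as in part~(1), \textsc{Update} overwrites the two variables with the Look-phase observations of round $t'$, which becomes the new previous edge-activation, restoring coherence at $t'+1$. By induction, $r$ has a coherent state at every time $\geq t_0+1$.

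Taking the maximum of the individual times $t_0+1$ over the three robots $r_1, r_2, r_3$ gives a single time after which all robots are coherent, though the lemma as stated is per-robot and the per-robot bound already suffices. The main (and only) real obstacle is the first bullet: justifying that every robot is edge-activated at least once, which rests entirely on the connected-over-time hypothesis forcing at most one eventual missing edge in a ring — a fact the paper already invokes informally in the algorithm description, so I would cite that reasoning and make it precise here. Everything else is a direct unfolding of the pseudo-code of \textsc{Update} together with the definition of a round (Look/Compute/Move) and of ``coherent state''.
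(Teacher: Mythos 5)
Your proposal is correct and follows essentially the same route as the paper's proof: observe that the connected-over-time assumption on a ring leaves at most one eventual missing edge, so every robot is edge-activated at a first time $t_0$, and \textsc{Update} then writes exactly the Look-phase values of $NumberOfRobotsOnNode()$ and $ExistsEdgeOnCurrentDirection()$ into the two variables at that and every subsequent edge-activation, yielding coherence from $t_0+1$ on. The ``subtlety'' you flag is already absorbed by the paper's definition of a coherent state, which is phrased directly in terms of $ExistsEdgeOnCurrentDirection()$ at the previous edge-activation, so no additional argument is needed there.
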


\begin{proof}
 Consider a robot $r$ performing \PEF.
 
 $\mathcal{G}$ belongs to the class of connected-over-time rings, hence at least
 one adjacent edge to each node of $\mathcal{G}$ is infinitely often present.
 This implies that $r$ is infinitely often edge-activated, whatever its 
 location is. Let $t$ be the first time at which $r$ is edge-activated. 
 
 Variables can be updated only during Compute phases of rounds. When executing
 \PEF, the variables $Number\-Ro\-bots\-Pre\-vious\-Edge\-Acti\-va\-tion$ and
 $Has\-Moved\-Pre\-vious\-Edge\-Acti\-va\-tion$ of $r$ are updated with the 
 current values of its predicates $Number\-Of\-Ro\-bots\-On\-No\-de()$ and 
 $Exists\-Edge\-On\-Current\-Dire\-ction()$ only when it is edge-activated. 
 
 Therefore from time $t + 1$, $r$ is in a coherent state.
\end{proof}

Let $t_{1}$, $t_{2}$, and $t_{3}$ be respectively the 
time at which the robots $r_{1}$, $r_{2}$, and $r_{3}$, respectively are in a
coherent state. Let $t_{max} = max\{t_{1}, t_{2}, t_{3}\}$. 
From Lemma~\ref{convergence}, the three robots are in a coherent state from
$t_{max}$. In the remaining of the proof, we focus on the suffix of the execution 
after $t_{max}$. 

The two following lemmas (in combination with Lemma 4.5 and Corollary 4.1) aim
at showing that, regardless of the chirality of the robots 
and the initial values of their variables $i$, a finite number of synchronous 
invocations of the function \textsc{GiveDirection} by two robots of a long-lived tower 
returns them a distinct global direction. This property is shown by looking 
closely to the structure of the binary representation of the
transformed identifiers of the robots.

To state these lemmas, we need to introduce some vocabulary and definitions from
combinatorics on words. We consider words as (possibly infinite) sequence of letters
from the alphabet $A=\{0,1\}$. Given a word $u$, we refer to its $i$-th letter by $u[i]$.
The length of a word $u$ (denoted $|u|$) is its number of letters. Given two words 
$u=u[1]\ldots u[k]$ and $v=v[1]\ldots v[\ell]$ (with $k=|u|$ and $\ell=|v|$),
the concatenation of $u$ and $v$ (denoted $u.v$) is the word $u[1]\ldots u[k]
v[1]\ldots v[\ell]$ (with $|u.v|=k+\ell$). Given a finite word $u$, the word 
$u^1$ is $u$ itself and the word $u^z$ ($z>1$) is the word $u.u^{z-1}$.
Given a finite word $u$, the word $u^\omega$ is the infinite word $u.u.u.\ldots$.
A prefix $u_1$ of a word $u$ is a word such that there exists a word $u_2$ satisfying
$u=u_1.u_2$. A suffix $u_2$ of a word $u$ is a word such that there exists a word 
$u_1$ satisfying $u=u_1.u_2$. A factor $u_2$ of a word $u$ is a word such that there
exists a prefix $u_1$ and a suffix $u_3$ of $u$ satisfying $u=u_1.u_2.u_3$. The 
factor of $u$ starting from the $i^{th}$ bit of $u$ and ending to the $j^{th}$
bit of $u$ included is denoted $u[i \ldots j]$. A circular permutation of a word
$u$ is a word of the form $u_2.u_1$ where $u=u_1.u_2$.

\begin{lemma} \label{separation_possible}
 Let $u$ and $v$ be two distinct transformed identifiers. If $u^{\omega}$ and 
 $v^{\omega}$ share a common factor $X$, then $X$ is finite. 
\end{lemma}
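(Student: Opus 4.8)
The plan is to argue by contradiction: suppose $X$ is an infinite common factor of $u^\omega$ and $v^\omega$. Since $u^\omega$ is periodic with period $|u|$ and $v^\omega$ is periodic with period $|v|$, any infinite factor $X$ of $u^\omega$ is itself eventually periodic — indeed $X$ must be a factor of $u^\omega$ that is infinite, so it is of the form (suffix of $u$)$.(u')^\omega$ where $u'$ is a circular permutation of $u$; likewise $X$ has the form (suffix of $v$)$.(v')^\omega$ with $v'$ a circular permutation of $v$. Thus the single infinite word $X$ admits both $|u|$ and $|v|$ as eventual periods. The classical tool here is the Fine–Wilf theorem (periodicity lemma): a sufficiently long finite factor having two periods $p$ and $q$ with total length at least $p+q-\gcd(p,q)$ also has period $\gcd(p,q)$. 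Applying this to longer and longer finite prefixes of $X$, I conclude that $X$ has period $d=\gcd(|u|,|v|)$, hence is ultimately of the form $w^\omega$ for some word $w$ with $|w|=d$.

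Next I would exploit the special structure of transformed identifiers to derive the contradiction. Recall a transformed identifier has the shape $b_1b_1b_2b_2\cdots b_kb_k\,010$. The key structural fact I want to extract is that $u^\omega$ is \emph{not} periodic with any period strictly smaller than $|u|$ — equivalently, $u$ is \emph{primitive}. One shows this using the trailing block $010$: in the duplicated part $b_1b_1\cdots b_kb_k$ every maximal run of equal bits has even length, whereas the suffix $010$ breaks this parity (it contains the isolated $1$ between two $0$'s, and more to the point, concatenating $u$ with itself as in $u^\omega$ places the pattern $\ldots 010\,b_1b_1\ldots$ at the seam, and the only occurrence of the factor $010$ with a $1$ surrounded by single $0$'s inside one period of $u^\omega$ is at this designated spot). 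Hence the position of the ``$010$ marker'' inside $u^\omega$ is unique modulo $|u|$: $u^\omega$ has exactly one offset per period where this marker sits, so $u$ is primitive, and similarly $v$ is primitive. If now $|u|\neq|v|$, then $d=\gcd(|u|,|v|)<|u|$ or $d<|v|$, contradicting primitivity via the previous paragraph. So we may assume $|u|=|v|$; but then $X=w^\omega$ with $|w|=d=|u|=|v|$, forcing $w$ to be simultaneously a circular permutation of $u$ and of $v$, and — using again that the $010$ marker pins down a unique alignment — forcing $u=v$, contradicting the hypothesis that $u$ and $v$ are distinct.

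The main obstacle I anticipate is the careful bookkeeping in the ``marker uniqueness'' step: I need to verify rigorously that the factor $010$ (or some closely related fixed-length factor built from the trailing $010$ together with the parity structure of the duplicated prefix) occurs in $u^\omega$ at exactly one position modulo $|u|$. This requires checking the seam $\ldots b_k b_k 0 1 0 b_1 b_1 \ldots$ and confirming no spurious copy of the marker pattern can arise inside the duplicated block $b_1b_1\cdots b_kb_k$ — which holds precisely because every run there has even length while the marker forces an odd-length run of $1$'s (a single $1$). Once this combinatorial claim is nailed down, both the primitivity of transformed identifiers and the rigidity of their circular-permutation alignments follow, and the contradiction is immediate. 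I would also need the elementary fact that an infinite factor of a periodic word $u^\omega$ is necessarily of the claimed ``suffix-then-periodic'' form; this is routine from the pigeonhole principle applied to the positions modulo $|u|$ at which successive letters of $X$ are read off.
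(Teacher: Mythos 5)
Your plan is correct, but it takes a genuinely different route from the paper's proof. The paper argues directly by index chasing: assuming an infinite common factor $X$, it locates an occurrence of the marker $010$ in $U=u^\omega$, notes that the next occurrence in $U$ is exactly $|u|$ positions later (at the next seam), and then shows that if $|u|\neq|v|$ the corresponding positions in $V=v^\omega$ land at an odd offset inside the duplicated block of $v$, where consecutive letters must be equal --- contradicting the required $0,1$ pattern; the case $|u|=|v|$ gives $u=v$ immediately. You instead package the argument through two classical notions: the Fine--Wilf periodicity lemma (an infinite word with periods $|u|$ and $|v|$ has period $\gcd(|u|,|v|)$) and primitivity of transformed identifiers, the latter established by your ``marker uniqueness'' claim that $010$ occurs exactly once per period because every maximal run in the duplicated prefix $b_1b_1\cdots b_kb_k$ has even length while the marker contains an isolated $1$. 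That combinatorial kernel --- one isolated $1$ per period --- is exactly the same fact the paper exploits, but your architecture is cleaner and more modular: primitivity plus Fine--Wilf kills the case $|u|\neq|v|$ in one line, and the marker then rigidifies the alignment in the case $|u|=|v|$. The price is that you must prove (or cite) Fine--Wilf and verify that circular permutations preserve primitivity, whereas the paper's bare-hands argument is self-contained; your obstacle paragraph correctly identifies the only delicate point (ruling out spurious occurrences of $010$, including ones straddling the seam), and your even-run-length argument does settle it. I see no gap.
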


\begin{proof}
 Consider two distinct transformed identifiers $u$ and $v$ such that $u \neq v$.

 By definition, the transformed identifier $u$ is either equal to $00.010$ or to 
 $11.\Pi_{d = 1}^{\alpha(u)}(\Pi_{1}^{\beta(u, d)}00.\Pi_{1}^{\gamma(u,d)}11).010$ $(*)$
 with $\alpha(u)$ a function giving the number of blocks
 $(\Pi_{1}^{\beta(u, d)}00.\Pi_{1}^{\gamma(u,d)}11)$ contained in $u$, 
 $\beta(u, d)$ a function giving the number of pair of bits 00 contained in the 
 $d^{th}$ block of $u$, and $\gamma(u,d)$ a function giving the number of pair 
 of bits 11 contained in the $d^{th}$ block of $u$.

 Similarly, by definition $v$ is either equal to $00.010$ or to 
 $11.\Pi_{d = 1}^{\alpha(v)}(\Pi_{1}^{\beta(v, d)}00.\Pi_{1}^{\gamma(v,d)}11).010$ $(**)$.
 
 Let $U = u^{\omega}$ and $V = v^{\omega}$.

 Assume by contradiction that $U$ and $V$ share a common factor $X$ of infinite 
 size. Hence $U = x.X$ and $V = y.X$, with $x$ (respectively $y$) 
 the prefix of $U$ (respectively of $V$). 
 We have $X = \tilde{u}^{\omega}$, where $\tilde{u}$ is a circular 
 permutation of the word $u$, and $X = \tilde{v}^{\omega}$, where 
 $\tilde{v}$ is a circular permutation of the word $v$.
 
 By definition of a common factor we 
 have $\forall h \in \mathbb{N}^{*}, U[|x| + h] = V[|y| + h]$ $(***)$.
 
 Let $k \in \mathbb{N}^{*}$ such that $k > |x|$ and such that 
 $U[|x| + k] = 0$, $U[|x| + k + 1] = 1$ and $U[|x| + k + 2] = 0$. By $(*)$ and 
 since $U = x.X = x.\tilde{u}^{\omega}$, $k$ exists. By $(*)$ and by 
 construction of $U$, we know that 
 $U[|x| + k + 3 \ldots |x| + k + |u| + 2]$ is equal to $u$ and 
 $U[|x| + k + 3 \ldots |x| + k + |u| - 1]$ is either equal to $00$ or
 to $11.\Pi_{d = 1}^{\alpha(u)}(\Pi_{1}^{\beta(u, d)}00.\Pi_{1}^{\gamma(u,d)}11)$.
 
 By $(***)$, we have $V[|y| + k] = 0$, $V[|y| + k + 1] = 1$ and 
 $V[|y| + k + 2] = 0$. By $(**)$ and by construction of $V$, we know that 
 $V[|y| + k + 3 \ldots |y| + k + |v| + 2]$ is equal to $v$ and
 $V[|y| + k + 3 \ldots |y| + k + |v| - 1]$ is either equal to $00$ or
 to $11.\Pi_{d = 1}^{\alpha(v)}(\Pi_{1}^{\beta(v, d)}00.\Pi_{1}^{\gamma(v,d)}11)$.
 
 \begin{description}
  \item [Case 1:] \textbf{$\mathbf{|u| = |v|}$.}
  
  If $|u| = |v|$, then by $(***)$ we have 
  $U[|x| + k + 3 \ldots |x| + k + |u| + 2] = V[|y| + k + 3 \ldots |y| + k + |v| + 2]$.
  This implies that $u = v$, which leads to a contradiction with the fact that 
  $u$ and $v$ are distinct. 
 
  \item [Case 2:] \textbf{$\mathbf{|u| \neq |v|}$.}
  
  Without lost of generality assume that $|u| < |v|$.
  We have $U[|x| + k + |u|] = 0$, $U[|x| + k + |u| + 1] = 1$ and 
  $U[|x| + k + |u| + 2] = 0$. Therefore by $(***)$ we have 
  $V[|y| + k + |u|] = 0$, $V[|y| + k + |u| + 1] = 1$ and 
  $V[|y| + k + |u| + 2] = 0$.   
  
  Note that $|u| = 2w + 3$ with $w \in \mathbb{N}^{*}$. Similarly 
  $|v| = 2z + 3$, with $z \in \mathbb{N}^{*}$, and $z > x$ since $|u| < |v|$.
  Since $V[|y| + k + 3 \ldots |y| + k + |v| + 2]$ is equal to $v$, this implies
  that $V[|y| + k + 3] = v[1]$, and 
  $V[|y| + k + |u|] = V[|y| + k + 2w + 3] = v[i]$ where $i$ is odd and such that 
  $1 \leq i \leq 2z$. Hence by $(**)$, necessarily 
  $V[|y| + k + |u|] = V[|y| + k + |u| + 1]$, which leads to a contradiction with
  the fact that $V[|y| + k + |u|] = 0$ and $V[|y| + k + |u| + 1] = 1$.
 \end{description} 
\end{proof}

Let us introduce the notation $\overline{w}$ which given a word
$w$ is defined such that 
$\overline{w} = \prod_{i \in \{0, \ldots, |w| - 1\}} \overline{w[i]}$
where if $w[i] = 1$ then $\overline{w[i]} = 0$, and if $w[i] = 0$ then 
$\overline{w[i]} = 1$.

\begin{lemma} \label{separation_possible_bis}
 Let $u$ and $v$ be two distinct transformed identifiers. If $u^{\omega}$ and
 $\overline{v}^{\omega}$ share a common factor $X$, then $X$ is finite. 
\end{lemma}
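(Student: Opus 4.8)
The plan is to mimic the structure of the proof of Lemma~\ref{separation_possible} as closely as possible, replacing $v$ throughout by $\overline{v}$ and carefully checking which structural properties of transformed identifiers are preserved (or cleanly mirrored) under complementation. The first step is to record the shape of $\overline{v}$: if $v$ equals $00.010$ then $\overline{v}=11.101$, and if $v$ equals $11.\Pi_{d=1}^{\alpha(v)}(\Pi_{1}^{\beta(v,d)}00.\Pi_{1}^{\gamma(v,d)}11).010$ then $\overline{v}=00.\Pi_{d=1}^{\alpha(v)}(\Pi_{1}^{\beta(v,d)}11.\Pi_{1}^{\gamma(v,d)}00).101$. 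The key combinatorial facts I would extract are: (i) $\overline{v}$ still has odd length $2z+3$ (complementation preserves length), and (ii) inside the "body" of $\overline{v}$ (everything strictly between the leading two bits and the trailing three bits) the bits still come in equal adjacent pairs, i.e. $\overline{v}[2j-1]=\overline{v}[2j]$ for the relevant range of $j$ — this is the property that did the real work in Case~2 of the previous lemma, and it is invariant under bitwise complement. The trailing suffix changes from $010$ to $101$, which is exactly why the two lemmas are stated separately.

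The second step is to set up the contradiction exactly as before. Assume $U=u^\omega$ and $W=\overline{v}^\omega$ share an infinite common factor $X$, write $U=x.X$ and $W=y.X$, so that $\forall h\in\mathbb{N}^*,\ U[|x|+h]=W[|y|+h]$. Now I need a "marker" pattern that occurs periodically in $U$ and lets me align the two periodic words. In Lemma~\ref{separation_possible} the marker was $010$, the suffix of $u$. Here $U$ still contains $010$ periodically (at the end of each copy of $u$), so I pick $k>|x|$ with $U[|x|+k\ldots|x|+k+2]=010$ and $U[|x|+k+3\ldots|x|+k+|u|+2]=u$. Transporting through the common-factor equality, $W[|y|+k\ldots|y|+k+2]=010$. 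The delicate point is that in $W=\overline{v}^\omega$ the pattern $010$ is \emph{not} the natural block delimiter (that would be $101$), so I cannot immediately conclude that the next $|\overline{v}|$ bits of $W$ form a clean copy of $\overline{v}$ aligned to a period boundary. I would instead locate where inside a period of $\overline{v}^\omega$ an occurrence of $010$ can sit, and argue that from its position the subsequent bits still obey the "equal adjacent pairs" property up to a controlled offset — using that $\overline{v}$'s body is a concatenation of $11$'s and $00$'s, so any length-$3$ window reading $010$ must straddle block boundaries in a way I can pin down, or else fall into the short fixed suffix $\ldots 00.101$ or prefix $00.11\ldots$, which I handle as finitely many explicit sub-cases.

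The third step is the case split on $|u|$ versus $|\overline{v}|=|v|$, paralleling the previous proof. If $|u|=|v|$, aligning the two marker occurrences forces $u$ to equal a length-$|u|$ window of $\overline{v}^\omega$ containing a $010$; combining with the structural description of $\overline{v}$ I derive either $u=\overline{v}$ with both being genuine transformed identifiers (impossible, since a transformed identifier ends in $010$ while $\overline{v}$ ends in $101$ — so this sub-case is actually vacuous and gives the contradiction immediately), or a parity clash in the body. If $|u|\neq|v|$, say $|u|<|v|$, I use the second marker at offset $|u|$: since $U$ is $|u|$-periodic, $U[|x|+k+|u|\ldots|x|+k+|u|+2]=010$, hence $W[|y|+k+|u|\ldots|y|+k+|u|+2]=010$; because $|u|=2w+3$ is odd, this window lands at an \emph{odd} starting position relative to the body of the relevant copy of $\overline{v}$, where the equal-adjacent-pairs property forces two consecutive equal bits, contradicting $0$ followed by $1$ (exactly as in Lemma~\ref{separation_possible}, the parity of $|u|$ being the crux).

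The main obstacle I anticipate is the bookkeeping in the middle step: establishing that an occurrence of the "foreign" marker $010$ inside $\overline{v}^\omega$ still sits at a position from which the equal-adjacent-pairs argument applies, since the natural period boundary of $\overline{v}^\omega$ is marked by $101$, not $010$. I expect this is handled exactly as the authors handle the analogous point — by noting that $010$ can only occur in $\overline{v}^\omega$ straddling the junction between the trailing $\ldots 00$ of one block region and material that follows, or across period boundaries where $\ldots 101\,00\ldots$ or $\ldots 11\,00.101\,00.11\ldots$ appears, and in every such placement the subsequent bit-pairs realign with even parity, so the offset-$|u|$ odd-parity window still produces the forbidden "$0$ then $1$" within a pair. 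Everything else is a routine transcription of the previous proof with $v\leftrightarrow\overline{v}$.
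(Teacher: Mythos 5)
Your overall strategy (write down the shape of $\overline{v}$, pick the marker $010$ at the end of a period of $u^{\omega}$, transport it into $\overline{v}^{\omega}$, pin down the finitely many positions where $010$ can sit relative to a period of $\overline{v}$, then derive a pairing/parity contradiction) is sound and correctly identifies the one genuinely new difficulty, namely that $010$ is not the natural delimiter of $\overline{v}^{\omega}$. However, the step you rely on in the case $|u|\neq|v|$ is stated too strongly and fails in one sub-case. You claim that ``in every such placement the subsequent bit-pairs realign with even parity, so the offset-$|u|$ odd-parity window still produces the forbidden $0$ then $1$ within a pair.'' This is false when the first occurrence of $010$ sits astride a period boundary of $\overline{v}^{\omega}$ (middle $1$ equal to the last letter of a copy of $\overline{v}$) and $|u|=|v|-2$: then the window at offset $|u|$ lands exactly on the \emph{other} legitimate occurrence of $010$ in the next copy, the one reading $\overline{v}[|v|-3]\,\overline{v}[|v|-2]\,\overline{v}[|v|-1]=010$ inside the suffix $\ldots00.101$, and no two of the three inspected letters are constrained to form an equal pair. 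Concretely, take $u=00010$ and $v=1111010$, so $\overline{v}=0000101$: in $\overline{v}^{\omega}=\ldots0\,1\,0\,0\,0\,0\,1\,0\,1\ldots$ the two occurrences of $010$ starting at distance $5=|u|$ apart are both genuine, and the discrepancy with $u^{\omega}$ only shows up at offset $|u|+3$ (where $u^{\omega}$ forces a $0$ but $\overline{v}^{\omega}$ has the final $1$ of $101$). So your three-letter check at offset $|u|$ does not close this sub-case; you need to push the comparison a few positions further (or invoke a period/gcd argument), and your proposal explicitly asserts this is unnecessary. Two smaller points: the reduction ``WLOG $|u|<|v|$'' is not free here since $u$ and $\overline{v}$ play asymmetric roles (it can be justified by complementing both words, but that needs saying), and in the case $|u|=|v|$ the alignment gives that $u$ equals a \emph{circular shift} of $\overline{v}$, not $\overline{v}$ itself, so the ``different suffixes'' dismissal does not apply directly (your fallback ``parity clash'' does work there).

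For comparison, the paper avoids the $|u|$ versus $|v|$ dichotomy altogether. After transporting $010$ into $W=\overline{v}^{\omega}$ it splits only on the two possible placements of that occurrence relative to a period of $W$; in each placement it propagates one more bit through the common factor, uses the duplicated-pair structure of $u$ (respectively of $\overline{v}$) to force one of the two words to be the degenerate five-letter word ($u=00.010$ or $\overline{v}=11.101$), and then obtains an immediate misalignment contradiction by comparing the two period-$5$ words a few positions further on. That route sidesteps exactly the sub-case where your parity argument breaks down, which is why the authors did not simply transcribe the proof of the previous lemma.
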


\begin{proof}
 Consider two distinct transformed identifiers $u$ and $v$ such that $u \neq v$.

 By definition, the transformed identifier $u$ is either equal to $00.010$ or to 
 $11.\Pi_{d = 1}^{\alpha(u)}(\Pi_{1}^{\beta(u, d)}00.\Pi_{1}^{\gamma(u,d)}11).010$ $(*)$
 with $\alpha(u)$ a function giving the number of blocks
 $(\Pi_{1}^{\beta(u, d)}00.\Pi_{1}^{\gamma(u,d)}11)$ contained in $u$, 
 $\beta(u, d)$ a function giving the number of pair of bits 00 contained in the 
 $d^{th}$ block of $u$, and $\gamma(u,d)$ a function giving the number of pair 
 of bits 11 contained in the $d^{th}$ block of $u$.

 Similarly, by definition $v$ is either equal to $00.010$ or to 
 $11.\Pi_{d = 1}^{\alpha(v)}(\Pi_{1}^{\beta(v, d)}00.\Pi_{1}^{\gamma(v,d)}11).010$.
 Call $w = \overline{v}$. This implies that $|w| = |v|$ and $w$ is either equal
 to $11.101$ or to 
 $00.\Pi_{d = 1}^{\alpha(v)}(\Pi_{1}^{\beta(v, d)}11.\Pi_{1}^{\gamma(v,d)}00).101$ $(**)$.
 Note that $u$ and $w$ are distinct. Indeed, if $|u| \neq |v|$ then, $w$ and 
 $u$ are distinct since $|w| = |v|$. If $|u| = |v|$
 then, since the suffix of size $3$ of $u$ is the word $010$, and the suffix of
 size $3$ of $w$ is the word $101$, then $u$ and $w$ are distinct.
 
 Let $U = u^{\omega}$ and $W = w^{\omega}$.

 Assume by contradiction that $U$ and $W$ share a common factor $X$ of infinite
 size. Hence $U = x.X$ and $W = y.X$, with $x$ (respectively $y$) 
 the prefix of $U$ (respectively of $W$). 
 We have $X = \tilde{u}^{\omega}$, where $\tilde{u}$ is a circular 
 permutation of the word $u$, and $X = \tilde{w}^{\omega}$, where 
 $\tilde{w}$ is a circular permutation of the word $w$.

 By definition of a common factor we 
 have $\forall h \in \mathbb{N}^{*}, U[|x| + h] = W[|y| + h]$ $(***)$.
 
 Let $k \in \mathbb{N}^{*}$ such that $k > |x|$ and such that 
 $U[|x| + k] = 0$, $U[|x| + k + 1] = 1$ and $U[|x| + k + 2] = 0$. By $(*)$ and 
 since $U = x.X = x.\tilde{u}^{\omega}$, $k$ exists. By $(*)$ and by 
 construction of $U$, we know that 
 $U[|x| + k + 3 \ldots |x| + k + |u| + 2]$ is equal to $u$.
 
 By $(***)$, we have $W[|y| + k] = 0$, $W[|y| + k + 1] = 1$ and 
 $W[|y| + k + 2] = 0$. By $(**)$ and by construction of $W$, we know that 
 either $W[|y| + k + 4 \ldots |y| + k + |w| + 3] = w$ (in the case where 
 $W[|y| + k + 1] = w[|w| - 2]$) or 
 $W[|y| + k + 2 \ldots |y| + k + |w| + 1] = w$ (in the case where 
 $W[|y| + k + 1] = w[|w|]$).
 
 \begin{description}

  \item [Case 1:] \textbf{$\mathbf{W[|y| + k + 4 \ldots |y| + k + |w| + 3] = w}$.}
     
   In this case $W[|y| + k + 3] = 1$, then necessarily by $(***)$ 
   $U[|x| + k + 3] = 1$. By $(*)$, and since 
   $U[|x| + k + 3 \ldots |x| + k + |u| + 2] = u$, this implies that 
   $U[|x| + k + 4] = 1$. Therefore by $(***)$, necessarily $W[|y| + k + 4] = 1$.
   Since $W[|y| + k + 4 \ldots |y| + k + |w| + 3] = w$, and by $(**)$,
   this implies that $w = 11.101$, otherwise $W[|y| + k + 4] = 0$, which leads
   to a contradiction with the fact that $U[|x| + k + 4] = 1$.
   
   This implies by $(***)$, that $U[|x| + k + 3 \ldots |x| + k + 8] = 111101$.
   Therefore by $(*)$, necessarily $U[|x| + k + 9] = 0$. However by construction
   of $W$, since $W[|y| + k + 4 \ldots |y| + k + |w| + 3] = w$, and since 
   $|w| = 5$, we have $W[|y| + k + 9 \ldots |y| + k + 14] = w$. This implies
   that $W[|y| + k + 9] = 1$ since $w = 11.101$, which leads to a contradiction
   with the fact that $U[|x| + k + 9] = 0$.

  \item [Case 2:] \textbf{$\mathbf{W[|y| + k + 2 \ldots |y| + k + |w| + 1] = w}$.}
  
  In this case, since $W[|y| + k + 2] = 0$, this implies by $(**)$ that 
  $W[|y| + k + 3] = 0$. Therefore by $(***)$ we have $U[|x| + k + 3] = 0$. 
  Hence, since $U[|x| + k + 3 \ldots |x| + k + |u| + 2] = u$, then by $(*)$,
  we have $u = 00.010$, otherwise $U[|x| + k + 3] = 1$ which leads to a
  contradiction with the fact that $W[|y| + k + 3] = 0$.
 
  This implies by $(***)$, that $W[|y| + k + 2 \ldots |x| + k + 7] = 000010$.
  Therefore by $(**)$, necessarily $W[|x| + k + 8] = 1$. However by construction
  of $U$, since $U[|x| + k + 3 \ldots |x| + k + |u| + 2] = u$, and since 
  $|u| = 5$, we have $U[|x| + k + 8 \ldots |y| + k + 13] = u$ which implies that 
  $U[|x| + k + 8] = 0$ since $u = 00.010$, which leads to a contradiction with
  the fact that $W[|y| + k + 8] = 1$.
 \end{description}
\end{proof}

\subsection{Tower Properties} 

We are now able to state a set of lemmas
that show some interesting technical properties of towers under specific assumptions 
during the execution of our algorithm.
These properties are extensively used in the main proof of our algorithm.

\begin{lemma} \label{propertyLongLivedTower}
 The robots of a long-lived tower $T = (S, [t_{s}, t_{e}])$ consider a same 
 global direction at each time between the Look phase of 
 round $t_{s}$ and the Look phase of round $t_{e}$ included.
\end{lemma}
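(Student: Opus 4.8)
The goal is to show that, at the Look phase of every round $t\in[t_s,t_e]$, all robots of $S$ consider the same global direction. The plan rests on two elementary observations. First, the \emph{no-separation} observation: if, right after the Compute phase of some round $t\in[t_s,t_e[$, two robots of $S$ consider opposite global directions, then they cannot persist as a tower. Indeed, the robots of $S$ are co-located, and whenever they are edge-activated at least one edge incident to their node is present; one of the two robots then moves along that edge while the other either stays or moves along the other incident edge, so, since $n\geq 4$, they occupy distinct nodes at round $t+1\leq t_e$, contradicting the tower property. Hence at every edge-activated round of $[t_s,t_e[$ all robots of $S$ agree on the global direction just after Compute. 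Second, the \emph{frozen-when-stuck} observation: at a round where the robots of $S$ are not edge-activated, neither $We\-Are\-Stuck\-In\-The\-Same\-Dire\-ction()$ nor $I\-Was\-Stuck\-On\-My\-No\-de\-And\-Now\-We\-Are\-More\-Ro\-bots()$ can hold (each contains a conjunct asserting the presence of an adjacent edge), so no robot of $S$ modifies $dir$ during Compute and none moves; thus each robot of $S$ considers the same global direction at the Look phases of that round and of the next one.

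Since $T$ is long-lived, I would fix the smallest round $\tau\in[t_s,t_e[$ at which the robots of $S$ are edge-activated. Because the value of $dir$ at the Look phase of a round equals its value just after the Compute phase of the previous round, the no-separation observation gives agreement at the Look phase of round $\tau+1$; a short induction on $t$ from $\tau+1$ to $t_e$ then propagates agreement to the Look phase of every round of $[\tau+1,t_e]$ (at an edge-activated round the no-separation observation re-establishes it, at a non-edge-activated one the frozen-when-stuck observation preserves it). Symmetrically, every round of $[t_s,\tau)$ is non-edge-activated, so by the frozen-when-stuck observation each robot of $S$ keeps exactly the same value of $dir$ throughout $[t_s,\tau]$; it therefore only remains to show that the robots of $S$ agree at the Look phase of round $\tau$, equivalently at the Look phase of round $t_s$.

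This last step is the crux, and it is where the argument is genuinely delicate. Working, as in the rest of the section, in the suffix that follows $t_{max}$, we have $t_s\geq 1$, and we may assume (replacing $t_s$ by the least round from which $S$ stays co-located up to $t_e$, which only enlarges the interval) that the robots of $S$ are not all on their round-$t_s$ node at round $t_s-1$; hence they converge onto that node at round $t_s$, each coming from itself or from one of its two neighbours. Suppose for contradiction that two robots of $S$ consider opposite global directions at the Look phase of round $t_s$. A case analysis on where each robot of $S$ sits at round $t_s-1$ — using that a robot that stayed requires its current-direction edge to be absent at $t_s-1$, whereas a robot that moved in requires the traversed edge to be present — shows that in this disagreeing situation \emph{every} robot of $S$ must in fact have moved, hence been edge-activated, at round $t_s-1$; consequently \textsc{Update} set $Has\-Moved\-Pre\-vious\-Edge\-Acti\-va\-tion$ to true for every robot of $S$ at round $t_s-1$, and, no edge-activation occurring in $[t_s,\tau)$, this value is still true at round $\tau$. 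Then both guards that could change $dir$ are false for every robot of $S$ at round $\tau$, so no robot changes direction during Compute of $\tau$; the robots of $S$ still disagree just after that Compute phase, and since $\tau$ is edge-activated the no-separation observation forces them onto distinct nodes at round $\tau+1\leq t_e$ — contradicting the tower property. Hence the robots of $S$ agree at the Look phase of round $t_s$, which finishes the argument. The main obstacle is precisely this last step: ruling out a ``disagreeing start'' forces one to combine the maximality of the tower interval, the coherence of states inherited from the post-$t_{max}$ suffix, and the long-lived hypothesis, which is exactly what provides the edge-activated round $\tau$ inside $[t_s,t_e[$ at which a lingering disagreement becomes fatal.
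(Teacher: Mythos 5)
Your proof is correct and follows essentially the same route as the paper's: agreement at the Look phase of round $t_s$ is obtained by the same contradiction (a disagreeing start forces every robot of $S$ to have moved at round $t_s-1$, so $Has\-Moved\-Pre\-vious\-Edge\-Acti\-va\-tion$ is true and neither guard can fire at the first edge-activation $t_{act}$, which would then split the tower), followed by the same forward induction over $[t_s,t_e]$ combining the ``no change when not edge-activated'' and ``must agree when edge-activated'' observations. Your crux case analysis is simply a more explicit rendering of the paper's terser argument.
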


\begin{proof}
 Consider a long-lived tower $T = (S, [t_{s}, t_{e}])$.
 
 Call $t_{act}$ the first time in $[t_{s}, t_{e}[$ at which the robots of $S$ 
 are edge-activated. Since $T$ is a long-lived tower, $t_{act}$ exists.
 
 When executing \PEF, a robot can change the global direction it considers only
 when it is edge-activated. Moreover a robot does not change the global
 direction it considers if it has moved during its previous edge-activation. 
 Besides, during the Look phase of a time $t$ a robot considers the same global
 direction than the one it considers during the Move phase of time $t - 1$.

 Therefore, during the Look phase of time $t_{s}$ the robots of $S$ consider the
 same global direction, otherwise the robots of $S$ consider different global
 directions during the Move phase of time $t_{s} - 1$, and so move during this 
 phase (otherwise $T$ is not formed at time $t_{s}$), therefore they separate
 during the Move phase of time $t_{act}$. This 
 leads to a contradiction with the fact that $T$ is a long-lived tower.

 Consider a time $t \in ]t_{s}, t_{e}[$. If at time $t$ the robots of $S$ are 
 not edge-activated, then during the Move phase of time $t$ the robots of $S$ do
 not change the global direction they consider. 
 
 $T$ is a long-lived tower from time $t_{s}$ to time $t_{e}$ included. Therefore
 if at time $t \in ]t_{s}, t_{e}[$ the robots of $S$ are edge-activated, then, 
 by definition of a long-lived tower, during the Move phase of time $t$, the 
 robots of $S$ consider the same global direction. 
 
 
 Since at time $t_{s}$ the robots of $S$ consider the same global direction 
 using the two previous arguments by recurrence on each time 
 $t \in ]t_{s}, t_{e}[$ and the fact that robots change the global directions
 they consider only during Compute phases, we can conclude that the robots of 
 $S$ consider a same global direction from the Look phase of time $t_{s}$ to the
 Look phase of time $t_{e}$.
\end{proof}

The following lemma is used to prove, in combination with Lemmas 
\ref{separation_possible} and \ref{separation_possible_bis}, the 
``tower breaking'' mechanism
since it proves that robots of a long-lived tower synchronously invoke their
\textsc{GiveDirection} function after their first edge-activation.

\begin{lemma} \label{successive_bits_consideration}
 For any long-lived tower $T = (S, [t_{s}, t_{e}])$, any 
 $(r_{i}, r_{j})$ in $S^{2}$, and any $t$ less or equal to $t_{e}$, we have
 $We\-Are\-Stuck\-In\-The\-Same\-Dire\-ction()(r_{i}, t)$ $=$ 
 $We\-Are\-Stuck\-In\-The\-Same\-Dire\-ction()(r_{j}, t)$ and 
 $I\-Was\-Stuck\-On\-My\-No\-de\-And\-Now\-We\-Are\-More\-Ro\-bots()(r_{i}, t)$
 $=$ 
 $I\-Was\-Stuck\-On\-My\-No\-de\-And\-Now\-We\-Are\-More\-Ro\-bots()(r_{j}, t)$
 if all robots of $S$ have been edge-activated between $t_{s}$ (included) and 
 $t$ (not included).
\end{lemma}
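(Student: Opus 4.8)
The plan is to notice that each of the two predicates is a Boolean combination of only five quantities read by a robot: $NumberOfRobotsOnNode()$, $ExistsEdgeOnCurrentDirection()$, $ExistsEdgeOnOppositeDirection()$ and $ExistsAdjacentEdge()$ (all four refreshed during the Look phase of the current round), together with the two variables $NumberRobotsPreviousEdgeActivation$ and $HasMovedPreviousEdgeActivation$. It therefore suffices to show that at time $t$ each of these five quantities takes the same value for $r_{i}$ and for $r_{j}$. Note first that the hypothesis that all robots of $S$ have been edge-activated between $t_{s}$ (included) and $t$ (not included) forces $t_{s}<t\le t_{e}$, so $t$ lies in the activity interval of the tower.

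First I would handle the four Look-phase quantities. Since $T=(S,[t_{s},t_{e}])$ is a tower and $t_{s}\le t\le t_{e}$, all robots of $S$, in particular $r_{i}$ and $r_{j}$, occupy the same node during round $t$; hence $NumberOfRobotsOnNode()$ and $ExistsAdjacentEdge()$ coincide for them at time $t$. Moreover, since $T$ is long-lived, Lemma~\ref{propertyLongLivedTower} guarantees that $r_{i}$ and $r_{j}$ consider the same global direction at the Look phase of round $t$; and because the predicates $ExistsEdgeOnCurrentDirection()$ and $ExistsEdgeOnOppositeDirection()$, although defined from local left/right labels and the local variable $dir$, only depend on the occupied node and on the considered global direction, they too coincide for $r_{i}$ and $r_{j}$ at time $t$.

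It remains to deal with $NumberRobotsPreviousEdgeActivation$ and $HasMovedPreviousEdgeActivation$, which are written only inside \textsc{Update}, hence only during the Compute phase of a round at which the robot is edge-activated. Since the robots of $S$ share a node throughout $[t_{s},t_{e}]$, they are edge-activated at exactly the same rounds of that interval; let $t_{act}$ be the latest round in $[t_{s},t)$ at which $S$ is edge-activated, which exists by hypothesis. No edge-activation of $S$ occurs in rounds $t_{act}+1,\ldots,t-1$, and the Look phase of round $t$ writes no variable, so at the Look phase of round $t$ the two variables still hold the values assigned to them by \textsc{Update} in round $t_{act}$, namely $NumberOfRobotsOnNode()$ and $ExistsEdgeOnCurrentDirection()$ evaluated at that instant. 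As $t_{s}\le t_{act}<t_{e}$, that instant lies strictly before the Look phase of round $t_{e}$, so Lemma~\ref{propertyLongLivedTower} applies there too: when \textsc{Update} performs these assignments, $r_{i}$ and $r_{j}$ are co-located and consider the same global direction, so the assigned values are identical for both. Combining with the previous paragraph, every ingredient of both predicates is the same for $r_{i}$ and $r_{j}$ at time $t$, which is the claim.

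The point I expect to be delicate is the bookkeeping inside round $t_{act}$: \textsc{GiveDirection} and \textsc{OppositeDirection} are executed \emph{before} \textsc{Update} in Algorithm~\ref{algo:SSPE}, so one must check that the direction read by \textsc{Update} is still common to $r_{i}$ and $r_{j}$. This is not an extra assumption but a consequence of Lemma~\ref{propertyLongLivedTower}: if those calls left $r_{i}$ and $r_{j}$ pointing in different global directions, which here could happen only while $t_{act}<t_{e}$, the two robots would disagree on their global direction at a time between the Look phases of rounds $t_{s}$ and $t_{e}$, contradicting that lemma. Everything else is a routine unfolding of the definitions of the two predicates and of the pseudo-code of \textsc{Update}.
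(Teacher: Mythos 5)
Your proof is correct and follows essentially the same route as the paper's: co-location plus Lemma~\ref{propertyLongLivedTower} give agreement on the four Look-phase predicates, and the fact that $NumberRobotsPreviousEdgeActivation$ and $HasMovedPreviousEdgeActivation$ were last written at a common edge-activation round (guaranteed by the hypothesis) gives agreement on the stored variables. Your extra remark about \textsc{Update} being executed after \textsc{GiveDirection}/\textsc{OppositeDirection} within the Compute phase is a legitimate subtlety that the paper's proof passes over silently, and your resolution of it via Lemma~\ref{propertyLongLivedTower} is sound.
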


\begin{proof}
 Consider a long-lived tower $T = (S, [t_{s}, t_{e}])$. Let $t_{act}$ be the
 first time in $[t_{s}, t_{e}[$ where the robots of $S$ are edge-activated. By 
 definition of a long-lived tower, this time exists.
 
 By definition of a long-lived tower and by
 lemma~\ref{propertyLongLivedTower}, from the Look phase of time $t_{s}$ to the
 Look phase of time $t_{e}$ included, all the robots of $S$ are on a same node 
 and consider a same global direction. Therefore the values of their respective 
 predicates $Number\-Of\-Ro\-bots\-On\-No\-de()$, 
 $Exists\-Edge\-On\-Current\-Dire\-ction()$, 
 $Exists\-Edge\-On\-Oppo\-si\-te\-Dire\-ction()$ and 
 $Exists\-Adja\-cent\-Edge()$ are identical from the Look phase of time $t_{s}$ 
 to the Look phase of time $t_{e}$ included.
 
 When executing \PEF, a robot updates its variables 
 $Number\-Ro\-bots\-Pre\-vious\-Edge\-Acti\-va\-tion$ and 
 $Has\-Moved\-Pre\-vious\-Edge\-Acti\-va\-tion$ respectively with the values of
 its predicates $Number\-Of\-Ro\-bots\-On\-No\-de()$ and 
 $Exists\-Edge\-On\-Current\-Dire\-ction()$, only during Compute phases of times 
 where it is edge-activated. By the observation made at the previous paragraph, 
 this implies that from the Compute phase of time $t_{act}$ to the Look phase of
 time $t_{e}$ included, the robots of $S$ have the same values for their
 variables $Number\-Ro\-bots\-Pre\-vious\-Edge\-Acti\-va\-tion$ and 
 $Has\-Moved\-Pre\-vious\-Edge\-Acti\-va\-tion$.
 
 Then, by construction of the predicates
 $We\-Are\-Stuck\-In\-The\-Same\-Dire\-ction()$ and 
 $I\-Was\-Stuck\-On\-My\-No\-de\-And\-Now\-We\-Are\-More\-Ro\-bots()$, the lemma
 is proved.

\end{proof}

From the Lemmas~\ref{successive_bits_consideration}, \ref{separation_possible} and
\ref{separation_possible_bis}, we can then deduce the following corollary.

\begin{corollary} \label{breakingTowerCorollary}
 Consider a long-lived tower $T = (S, \theta)$ with $\theta = [t_{s}, +\infty[$.
 The predicates $We\-Are\-Stuck\-In\-The\-Same\-Dire\-ction()$ of the robots of
 $S$ cannot be infinitely often true, otherwise $T$ is broken in finite time.
\end{corollary}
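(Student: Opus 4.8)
The plan is to argue by contradiction: I would assume that $WeAreStuckInTheSameDirection()$ is true infinitely often for the robots of $S$ and derive that two robots of $S$ occupy distinct nodes at some finite time, contradicting that $S$ stays a tower over $[t_s,+\infty[$. To synchronize the relevant events, note that since $T$ is long-lived the robots of $S$ have a first edge-activation $t_{act}\in[t_s,+\infty[$, and that they are edge-activated simultaneously at $t_{act}$ (being co-located, they see the same adjacent edges). By Lemmas~\ref{propertyLongLivedTower} and \ref{successive_bits_consideration}, for every $t>t_{act}$ all robots of $S$ evaluate both predicates of Algorithm~\ref{algo:SSPE} identically; hence there is an increasing sequence $\tau_1<\tau_2<\cdots$ of times, all $>t_{act}$, at which the robots of $S$ synchronously invoke \textsc{GiveDirection} (line~\ref{givedir}). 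Two observations keep the rest clean: the two predicates of Algorithm~\ref{algo:SSPE} are mutually exclusive (one requires $NumberOfRobotsOnNode()=NumberRobotsPreviousEdgeActivation$, the other $>$), so at each $\tau_k$ the value of $dir$ after the Compute phase is exactly the one written by \textsc{GiveDirection}; and, by Algorithm~\ref{algo:GiveDirection}, that value depends only on the bit of $TransformedIdentifier$ read, the index being taken modulo $\ell$, which also absorbs the arbitrary initial value of $i$. Consequently, fixing two robots $r,r'\in S$ --- which have distinct transformed identifiers $u\neq v$ because all robots have distinct identifiers --- the bits read by $r$ (resp. $r'$) at $\tau_1,\tau_2,\dots$ form $\hat u^{\omega}$ (resp. $\hat v^{\omega}$), where $\hat u$ (resp. $\hat v$) is a circular permutation of $u$ (resp. of $v$); in particular $\hat u^{\omega}$ is a factor of $u^{\omega}$, $\hat v^{\omega}$ is a factor of $v^{\omega}$, and $\overline{\hat v}^{\omega}$ is a factor of $\overline{v}^{\omega}$.

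Next I would turn reading a bit into choosing a global direction and break the tower. During round $\tau_k$, as $WeAreStuckInTheSameDirection()$ holds, the node hosting the tower has exactly one present and one missing adjacent edge (a ring node has degree $2$, and the predicate asserts $\lnot ExistsEdgeOnCurrentDirection()\wedge ExistsEdgeOnOppositeDirection()$). Reading bit $0$ (resp. $1$) makes a robot point \emph{left} (resp. \emph{right}); composing with the fixed chiralities, the global directions chosen at $\tau_k$ are $g_r(\hat u^{\omega}[k])$ and $g_{r'}(\hat v^{\omega}[k])$ for bijections $g_r,g_{r'}\colon\{0,1\}\to\{\mathrm{cw},\mathrm{ccw}\}$ satisfying $g_{r'}=g_r$ when $r,r'$ have the same chirality and $g_{r'}(b)=\overline{g_r(b)}$ otherwise. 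If at some $\tau_k$ these two global directions differ, then one of $r,r'$ points to the present edge and crosses it during the Move phase of $\tau_k$ while the other points to the missing edge and stays, so (no self-loop in a ring) they lie on distinct nodes at time $\tau_k+1$ --- contradiction. Hence $g_r(\hat u^{\omega}[k])=g_{r'}(\hat v^{\omega}[k])$ for all $k$, which forces $\hat u^{\omega}=\hat v^{\omega}$ in the same-chirality case and $\hat u^{\omega}=\overline{\hat v}^{\omega}$ in the opposite-chirality case. The former exhibits an infinite common factor of $u^{\omega}$ and $v^{\omega}$, contradicting Lemma~\ref{separation_possible} since $u\neq v$; the latter exhibits an infinite common factor of $u^{\omega}$ and $\overline{v}^{\omega}$ (because $\overline{\hat v}$ is a circular permutation of $\overline{v}$), contradicting Lemma~\ref{separation_possible_bis} since $u\neq v$. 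Either way we obtain a contradiction, so $WeAreStuckInTheSameDirection()$ cannot be infinitely often true for the robots of $S$.

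I expect the main obstacle to be the bookkeeping around the bit index $i$: one must verify carefully that its arbitrary initial value and any intervening call to \textsc{OppositeDirection} are truly harmless --- the former because $i$ is used modulo $\ell$, the latter because \textsc{GiveDirection} rewrites $dir$ from scratch --- and that the statement ``$r$ and $r'$ choose the same global direction at every invocation time'' translates into \emph{exactly} the hypothesis ($\hat u^{\omega}=\hat v^{\omega}$, resp. $\hat u^{\omega}=\overline{\hat v}^{\omega}$) that Lemmas~\ref{separation_possible} and \ref{separation_possible_bis} refute; keeping the two chirality configurations separate is essential.
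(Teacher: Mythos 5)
Your proof is correct and takes essentially the same route as the paper's: a contradiction argument using Lemma~\ref{successive_bits_consideration} to synchronize the infinitely many invocations of \textsc{GiveDirection}, then splitting on chirality to extract an infinite common factor of $u^{\omega}$ and $v^{\omega}$ (resp. $u^{\omega}$ and $\overline{v}^{\omega}$) refuted by Lemma~\ref{separation_possible} (resp. Lemma~\ref{separation_possible_bis}). Your added details (mutual exclusivity of the two predicates, and the fact that exactly one adjacent edge is present so that differing global directions immediately separate the robots) are sound clarifications of steps the paper leaves implicit.
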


\begin{proof}
 First, note that if two robots possess two distinct identifiers, then their
 transformed identifiers are also distinct.

 Consider a long-lived tower $T = (S, \theta)$ with $\theta = [t_{s}, +\infty[$.
 
 Call $t_{act} \geq t_{s}$ the first time after $t_{s}$ where the robots of $S$ 
 are edge-activated. By definition of a long-lived tower, $t_{act}$ exists. By 
 Lemma~\ref{successive_bits_consideration}, after time $t_{act}$, the robots
 of $S$ consider the same values of predicates 
 $We\-Are\-Stuck\-In\-The\-Same\-Dire\-ction()$ 
 and $I\-Was\-Stuck\-On\-My\-No\-de\-And\-Now\-We\-Are\-More\-Ro\-bots()$. 
 
 Assume by contradiction that after $t_{act}$ the predicates 
 $We\-Are\-Stuck\-In\-The\-Same\-Dire\-ction()$ of the robots of $S$ are
 infinitely often true. Then by construction of \PEF, after time $t_{act}$, all
 the robots of $S$ call the function \textsc{GiveDirection} infinitely often and
 at the same instants of times. 
 
 If among the robots of $S$ two have the same chirality, to keep forming $T$ 
 they need to consider the same values of bits each time the 
 function \textsc{GiveDirection} is called. Here the robots have to consider the 
 same values of bits infinitely often (since the two robots call the function 
 \textsc{GiveDirection} infinitely often). Each time a robot executes the 
 function \textsc{GiveDirection} it reads the next bit (in a round robin way) of
 the bit read during its previous call to the function \textsc{GiveDirection}.
 Call $i_{1}$ and $i_{2}$ the two respective transformed identifiers of two 
 robots forming $T$ such that these two robots possess the same chirality. By
 the previous observations, to keep forming $T$, $i_{1}^{\omega}$ and
 $i_{2}^{\omega}$ must share an infinite common factor. However according to 
 Lemma~\ref{separation_possible} this is not possible. Therefore there exists a 
 time $t_{end}$ at which these two robots consider two different bits. When the
 robots call the function \textsc{GiveDirection}, they are edge-activated (by 
 definition of the predicate $We\-Are\-Stuck\-In\-The\-Same\-Dire\-ction()$), 
 therefore at time $t_{end}$, $T$ is broken.

 Similarly, if among the robots of $S$ two have not the same chirality, to keep
 forming $T$ they need to consider different values of bits each time the
 function \textsc{GiveDirection} is called. Here the robots have to consider
 different values of bits infinitely often (since the two robots call the 
 function \textsc{GiveDirection} infinitely often). Each time a robot executes
 the function \textsc{GiveDirection} it reads the next bit (in a round robin 
 way) of the bit read during its previous call to the function 
 \textsc{GiveDirection}. Call $j_{1}$ and $j_{2}$ the two respective transformed
 identifiers of two robots forming $T$ such that these two robots possess a
 different chirality. By the previous observations, to keep forming $T$, 
 $j_{1}^{\omega}$ must possess an infinite suffix $S$ such that an infinite 
 suffix of $j_{2}^{\omega}$ is equal to $\overline{S}$. This is equivalent to 
 say that $j_{1}^{\omega}$ and $\overline{j_{2}^{\omega}}$ must possess an 
 infinite common factor. However according to
 Lemma~\ref{separation_possible_bis} this is not possible.
 Therefore there exists a time $t_{end}$ at which these two robots consider two
 identical bits. When the robots call the function \textsc{GiveDirection}, they 
 are edge-activated, therefore at time $t_{end}$, $T$ is broken.
 
 Hence in both cases the long-lived tower $T$ is broken, which leads to a
 contradiction with the fact that $\theta = [t_{s}, +\infty[$.
\end{proof}

\begin{lemma} \label{OneEdgeMissingTowerBroken}
 If there exists an eventual missing edge, then all long-lived towers have a 
 finite duration.
\end{lemma}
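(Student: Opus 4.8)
The plan is to argue by contradiction: suppose there is an eventual missing edge $e$ and a long-lived tower $T=(S,\theta)$ with $\theta=[t_s,+\infty[$ (a long-lived tower of infinite duration — if every long-lived tower with infinite right endpoint is ruled out, then together with maximality every long-lived tower has finite duration, which is what we want). By Lemma~\ref{propertyLongLivedTower}, the robots of $S$ share a common global direction throughout $\theta$, so they move together at each edge-activation (except possibly the last, which here does not exist since $\theta$ is infinite). First I would observe that, since $\mathcal G$ is connected-over-time and $e$ is the \emph{unique} eventual missing edge, every node — in particular the node(s) occupied by the tower — is infinitely often edge-activated, so the robots of $S$ are edge-activated infinitely often after $t_s$.

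Next I would split on the behaviour of the predicate $We\-Are\-Stuck\-In\-The\-Same\-Dire\-ction()$ evaluated on the robots of $S$ (which by Lemma~\ref{successive_bits_consideration} takes the same value for all robots of $S$ after their first edge-activation). Case one: this predicate is infinitely often true. Then Corollary~\ref{breakingTowerCorollary} applies directly and $T$ is broken in finite time, contradicting $\theta=[t_s,+\infty[$. Case two: the predicate is only finitely often true, so there is a time $t^\star \geq t_s$ after which it is always false while the robots of $S$ are edge-activated infinitely often. The heart of the argument is then to show that in this situation the robots of $S$ must traverse the edge $e$ infinitely often — which is impossible since $e$ is eventually missing. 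Concretely: after $t^\star$, whenever the tower is edge-activated, $We\-Are\-Stuck\-In\-The\-Same\-Dire\-ction()$ is false; inspecting the five conjuncts of that predicate (using that $|S|>1$, that the tower stays together so $Number\-Of\-Ro\-bots\-On\-No\-de()=Number\-Ro\-bots\-Pre\-vious\-Edge\-Acti\-va\-tion$, and that $Has\-Moved\-Pre\-vious\-Edge\-Acti\-va\-tion$ holds whenever the tower actually moved at its previous edge-activation), falseness forces either $Exists\-Edge\-On\-Current\-Dire\-ction()$ to be true (so the tower moves) or $Has\-Moved\-Pre\-vious\-Edge\-Acti\-va\-tion$ to be true — and the latter propagates the ``moved last time'' status, so by induction the tower moves at \emph{every} edge-activation after some point, never changing direction (neither \textsc{GiveDirection} nor, as one checks, \textsc{OppositeDirection} can fire, the latter because $I\-Was\-Stuck\-On\-My\-No\-de\-And\-Now\-We\-Are\-More\-Ro\-bots()$ requires $\lnot Has\-Moved\-Pre\-vious\-Edge\-Acti\-va\-tion$ and there are only three robots so no new robot can join a $|S|=3$ tower, and for $|S|=2$ one rules out the lone outside robot joining by a separate short argument or by noting it would create a $3$-tower handled by the same analysis). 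A tower moving forever in a fixed global direction around the ring must cross every edge of the ring infinitely often, in particular $e$ — contradiction.

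The main obstacle I anticipate is the bookkeeping in case two: carefully verifying, from the Boolean structure of $We\-Are\-Stuck\-In\-The\-Same\-Dire\-ction()$ and the update semantics of \textsc{Update}, that ``never stuck'' really does force ``eventually always moving in a fixed direction'', and in particular ruling out the degenerate scenarios where the third robot repeatedly joins and leaves an $|S|=2$ tower (changing $Number\-Of\-Ro\-bots\-On\-No\-de()$) in a way that keeps the predicate false without the tower ever moving. I expect this to be dischargeable by combining Observation~\ref{2robotsInTheSameDirection} with Lemma~\ref{successive_bits_consideration} and a short parity/counting argument on edge-activations, but it is the step that needs the most care; the rest (connected-over-time $\Rightarrow$ infinitely many edge-activations, and ``moves forever in a fixed direction $\Rightarrow$ crosses $e$'') is routine.
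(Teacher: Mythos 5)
Your overall skeleton matches the paper's: assume an infinite-duration long-lived tower $T=(S,[t_s,+\infty[)$, observe that its robots are edge-activated infinitely often, and split on whether $We\-Are\-Stuck\-In\-The\-Same\-Dire\-ction()$ is infinitely often true (in which case Corollary~\ref{breakingTowerCorollary} breaks $T$) or eventually always false. The branch where it is eventually always false is also where the paper does its real work, and this is where your argument has a genuine gap.

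Your claimed invariant for that branch --- ``the tower moves at every edge-activation after some point, never changing direction'' --- is false precisely in the sub-case you defer to ``a separate short argument'': $|S|=2$ and the third robot joins the tower's node. In that event $Number\-Of\-Ro\-bots\-On\-No\-de()\neq Number\-Ro\-bots\-Pre\-vious\-Edge\-Acti\-va\-tion$, so $We\-Are\-Stuck\-In\-The\-Same\-Dire\-ction()$ stays false \emph{without} the tower moving, and if the tower was stuck at its previous edge-activation then $I\-Was\-Stuck\-On\-My\-No\-de\-And\-Now\-We\-Are\-More\-Ro\-bots()$ fires and the tower \emph{flips} its global direction. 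So neither ``fixed direction'' nor ``crosses every edge infinitely often'' survives; saying the resulting $3$-tower is ``handled by the same analysis'' does not work because that $3$-tower is short-lived and the surviving $2$-tower emerges with a reversed direction. The paper closes this case differently: it shows that at the first common edge-activation after such a meeting the tower and the third robot necessarily adopt \emph{opposed} global directions and separate; thereafter each entity keeps its direction while apart, and since the eventual missing edge $e$ permanently blocks the ring they cannot meet again --- each gets pinned at one extremity of $e$, at which point the tower is stuck pointing at $e$ with stable membership and its predicate $We\-Are\-Stuck\-In\-The\-Same\-Dire\-ction()$ becomes true, contradicting the case hypothesis. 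Your $|S|=3$ sub-case and the $|S|=2$ sub-case where the third robot is never met are fine (and your ``cannot fail to move twice in a row'' reading of the predicate is a correct, slightly different way to reach the same contradiction there), but the meeting scenario needs the paper's separation-and-pinning argument, not a counting argument, to be discharged.
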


\begin{proof}
 Consider that there exists an edge $e$ of $\mathcal{G}$ which is missing 
 forever from time $t_{missing}$. Consider the execution from time $t_{missing}$.
 
 Call $u$ and $v$ the two adjacent nodes of $e$, such that $v$ is the adjacent 
 node of $u$ in the clockwise direction.
 
 By contradiction assume that there exists a long-lived tower $T = (S, \theta)$
 such that $\theta = [t_{s}, +\infty[$. Exactly 3 robots are executing \PEF, 
 so $\textpipe S \textpipe$ is either equals to 2 or 3. We want to prove that 
 all the robots of $T$ have their predicates 
 $We\-Are\-Stuck\-In\-The\-Same\-Dire\-ction()$ infinitely often true. By 
 contradiction, assume that there exists a robot $r_{i}$ of $S$, such that it 
 exists a time $t_{i}$ in $\theta$ such that for all time $t$ greater or equal 
 to $t_{i}$ its predicate $We\-Are\-Stuck\-In\-The\-Same\-Dire\-ction()$ is 
 false.
 
 Call $t_{act} \geq t_{s}$, the first time after time $t_{s}$ where the robots are 
 edge-activated. Since $T$ is a long-lived tower, $t_{act}$ exists. By 
 Lemma~\ref{successive_bits_consideration}, from time
 $t_{act} + 1$ the robots of $S$ possess the same values of predicates 
 $We\-Are\-Stuck\-In\-The\-Same\-Dire\-ction()$. By assumption of contradiction, from time 
 $t_{false} = max \{t_{act} + 1, t_{i}\}$ the predicates
 $We\-Are\-Stuck\-In\-The\-Same\-Dire\-ction()$ of all the robots of $S$ are 
 false.
 
 We recall that by definition of a long-lived tower and by 
 Lemma~\ref{propertyLongLivedTower} all the robots of $S$ are on a same node and
 consider a same global direction from the Look phase of time $t_{s}$ to the 
 Look phase of time $t_{e}$ included.
 
 \begin{description}
  \item [Case 1: $\mathbf{\textpipe S \textpipe}$ $=$ $\mathbf{3}$.]
  
  From time $t_{s}$ the predicates $Number\-Of\-Ro\-bots\-On\-No\-de()$ of the
  robots of $S$ are equal to 3. When executing \PEF, a robot updates its 
  variables $Number\-Ro\-bots\-Pre\-vious\-Edge\-Acti\-va\-tion$ with the value
  of its predicate $Number\-Of\-Ro\-bots\-On\-No\-de()$, only during Compute 
  phases of times where it is edge-activated. Therefore from time $t_{false}$ 
  the robots of $S$ have their variables 
  $Number\-Ro\-bots\-Pre\-vious\-Edge\-Acti\-va\-tion$ equal to 3. Hence, from 
  time $t_{false}$ their predicates 
  $I\-Was\-Stuck\-On\-My\-No\-de\-And\-Now\-We\-Are\-More\-Ro\-bots()$ are 
  false, since the condition
  $Number\-Of\-Ro\-bots\-On\-No\-de() > Number\-Ro\-bots\-Pre\-vious\-Edge\-Acti\-va\-tion$
  is false. 
  
  Since from time $t_{false}$, the predicates 
  $We\-Are\-Stuck\-In\-The\-Same\-Dire\-ction()$ of the robots of $S$ are also 
  false, then from time $t_{false}$ the robots of $S$ always consider the same
  global direction.
 
  Without lost of generality, assume that, from time $t_{false}$, the robots of 
  $S$ consider the clockwise direction. All the edges of $\mathcal{G}$ except 
  $e$ are infinitely often present, therefore the robots of $S$ reach node $u$ 
  in finite time. However $e$ is missing forever, hence in finite time, the 
  predicates $We\-Are\-Stuck\-In\-The\-Same\-Dire\-ction()$ of all the robots 
  are true. This leads to a contradiction.
  
  \item [Case 2: $\mathbf{\textpipe S \textpipe}$ $\mathbf{=}$ $\mathbf{2}$.]
  
  Assume, without lost of generality, that $T$ is formed of $r_{1}$ and $r_{2}$.

  If, after $t_{false}$, the 2-long-lived tower does not meet $\mathbf{r_{3}}$, 
  then by similar arguments than the one used for the case 1 we prove that there
  is a contradiction.
   
  Now consider the case where the 2-long-lived tower meets $\mathbf{r_{3}}$.       
   If at a time $t' > t_{false}$, the robots of $S$ meet $r_{3}$ it is either 
   because the two entities (the tower and $r_{3}$) move during the Move phase 
   of time $t' - 1$ while considering two opposed global directions or because
   the two entities consider the same global direction but one of the entity
   cannot move (an edge is missing in its direction) during the Move phase 
   of round $t' - 1$. 
   Let $t_{act}' \geq t'$ be the first time after time $t'$ included where the
   three robots are edge-activated. All the edges of $\mathcal{G}$ except $e$ 
   are infinitely often present therefore $t_{act}'$ exists. In both cases,
   thanks to the update at time $t' - 1$ of the variables
   $Has\-Moved\-Pre\-vious\-Edge\-Acti\-va\-tion$ and 
   $Number\-Ro\-bots\-Pre\-vious\-Edge\-Acti\-va\-tion$ of the robots, during 
   the Move phase of time $t_{act}'$ the robots of the two entities consider 
   opposed global directions. 
   The two entities separate them during the Move phase of this 
   time. Moreover, from this separation, as long as $r_{3}$ is alone on its node
   it does not change the global direction it considers. Similarly, from this
   separation, as long as the robots of $S$ do not meet $r_{3}$, their 
   predicates
   $I\-Was\-Stuck\-On\-My\-No\-de\-And\-Now\-We\-Are\-More\-Ro\-bots()$ are 
   false, and since from time $t_{false}$ their predicates 
   $We\-Are\-Stuck\-In\-The\-Same\-Dire\-ction()$ are false, they do not change 
   the global direction they consider. 
   
   Hence, in finite time after time $t_{act}'$ the two entities are located 
   respectively on the two extremities of $e$. However $e$ is missing forever, 
   therefore in finite time, the predicates 
   $We\-Are\-Stuck\-In\-The\-Same\-Dire\-ction()$ of the robots of $T$ are true.
   This leads to a contradiction.
%
 \end{description}
 In both cases a contradiction is highlighted. Therefore, after $t_{false}$ all
 the robots of $S$ have their predicates 
 $We\-Are\-Stuck\-In\-The\-Same\-Dire\-ction()$ infinitely often true. Then we
 can use Corollary~\ref{breakingTowerCorollary} to prove that $T$ is broken, 
 which leads to a contradiction with the fact that $\theta = [t_{s}, +\infty[$.
\end{proof}

\begin{lemma} \label{noThreeShortLivedTower}
 No execution containing only configurations without long-lived tower reaches a
 configuration where three robots form a tower.
\end{lemma}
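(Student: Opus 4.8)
The plan is to argue by contradiction. Suppose an execution $\mathcal{E}$ that never contains a long-lived tower nevertheless reaches a configuration in which the three robots $r_1$, $r_2$ and $r_3$ are co-located. Around that configuration they stay co-located during a maximal interval of consecutive rounds, so there is a maximal tower $T=(S,[t_s,t_e])$ with $S=\{r_1,r_2,r_3\}$; note that whenever the three robots move together they necessarily traverse the unique edge joining their common positions, so the ``same edge'' requirement of the tower definition is automatically met. Since $\mathcal{E}$ has no long-lived tower, $T$ must be short-lived, i.e. there is no edge-activation of all three robots during $[t_s,t_e[$.

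The first step is to show that the robots remain on a single node throughout $[t_s,t_e]$ and that $T$ ends exactly at its first edge-activation. A robot can move only when it is edge-activated, hence the robots of $S$ never move during $[t_s,t_e[$ and all stay on the node $v_0$ on which they stand at time $t_s$. Since at most one edge can be eventually missing in a connected-over-time ring, $v_0$ has a recurrent adjacent edge and is therefore edge-activated at some first time $t_{act}\geq t_s$. If $t_{act}<t_e$, then $T$ contains an edge-activation of all its robots strictly inside its interval and is long-lived, a contradiction; and if no edge is adjacent to $v_0$ during round $t_e$, then the robots cannot move during that round and are still co-located on $v_0$ at time $t_e+1$, contradicting the maximality of $t_e$. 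Hence $t_{act}=t_e$: during round $t_e$ at least one edge is adjacent to $v_0$ and the three robots separate during the Move phase of that round.

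Next I would analyse this separation. Let $u_0$ and $w_0$ be the two neighbours of $v_0$; at time $t_e+1$ each robot stands on $v_0$, $u_0$ or $w_0$, and they are not all on the same node. The pattern with exactly one robot on each of $u_0$, $v_0$ and $w_0$ is impossible: reaching $u_0$ (resp. $w_0$) from $v_0$ during round $t_e$ requires the edge $\{u_0,v_0\}$ (resp. $\{v_0,w_0\}$) to be present during that round, so both edges adjacent to $v_0$ would be present; but then the robot supposed to remain on $v_0$ points, through its variable $dir$, to one of these two present edges and is forced to move. Therefore exactly two robots, say $r_a$ and $r_b$, are co-located at time $t_e+1$, on some node $p\in\{u_0,v_0,w_0\}$; whether $p=v_0$ (both stayed) or $p\neq v_0$ (both traversed the edge $\{v_0,p\}$), $r_a$ and $r_b$ behave identically during round $t_e$, so they are co-located at every time of $[t_s,t_e+1]$.

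Finally I would derive the contradiction. Take the maximal interval $\theta^\star$ of consecutive rounds containing $[t_s,t_e+1]$ throughout which $r_a$ and $r_b$ remain co-located; since $r_c$ is not co-located with them at time $t_e+1\in\theta^\star$, the robot set $\{r_a,r_b\}$ is maximal for $\theta^\star$, so $(\{r_a,r_b\},\theta^\star)$ is a genuine tower. Its right endpoint is at least $t_e+1>t_e$, and $t_e\in[t_s,t_e+1]\subseteq\theta^\star$, so $t_e$ lies strictly inside $\theta^\star$; moreover at time $t_e$ the robots $r_a$ and $r_b$ are on $v_0$, which is edge-activated during round $t_e$. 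Thus there is an edge-activation of all robots of this tower strictly inside its interval, so $(\{r_a,r_b\},\theta^\star)$ is a long-lived tower (Definition~\ref{LongLivedTower}), contradicting the assumption that $\mathcal{E}$ contains only configurations without long-lived tower. The step I expect to be the main obstacle is the separation analysis together with the bookkeeping on tower maximality: carefully ruling out the symmetric one-robot-per-node split and making sure the $2$-tower extracted from the split is genuinely maximal and genuinely long-lived.
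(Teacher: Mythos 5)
Your proof is correct and follows essentially the same strategy as the paper's: the $3$-robot tower must break at its first edge-activation, two of the three robots necessarily remain co-located through that activation, and the surviving pair constitutes a $2$-long-lived tower, contradicting the hypothesis. The only cosmetic difference is that the paper obtains the surviving pair by pigeonhole on the two possible global directions, whereas you rule out the one-robot-per-node split positionally (and add careful bookkeeping on tower maximality); both are valid.
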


\begin{proof}
 Assume that there is no long-lived tower in the execution. The robots can cross
 at most one edge at each round. Each node has at most 2 adjacent edges in 
 $\mathcal{G}$. Moreover each robot considers at each instant time a direction.
 Assume, by contradiction that 3 robots form a tower $T$ at a time $t$. Let 
 $t' \geq t$ be the first time after time $t$ where the robots of $T$ are
 edge-activated. There is no 3-long-lived tower in the execution, therefore 
 during the Move phase of time $t'$, the robots of $T$ consider two opposed 
 global directions. However there are three robots, and two different global
 directions, hence, during the Move phase of time $t'$, two robots consider the
 same global direction. Therefore there exists a 2-long-lived tower, which leads
 to a contradiction.
\end{proof}

\begin{lemma} \label{leLemmeQuilFallait}
 In every execution, if a tower involving 3 robots is formed at time $t$, then
 at time $t - 1$ a 2-long-lived tower is present in $\varepsilon$. 
\end{lemma}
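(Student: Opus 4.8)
The plan is to argue directly by looking one round back from the formation of the $3$-robot tower and extracting, by a short case analysis on the robots' positions at time $t-1$, a $2$-robot tower that turns out to be long-lived. Fix a time $t$ at which a tower involving the three robots $r_1,r_2,r_3$ is formed, so all three are located on a common node $v$ at time $t$; note that necessarily $t\geq 1$ and, moreover, the three robots are \emph{not} all located on a same node at time $t-1$: were they all on a node $x$ at time $t-1$, then during the Move phase of round $t-1$ they would either all stay (if $x=v$) or all traverse the edge joining $x$ and $v$ (if $x$ is a neighbour of $v$), and the tower would already be present at time $t-1$, contradicting that it is formed at $t$.

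Next I would locate the robots at time $t-1$. In a ring each robot that is on $v$ at time $t$ was, at time $t-1$, on $v$ itself or on one of the two neighbours $u$ (counter-clockwise) and $w$ (clockwise) of $v$. Write $a$, $b$, $c$ for the numbers of robots that at time $t-1$ were on $u$ (hence moved clockwise during round $t-1$), on $v$ (hence did not move during round $t-1$), and on $w$ (hence moved counter-clockwise), respectively; then $a+b+c=3$ and, by the previous paragraph, not all three lie in the same bucket. I would then eliminate the distribution $(a,b,c)=(1,1,1)$: in that case the robot coming from $u$ crosses the edge $uv$ and the robot coming from $w$ crosses the edge $wv$ during round $t-1$, so both edges incident to $v$ are present during that round; but then the robot that was on $v$ at time $t-1$ could not have stayed on $v$, a contradiction. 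Hence exactly one bucket contains two robots, i.e. at time $t-1$ two robots, call them $r_i$ and $r_j$, are co-located (on $u$, on $v$, or on $w$), while the third robot is on a different node.

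Then I would verify that $r_i$ and $r_j$ form a $2$-long-lived tower present at time $t-1$. They are co-located at time $t-1$ and again on $v$ at time $t$, and during round $t-1$ they either both stay on $v$ (the bucket-$b$ case, in which their pointed edge is the unique edge incident to $v$ that is missing during that round, hence the same for both) or both traverse the same edge; in particular they satisfy the ``traverse the same edge'' requirement of the tower definition over $[t-1,t]$. Taking the time-maximal tower $T'=(S',\theta')$ with $t-1\in\theta'$ that contains both of them, we get $S'=\{r_i,r_j\}$ (no third robot shares their node at time $t-1$, so $|S'|=2$) and $t_e'\geq t$, hence $t-1\in[t_s',t_e')$. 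Finally, in each of the three surviving bucket cases the node hosting $r_i$ and $r_j$ at time $t-1$ has an incident edge present during round $t-1$ (the edge they cross, or the edge crossed by the third robot), so $r_i$ and $r_j$ are edge-activated at round $t-1$; this is an edge-activation of all robots of $S'$ inside $[t_s',t_e')$, so $T'$ is long-lived by Definition~\ref{LongLivedTower}, which is what we want.

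I expect no conceptual obstacle here; the one genuinely useful observation is the elimination of the $(1,1,1)$ distribution, and the only thing to be careful with is the bookkeeping of the case analysis — in particular that a robot on $v$ at both $t-1$ and $t$ really did not move, that the extracted $T'$ is of size exactly $2$, and that it extends strictly past time $t-1$ so that the time-$(t-1)$ edge-activation witnesses long-livedness.
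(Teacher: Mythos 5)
Your proof is correct, and its combinatorial heart is the same as the paper's: the decisive observation in both is that three robots sitting on three distinct nodes $u$, $v$, $w$ at time $t-1$ cannot all converge on $v$, since the moves of the two outer robots witness that both edges incident to $v$ are present during round $t-1$, which would force the robot on $v$ to leave. This is exactly your elimination of the $(1,1,1)$ distribution and the paper's ``three isolated robots'' case. Where you genuinely diverge is in how the remaining cases are organised. The paper argues by exclusion: it enumerates the configuration types possible at $t-1$ in the absence of a $2$-long-lived tower (a $3$-robot tower, a $2$-short-lived tower, three isolated robots) and derives a contradiction in each, handling the $2$-short-lived case via the behavioural fact that such a tower's robots adopt opposed global directions upon edge-activation and therefore separate. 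You instead construct the witness directly: the pigeonhole over the buckets $u$, $v$, $w$ produces two co-located robots at $t-1$, and co-location at both $t-1$ and $t$ together with the edge-activation at round $t-1$ (witnessed by whichever edge into $v$ is crossed) is exactly Definition~\ref{LongLivedTower}, once one notes that the time-maximal interval of their tower extends to at least $t$, so that $t-1$ lies in the half-open interval required by that definition. Your route is slightly more self-contained --- it never needs the ``short-lived towers break when edge-activated'' characterisation --- and it makes explicit the point $t_e'\geq t$ that the paper leaves implicit; the paper's version, in exchange, reuses the configuration-type case analysis that recurs in several neighbouring lemmas. Both arguments are sound.
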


\begin{proof}
 Assume that a tower $T$ of 3 robots is formed at time $t$.
 
 First note that if there exists a 2-long-lived tower $T' = (S, [t_{s}, t_{e}])$
 such that $t - 1 \in [t_{s}, t_{e}[$, it is possible for $T$ to be formed.
 
 Now we prove that if there is no 2-long-lived tower at time $t- 1$ then $T$ 
 cannot be formed at time $t$. Assume that at time $t - 1$ there is no 
 2-long-lived tower. Let us consider the three following cases.
 
 \begin{description}
  \item [Case 1:] \textbf{There is a tower $\mathbf{T'}$ of 3 robots at time 
  $\mathbf{t - 1}$.}
  The tower $T'$ must break at time $t - 1$, otherwise there is a contradiction 
  with the fact that $T$ is formed at time $t$. Hence the robots of $T'$ are 
  edge-activated at time $t - 1$. While executing \PEF~the robots consider a 
  direction at each round. There are only two possible directions. Therefore, 
  for the tower $T'$ to break at time $t - 1$, two robots of $T'$ consider a
  same global direction, while the other robot of $T'$ considers the opposite 
  global direction. This implies that the three robots cannot be present on a 
  same node at time $t$, since $n \geq 4$.
 
  \item [Case 2:] \textbf{There is a 2-short-lived tower $T'$ at time 
  $\mathbf{t - 1}$.}
  For the three robots to form $T$ at time $t$, they must be edge-activated at 
  time $t - 1$. By definition of a 2-short-lived tower, the two robots of $T'$ 
  consider two opposed global directions during the Move phase of time $t - 1$.
  Since the robots can cross at most one edge at each round, it is not possible
  for the three robots to be on a same node at time $t$, which leads to a 
  contradiction with the fact that $T$ is formed at time $t$.
 
  \item [Case 3:] \textbf{There are 3 isolated robots at time $\mathbf{t - 1}$.}
  For the three robots to form $T$ at time $t$, they must be edge-activated at 
  time $t - 1$. The robots can cross at most one edge at each round. Each node
  has at most 2 adjacent edges present in $\mathcal{G}$. Moreover each robot considers
  at each instant time a direction. Therefore it is not possible for the three
  robots to be on a same node at time $t$, which leads to a contradiction with 
  the fact that $T$ is formed at time $t$.
 \end{description}
\end{proof}

\begin{lemma} \label{noThreeLongLivedTower}
 Every execution starting from a configuration without a 3-long-lived tower
 cannot reach a configuration with a 3-long-lived tower.
\end{lemma}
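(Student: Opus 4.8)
The plan is to argue by contradiction and to show that a $3$-long-lived tower can only be present from the very start of the execution, never created later. Restricting attention to the suffix after $t_{max}$, we may assume every robot is in a coherent state; suppose the execution starts without a $3$-long-lived tower yet contains one, say $T = (S,[t_s,t_e])$ with $|S| = 3$, so that $t_s$ is strictly past the start and $t_s - 1$ lies in the coherent regime. First I would look at time $t_s - 1$. By maximality of the time interval $[t_s,t_e]$ of the tower, the three robots cannot all be co-located at $t_s - 1$ (otherwise $[t_s-1,t_e]$ would again be a tower interval for $S$, since robots of a tower that move all traverse the same edge). Lemma~\ref{leLemmeQuilFallait} then provides a $2$-long-lived tower $T' = (S',[a,b])$ with $t_s - 1 \in [a,b[$; write $S' = \{r_a, r_b\}$ and let $r$ be the remaining robot, which is therefore alone at $t_s - 1$, on a node $x$ distinct from the node $v$ hosting $T'$. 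Because $r$ is isolated with a coherent state, both predicates \textit{WeAreStuckInTheSameDirection()} and \textit{IWasStuckOnMyNodeAndNowWeAreMoreRobots()} are false for $r$ during round $t_s - 1$, so $r$ does not change direction then.

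Next I would pin down how the merge into $T$ happens. Applying Lemma~\ref{propertyLongLivedTower} to the long-lived tower $T$ itself, all three robots consider one common global direction at the Look phase of $t_s$, i.e.\ at the Move phase of $t_s - 1$; call it $d$. With the previous paragraph, $r$ also considers $d$ at the Look phase of $t_s - 1$. Hence, during the Move phase of $t_s - 1$, each of the three robots crosses its $d$-edge if present and otherwise stays. Since all three reach a common node $w$ at time $t_s$ while $x \neq v$ and $n \geq 4$, only two configurations are possible: (A) $r_a, r_b$ cross their $d$-edge from $v$ onto $x = w$ while $r$ stays on $x$ (its $d$-edge being absent at $t_s-1$); or (B) $r$ crosses its $d$-edge from $x$ onto $v = w$ while $r_a, r_b$ stay on $v$ (their $d$-edge being absent at $t_s - 1$). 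In both cases exactly one of the two entities moves (along a $d$-edge that is present, so its former node is edge-activated at $t_s-1$) and the other stays (its $d$-edge absent, but its $\overline{d}$-edge — the one just used by the mover — present, so its node is edge-activated at $t_s-1$ as well). In particular, in each configuration all three robots still point in direction $d$ at time $t_s$, hence at every time until the next edge-activation of $T$.

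Finally I would examine the first time $t_{act} \in [t_s, t_e[$ at which $T$ is edge-activated (it exists because $T$ is long-lived). Between $t_s$ and $t_{act}$ no robot of $T$ is edge-activated, so the directions stay equal to $d$ and the variables \textit{NumberRobotsPreviousEdgeActivation} and \textit{HasMovedPreviousEdgeActivation} of the three robots keep the values recorded at their common previous edge-activation of round $t_s - 1$: the mover recorded a crossing (so \textit{HasMovedPreviousEdgeActivation} $= true$), the stayer recorded no crossing with post-compute direction $d$ pointing at an absent edge (so \textit{HasMovedPreviousEdgeActivation} $= false$); moreover $r$ recorded a robot count of $1$ and $r_a, r_b$ a count of $2$. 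Now at $t_{act}$, with three robots on the node, \textit{WeAreStuckInTheSameDirection()} is false for everyone because the clause $\textit{NumberOfRobotsOnNode()} = \textit{NumberRobotsPreviousEdgeActivation()}$ fails ($3 \notin \{1,2\}$), while \textit{IWasStuckOnMyNodeAndNowWeAreMoreRobots()} — all of whose other clauses hold at an edge-activation with three co-located robots — reduces to $\lnot\textit{HasMovedPreviousEdgeActivation}$, so it holds exactly for the entity that stayed at $t_s - 1$. Thus at $t_{act}$ the stayer executes \textsc{OppositeDirection} and now points in $\overline{d}$, while the mover keeps pointing in $d$. Since the node of $T$ is edge-activated at $t_{act}$, at least one of its two incident edges is present, and a short check over the three remaining possibilities shows that the mover and the stayer finish round $t_{act}$ on two distinct nodes. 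So $T$ is broken at $t_{act} < t_e$, contradicting the definition of $T$; this contradiction proves the lemma.

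The step I expect to be the main obstacle is the bookkeeping in the last paragraph: getting the post-round-$(t_s-1)$ values of \textit{HasMovedPreviousEdgeActivation} and \textit{NumberRobotsPreviousEdgeActivation} exactly right — in particular for a robot of the $2$-long-lived tower, which may already have executed \textsc{GiveDirection} or \textsc{OppositeDirection} inside round $t_s-1$ before \textsc{Update} runs — and then verifying, in the two symmetric merge configurations (A) and (B), that the edge-activation at $t_{act}$ genuinely forces a separation no matter which incident edge of the tower's node happens to be present at that time.
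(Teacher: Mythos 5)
Your proof is correct and follows essentially the same route as the paper's: argue by contradiction on the first 3-long-lived tower, invoke Lemma~\ref{leLemmeQuilFallait} to obtain a 2-long-lived tower at $t_s-1$, and use the values of $HasMovedPreviousEdgeActivation$ and $NumberRobotsPreviousEdgeActivation$ recorded at round $t_s-1$ to force opposed directions, hence separation, at the first edge-activation $t_{act}$. In fact your version is more explicit than the paper's (which compresses the bookkeeping into ``thanks to the update of the variables\ldots''), and your use of Lemma~\ref{propertyLongLivedTower} to discard the opposed-direction merge case is a harmless simplification of the paper's two-case treatment.
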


\begin{proof}
 Assume that $\mathcal{E}$ starts from a configuration which does not contain a
 3-long-lived tower. By contradiction, let $\gamma$ be the first configuration
 of $\mathcal{E}$ containing a 3-long-lived tower $T = (S, [t_{s}, t_{e}])$.
 
 Let $t_{act} \geq t_{s}$ be the first time after time $t_{s}$ where the 3 
 robots of $T$ are edge-activated. By definition of a long-lived tower, 
 $t_{act}$ exists.
 
 Lemma~\ref{leLemmeQuilFallait} implies that the configuration at time 
 $t_{s} - 1$ contains a 2-long-lived tower. Hence, since $\gamma$ contains the first 
 3-long-lived tower of $\mathcal{E}$, at time $t_{s}$ a 2-long-lived tower and a 
 robot meet to form $T$. The meeting between these two entities can happen
 either because both of them move in opposed global directions during the Move
 phase of time $t_{s} - 1$, or because, during the Move phase of time
 $t_{s} - 1$, the two entities consider the same global direction but one of the
 entity cannot move (an edge is missing in its direction). In both cases; thanks
 to the update of the variables $Has\-Moved\-Pre\-vious\-Edge\-Acti\-va\-tion$
 and $Number\-Ro\-bots\-Pre\-vious\-Edge\-Acti\-va\-tion$
 at time $t_{s} - 1$; during the Move phase of time $t_{act}$ the two entities
 consider opposed global directions. Hence, the two entities separate during the
 Move phase of time $t_{act}$, therefore there is a contradiction with the fact
 that $T$ is a 3-long-live tower.
\end{proof}

\begin{lemma} \label{NoThreeRobotsInSameDirection}
Let $\gamma$ be a configuration such that all but one robots consider the same 
global direction. Then starting from $\gamma$, no execution without
long-lived tower can reach a configuration where all robots consider the same
global direction.
\end{lemma}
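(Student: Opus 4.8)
The plan is to argue by contradiction. Fix an execution $\mathcal{E}$ of Algorithm~\ref{algo:SSPE} that starts from $\gamma$ and contains no long-lived tower, and write its configurations as $\gamma=\gamma_0,\gamma_1,\gamma_2,\ldots$. Say that a configuration is \emph{unanimous} when all three robots consider the same global direction; by Observation~\ref{2robotsInTheSameDirection}, a non-unanimous configuration has exactly two robots agreeing on a global direction and the third one considering the opposite direction. By hypothesis $\gamma_0$ is not unanimous. Assume for contradiction that some configuration of $\mathcal{E}$ is unanimous, and let $t+1$ be the first time this happens, so $\gamma_t$ is not unanimous. I will obtain a contradiction by examining the round that transforms $\gamma_t$ into $\gamma_{t+1}$.

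I would first record two simple facts. (a) In the coherent suffix of the execution (Lemma~\ref{convergence}), an isolated robot never changes the global direction it considers during a round: $WeAreStuckInTheSameDirection()$ requires $NumberOfRobotsOnNode()>1$, while $IWasStuckOnMyNodeAndNowWeAreMoreRobots()$ requires $NumberOfRobotsOnNode()>NumberRobotsPreviousEdgeActivation$ with $NumberRobotsPreviousEdgeActivation\geq1$; both fail for a lone robot. (b) Since $\mathcal{E}$ contains no long-lived tower, Lemma~\ref{noThreeShortLivedTower} forbids any $3$-tower in $\mathcal{E}$. Hence in $\gamma_t$ either the three robots occupy pairwise distinct nodes, or exactly two of them --- call them $a$ and $b$ --- are co-located while the third robot $c$ is isolated.

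I would then split into cases according to $\gamma_t$. If the three robots are isolated, none of them changes direction during round $t$ by (a), so $\gamma_{t+1}$ has the same directions as $\gamma_t$ and is therefore not unanimous --- contradiction. If $a$ and $b$ are co-located but their maximal co-location interval $[t_s,t_e]$ (with $t\in[t_s,t_e]$) is not edge-activated at round $t$, then the conjunct $ExistsEdgeOnOppositeDirection()$ of $WeAreStuckInTheSameDirection()$ and the conjunct $ExistsAdjacentEdge()$ of $IWasStuckOnMyNodeAndNowWeAreMoreRobots()$ both fail for $a$ and $b$, so neither changes direction; $c$ does not either by (a); again $\gamma_{t+1}$ keeps the directions of $\gamma_t$ --- contradiction. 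The crux is the last case: $a$ and $b$ are co-located and edge-activated at round $t$. The pair $(\{a,b\},[t_s,t_e])$ is a genuine tower (its interval is maximal and, on a ring, two co-located robots that move traverse the same edge), and it is short-lived since $\mathcal{E}$ has no long-lived tower. As co-located robots are edge-activated simultaneously, an edge-activation of $a,b$ at a round in $[t_s,t_e[$ would make the tower long-lived; therefore the edge-activation at round $t\in[t_s,t_e]$ forces $t=t_e$. By maximality of $[t_s,t_e]$, the robots $a$ and $b$ are on distinct nodes at time $t_e+1=t+1$. On a ring of size $n\geq4$, moving from a common node to two distinct nodes in a single round requires $a$ and $b$ to consider opposite global directions during the Move phase of round $t$, \ie in $\gamma_{t+1}$. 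Hence $\gamma_{t+1}$ is not unanimous --- contradiction, which completes the proof.

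I expect the last case to be the only delicate point: one has to keep precise track of the maximal tower interval (the equivalence between ``$a$ is edge-activated'' and ``$b$ is edge-activated'' for co-located robots, the fact that the first edge-activation of a short-lived tower is necessarily its last round, and the role of maximality), and to invoke the elementary ring-geometry observation that two robots leaving the same node in the same round must have chosen opposite directions. The isolated and non-activated cases are immediate once fact (a) and Lemma~\ref{noThreeShortLivedTower} are available.
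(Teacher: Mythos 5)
Your proposal is correct and follows essentially the same route as the paper's proof: the only way to reach unanimity is for some robot to change direction, which requires being in an edge-activated tower; $3$-towers are excluded by Lemma~\ref{noThreeShortLivedTower}, and an edge-activated $2$-short-lived tower leaves its two robots with opposed global directions, so unanimity is never reached. Your version merely makes explicit the case analysis and the coherence assumption behind ``an isolated robot never changes direction,'' which the paper leaves implicit.
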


\begin{proof}
 Assume that $\mathcal{E}$ does not contain long-lived tower and starts from a 
 configuration where all robots but one consider the same global direction. For
 the three robots to consider the same global direction at least one robot must 
 change the global direction it considers. While executing \PEF, the only way
 for a robot to change the global direction it considers is to form a tower and 
 to be edge-activated. By Lemma~\ref{noThreeShortLivedTower}, there is no tower
 of 3 robots in $\mathcal{E}$. Therefore, for at least one robot to change the
 global direction it considers, a 2-short-lived tower must be formed and the 
 robots of this tower must be edge-activated. However, by definition of a
 2-short-lived tower, once edge-activated, the two robots composing the 
 2-short-lived tower consider two opposed global directions. Therefore after 
 the edge-activation, the three robots do not consider the same global 
 direction.
\end{proof}

\begin{lemma} \label{separationSecondTower2Robots}
 Consider an execution containing no 3-long-lived tower. If a 2-long-lived tower
 $T = (S, [t_{s}, t_{e}])$, where $t_{e}$ is finite, is located at a node $u$ at
 round $t_{e}$, then the robot that does not belong to $S$ cannot be located at
 node $u$ during the Look phase of round $t_{e}$. Moreover, during the Look 
 phase of round $t_{e} + 1$, one robot of $S$ located at $u$ considers a global
 direction opposite to the one considered by the other robot of $S$ (which is
 no longer on $u$).
\end{lemma}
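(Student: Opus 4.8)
Throughout, write $S=\{r_1,r_2\}$ and let $r_3$ be the robot not in $S$. The plan begins by collecting what the long-lived character of $T$ gives us. Since $T$ is long-lived we have $t_s<t_e$ and there is an edge-activation $t_{act}\in[t_s,t_e[$ of both robots of $S$; hence by Lemma~\ref{propertyLongLivedTower} the robots $r_1,r_2$ are co-located and consider a common global direction $D$ from the Look phase of round $t_s$ to the Look phase of round $t_e$, and by Lemma~\ref{successive_bits_consideration} (whose hypothesis at time $t_e$ is granted by $t_{act}$) they have the same value of $WeAreStuckInTheSameDirection()$ and the same value of $IWasStuckOnMyNodeAndNowWeAreMoreRobots()$ at round $t_e$. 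Finally, as $t_e$ is finite the interval $[t_s,t_e]$ of the tower is maximal, so $r_1$ and $r_2$ are not co-located at round $t_e+1$ and therefore separate during the Move phase of round $t_e$.

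I would then establish the ``separation mechanics'', which already yields the second claim. As $r_1$ and $r_2$ consider $D$ at the Look phase of round $t_e$, a separation during the Move phase of round $t_e$ is possible only if at least one of them changes its considered direction during the Compute phase of round $t_e$, which happens only through a call to \textsc{GiveDirection} (triggered by $WeAreStuckInTheSameDirection()$) or to \textsc{OppositeDirection} (triggered by $IWasStuckOnMyNodeAndNowWeAreMoreRobots()$); by the previous paragraph $r_1$ and $r_2$ make the same call. It cannot be \textsc{OppositeDirection}: both robots would then flip from $D$ to the same opposite direction and hence would again either both move or both stay, contradicting their separation (and in any case the two triggering predicates are mutually exclusive, one requiring $NumberOfRobotsOnNode()=NumberRobotsPreviousEdgeActivation$ and the other the strict inequality). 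Hence $WeAreStuckInTheSameDirection()$ holds for $r_1$ and $r_2$ at round $t_e$; in particular the edge in direction $D$ is absent while the edge in direction $\overline{D}$ is present at round $t_e$, and both robots are edge-activated. After their calls to \textsc{GiveDirection} each of $r_1,r_2$ points to $D$ or to $\overline{D}$; if they agreed they would not separate, so one points to $D$ and the other to $\overline{D}$. The robot pointing to $D$ cannot cross the absent edge and stays on $u$; the robot pointing to $\overline{D}$ crosses the present edge and leaves $u$ (onto a genuine neighbour since $n\geq 4$). Since $D$ and $\overline{D}$ are opposite, this is exactly the second claim.

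For the first claim I would argue by contradiction, assuming $r_3$ is on $u$ at the Look phase of round $t_e$, so $NumberOfRobotsOnNode()(r_1,t_e)=3$. By the paragraph above $WeAreStuckInTheSameDirection()(r_1,t_e)$ holds, so $NumberRobotsPreviousEdgeActivation(r_1,t_e)=3$; thus at $r_1$'s last edge-activation $t_{pact}$ before round $t_e$ (which lies in $[t_s,t_e[$ because $T$ is long-lived) the three robots were co-located, and edge-activated. Since $t_{pact}$ is the last edge-activation of $r_1$ before $t_e$, neither $r_1$ nor $r_2$ is edge-activated, hence neither can move, during $]t_{pact},t_e[$; consequently $r_1,r_2$ sit on $u$ throughout $[t_{pact}+1,t_e]$ and no edge is incident to $u$ during $]t_{pact},t_e[$. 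Now split on the Move phase of round $t_{pact}$. If $r_3$ stays with $r_1,r_2$ there, the three robots are co-located throughout $[t_{pact},t_e]$ and edge-activated at $t_{pact}$, so a maximal extension of that interval carries a $3$-long-lived tower, contradicting the hypothesis on $\mathcal{E}$. If instead $r_3$ leaves $r_1,r_2$ at round $t_{pact}$, then to be back on $u$ at round $t_e$ it must cross an edge incident to $u$ during some round of $[t_{pact}+1,t_e-1]$, which would make $r_1,r_2$ edge-activated at a round of $]t_{pact},t_e[$, a contradiction; the only remaining possibility, returning already at round $t_{pact}+1$, is excluded because $r_3$ left. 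Hence $r_3\notin u$ at the Look phase of round $t_e$.

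The step I expect to be the main obstacle is the first claim: making the position-tracking on $[t_{pact},t_e]$ rigorous — above all ruling out that $r_3$ leaves and comes back — hinges on exploiting that $t_{pact}$ is the \emph{last} edge-activation of $r_1$ before $t_e$ (so $u$ is isolated in between, which freezes $r_1,r_2$ and anything co-located with them), plus some care with the boundary situations ($t_{pact}=t_e-1$, $t_{pact}=t_s$, the arbitrary initial configuration). An alternative route would forgo the $NumberRobotsPreviousEdgeActivation$ argument and instead note, via Lemmas~\ref{leLemmeQuilFallait} and~\ref{noThreeLongLivedTower}, that a tower of three robots formed inside $[t_s,t_e]$ must break with $r_1,r_2$ staying together — which contradicts their separation at $t_e$ — but it runs into the same kind of bookkeeping.
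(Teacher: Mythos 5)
Your proof is correct and follows essentially the same route as the paper's: the second claim via the observation that separation at round $t_e$ forces a synchronized call to \textsc{GiveDirection} (hence $WeAreStuckInTheSameDirection()$ true, hence the edge pattern that pins one robot on $u$ and releases the other in the opposite direction), and the first claim via the bookkeeping of $NumberRobotsPreviousEdgeActivation$ relative to the last edge-activation before $t_e$. The only cosmetic difference is the direction of the contradiction in the first claim: the paper shows the variable would equal $2$ (so the predicate fails and the tower cannot break), whereas you show the predicate forces it to equal $3$ and then exclude that via the no-3-long-lived-tower hypothesis and the frozen interval after the last edge-activation --- the same mechanism read backwards.
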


\begin{proof}
 Assume that $\mathcal{E}$ does not contain 3-long-lived tower. Assume that 
 $r_{1}$ and $r_{2}$ are involved in a 2-long-lived tower
 $T = (S, [t_{s}, t_{e}])$. 
 
 After the Compute phase of time $t_{e}$, $r_{1}$ and $r_{2}$ consider two
 opposed global directions, otherwise there is a contradiction with the fact 
 that $T$ is broken at time $t_{e}$. Directions of robots can be modified only 
 during Compute phases of rounds, therefore during the Look phase of time 
 $t_{e} + 1$, the robots of $T$ still consider two opposed global directions.
 
 Let $t_{act} \in [t_{s}, t_{e}[$ be the first time after time $t_{s}$ where the
 robots of $T$ are edge-activated. By definition of a 2-long-lived tower 
 $t_{act}$ exists. By Lemma~\ref{successive_bits_consideration}, from the Look
 phase of time $t_{act} + 1$ to the Look phase of time $t_{e}$ included, $r_{1}$
 and $r_{2}$ have the same values of predicates
 $We\-Are\-Stuck\-In\-The\-Same\-Dire\-ction()$ and  
 $I\-Was\-Stuck\-On\-My\-No\-de\-And\-Now\-We\-Are\-More\-Ro\-bots()$. 
 Therefore, while executing \PEF, the only way for $r_{1}$ and $r_{2}$ to 
 consider two opposed global directions during the Move phase of time $t_{e}$ is
 to execute the function \textsc{GiveDirection} and hence to have their 
 predicates $We\-Are\-Stuck\-In\-The\-Same\-Dire\-ction()$ true. Therefore, at
 time $t_{e}$ the condition $\lnot Exists\-Edge\-On\-Current\-Dire\-ction()
 \wedge Exists\-Edge\-On\-Oppo\-si\-te\-Dire\-ction()$ is true. Hence, since 
 during the Move phase of time $t_{e}$, $r_{1}$ and $r_{2}$ consider two opposed
 global directions, during the Look phase of time $t_{e} + 1$, one of the robot
 of $T$ is still on node $u$, while the other robot of $T$ is on an adjacent 
 node of $u$.
 
 Now assume, by contradiction that $r_{3}$ is on node $u$ during the Look phase
 of time $t_{e}$. Let $t_{last}$ be the last time in $[t_{s}, t_{e}[$, where the
 robots of $T$ are edge-activated. By definition of a 2-long-lived tower 
 $t_{last}$ exists. There is no 3-long-lived tower, hence if $r_{3}$ is on node 
 $u$ at time $t_{e}$, it forms a 3-short-lived tower with the robots of $T$ at
 time $t_{last} + 1$. 
 
 Note that at time $t_{last}$, $r_{3}$ cannot be located on the same node as the 
 robots of $T$, otherwise since $n \geq 4$ the three robots cannot form a 
 3-short-lived tower at time $t_{last} + 1$. This implies that at time 
 $t_{last}$ the function \textsc{Update}, updates the variables 
 $Number\-Ro\-bots\-Pre\-vious\-Edge\-Acti\-va\-tion$ of the robots of $T$ to 
 $2$. Since the variables are updated only during the Compute phases of times 
 where the robots are edge-activated, during the Look phase of time $t_{e}$, the
 robots of $T$ have their variables 
 $Number\-Ro\-bots\-Pre\-vious\-Edge\-Acti\-va\-tion$ still equal to 2. Since 
 $r_{3}$ is on node $u$ during the Look phase of time $t_{e}$, the predicates
 $Number\-Of\-Ro\-bots\-On\-No\-de()$ of the robots of $T$ are not equal to 
 their variables $Number\-Ro\-bots\-Pre\-vious\-Edge\-Acti\-va\-tion$. Therefore
 the robots of $T$ cannot execute the function \textsc{GiveDirection} at time
 $t_{e}$, and hence are not able to separate them, which leads to a 
 contradiction with the fact that $T$ is broken at time $t_{e}$.
\end{proof}

\begin{lemma} \label{existsTower2Robots}
 Consider an execution $\mathcal{E}$ without any 3-long-lived tower. If a 
 2-long-lived tower $T$ is formed at a time $t_{s}$, then during the Look 
 phase of time $t_{s} - 1$, a tower $T'$ of 2 robots involving only one robot of
 $T$ is present. Moreover, during the Move phase of time $t_{s} - 1$, the robot
 of $T$ involved in $T'$ does not move while the other robot of $T$ moves.
\end{lemma}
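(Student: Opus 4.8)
The plan is to denote the two robots of $T$ by $p$ and $q$ and the third robot by $s$, and to split the argument into two parts: first, that exactly one of $p$ and $q$ moves during the Move phase of round $t_s-1$; second, that the one that does not move is co-located with $s$ during the Look phase of round $t_s-1$.

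First I would settle the movement claim. Since $T$ is formed at $t_s$, the robots $p$ and $q$ are not co-located at round $t_s-1$ but are co-located at round $t_s$, say on a node $u$; hence at least one of them moves during the Move phase of round $t_s-1$. If both moved, they would reach $u$ from its two distinct neighbours while considering opposite global directions during the Move phase of round $t_s-1$; as no Compute phase occurs between that Move phase and the Look phase of round $t_s$, they would still consider opposite global directions during the Look phase of round $t_s$, contradicting Lemma~\ref{propertyLongLivedTower}. So exactly one of them moves; call it $p$ (without loss of generality) and let $q$ be the other, so that $q$ is on $u$ at rounds $t_s-1$ and $t_s$ while $p$ crosses the edge from some neighbour $a$ of $u$ to $u$ during the Move phase of round $t_s-1$. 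In particular the other edge $(u,b)$ of $u$ (with $b\neq a$, as $n\geq 4$) is absent at round $t_s-1$ — otherwise $q$, which is edge-activated at round $t_s-1$ but does not move, could not be pointing to a missing edge; and $q$ still points to $b$ during the Look phase of round $t_s$.

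For the co-location claim I would argue by contradiction, assuming $s$ is not on $u$ during the Look phase of round $t_s-1$ (its exact location is irrelevant). Let $t'\geq t_s$ be the first round at which $q$ is edge-activated. Since $p$ and $q$ share node $u$, neither of them is edge-activated strictly between $t_s$ and $t'$, so over this range they can neither move nor change direction (both direction-changing predicates require an adjacent present edge); thus at round $t'$ they are still together on $u$, both still pointing to $b$, and their last edge-activation before $t'$ is round $t_s-1$. From the call to \textsc{Update} at round $t_s-1$: since neither $p$ nor $s$ is on $u$ then, $q$'s variable $NumberRobotsPreviousEdgeActivation$ equals $1$; $q$'s variable $HasMovedPreviousEdgeActivation$ is false; and $p$'s variable $HasMovedPreviousEdgeActivation$ is true. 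Therefore, during the Compute phase of round $t'$, the predicate $IWasStuckOnMyNodeAndNowWeAreMoreRobots()$ holds for $q$ (there are now at least two robots on $u$, strictly more than $1$) but not for $p$, while $WeAreStuckInTheSameDirection()$ holds for neither. Hence $q$ calls \textsc{OppositeDirection} and flips to pointing at $a$, whereas $p$ keeps pointing at $b$: they consider opposite global directions during the Move phase of round $t'$. As at least one edge of $u$ is present at round $t'$, at least one of $p,q$ crosses an edge, so $p$ and $q$ are no longer co-located at round $t'+1$; hence $t_e=t'$, and there is no edge-activation of $\{p,q\}$ in $[t_s,t_e[\,=[t_s,t'[$, contradicting that $T$ is a long-lived tower. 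Therefore $s$ is on $u$ during the Look phase of round $t_s-1$, so $q$ and $s$ form there a tower $T'$ of two robots that involves only $q$ among the robots of $T$; and, by the first part, $q$ (the robot of $T$ in $T'$) does not move while $p$ (the other robot of $T$) moves during the Move phase of round $t_s-1$ — which is the claim.

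The main obstacle is the contradiction argument in the last paragraph, that is, ruling out that $p$ and $q$ meet at $t_s$ "from scratch" with $q$ having waited alone. The crux is the state-variable asymmetry this would create: the mover $p$ has its $HasMovedPreviousEdgeActivation$ flag set, while the waiter $q$ witnesses a strict increase of the robot count on its node relative to its last edge-activation; this forces exactly one of them (namely $q$) to flip its direction via $IWasStuckOnMyNodeAndNowWeAreMoreRobots()$ at their next common edge-activation, which precludes a long-lived tower. The delicate bookkeeping is to carry the values of these two variables correctly across the (possibly several) rounds separating $t_s$ from $t'$ during which no robot of $\{p,q\}$ is edge-activated.
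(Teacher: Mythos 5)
Your proof is correct, but it is organized quite differently from the paper's. The paper first argues that some $2$-robot tower must exist at time $t_s-1$ by a case analysis on the configuration at that round (a $3$-short-lived tower, ruled out via Lemmas~\ref{leLemmeQuilFallait} and~\ref{separationSecondTower2Robots}, or three isolated robots, ruled out via the \textsc{Update} bookkeeping), and only afterwards pins down which robot of $T'$ belongs to $T$ and which robot moves. You instead start from the two robots of $T$ themselves: Lemma~\ref{propertyLongLivedTower} forces exactly one of them to move at $t_s-1$ (two robots arriving at $u$ from its two neighbours would point in opposite global directions at the Look phase of $t_s$), and then a single application of the \textsc{Update} bookkeeping shows the non-mover must already have company --- otherwise its recorded $NumberRobotsPreviousEdgeActivation=1$ together with $\lnot HasMovedPreviousEdgeActivation$ triggers $IWasStuckOnMyNodeAndNowWeAreMoreRobots()$ at the first edge-activation $t'\geq t_s$, while the mover's $HasMovedPreviousEdgeActivation$ blocks both predicates, so the two robots take opposite global directions at $t'$, separate at $t'+1$, and $T$ has no edge-activation in $[t_s,t_e[$, i.e.\ is short-lived. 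This merges the paper's ``three isolated robots'' case and its final ``which robot moves'' argument into one, and it dispenses entirely with the $3$-short-lived-tower case: your opening observation that $p$ and $q$ cannot be co-located at $t_s-1$ (which rests on the maximality of the tower interval in the definition) already excludes it, whereas the paper routes it through two auxiliary lemmas. The only points worth making explicit are that maximality argument and the existence of $t'$ (immediate: if $q$ were never edge-activated after $t_s$, $T$ would have no edge-activation at all and would again be short-lived). Both proofs ultimately rest on the same mechanism --- \textsc{Update} records an asymmetry between a robot that moved and a robot stuck alone, and that asymmetry breaks the tower at the next common edge-activation --- but your decomposition is more direct and self-contained.
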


\begin{proof}
 Consider an execution $\mathcal{E}$ without any 3-long-lived tower. Assume that 
 at time $t_{s}$ a 2-long-lived tower $T = (S, [t_{s}, t_{e}])$ is formed.
 
 First note that if there exists a tower $T'$ of 2 robots at time $t_{s} - 1$, 
 such that only one robot of $T'$ is involved in $T$ and such that this robot 
 does not move during the Move phase of time $t_{s} - 1$, then it is possible
 for $T$ to be formed. Now we prove that $T$ can be formed at time $t_{s}$ only
 in this situation.
 
 Assume, by contradiction, that there is no tower of 2 robots during the
 Look phase of time $t_{s} - 1$. This implies that, at time $t_{s} - 1$ either 
 the three robots are involved in a 3-short-lived tower $T_{3}$ (case 1) or the 
 three robots are isolated (case 2).
  
 \begin{description}
  \item [Case 1:]
  
  Call $t$, the time of the formation of $T_{3}$.
  By Lemma~\ref{leLemmeQuilFallait}, at time $t - 1$, there is a
  2-long-lived tower $T" = (S", \theta")$ in $\mathcal{E}$ such that
  $t \in \theta"$. Call $r$ the robot that does not belong to $T"$. 
  Note that $S" \neq S$, otherwise there is a contradiction with the fact that
  $T$ starts at time $t_{s}$. This implies that $T$ is composed of one robot of 
  $T"$ and of $r$. However by lemma~\ref{separationSecondTower2Robots}, we know
  that as long as $r$ is on a same node as the robots of $T"$ then $T"$ cannot 
  be broken. This implies that the three robots form a 3-long-lived tower, which
  leads to a contradiction with the fact that there is no 3-long-lived tower in
  $\mathcal{E}$. Hence this case cannot happen.
  
  \item [Case 2:]
  
  At time $t_{s} - 1$ the robots of $T$ must be edge-activated, otherwise there
  is a contradiction with the fact that $T$ starts at time $t_{s}$.
  
  Since there is no long-lived tower at time $t_{s} - 1$ then by 
  Lemma~\ref{leLemmeQuilFallait}, at time $t_{s}$ it is not possible to have a 
  tower of 3 robots. Then since at time $t_{s}$, $T$ is formed, it exists at 
  time $t_{s}$ a tower of 2 robots. For two robots to form a tower at time 
  $t_{s}$, during the Move phase of time $t_{s} - 1$, they either both move
  while considering two opposed global directions or they consider the same
  global direction but one of the robot cannot move (an edge is missing in its 
  direction). In both cases, thanks to the update of their variables  
  $Number\-Ro\-bots\-Pre\-vious\-Edge\-Acti\-va\-tion$ and 
  $Has\-Moved\-Pre\-vious\-Edge\-Acti\-va\-tion$ during the Compute phase of 
  time $t_{s} - 1$, during the Move phase of the first time greater or equal to
  $t_{s}$ where these two robots are edge-activated, they consider opposed
  global directions and separate them. Therefore there is a contradiction with
  the fact that $T$ is a 2-long-lived tower starting at time $t_{s}$.
 \end{description}
                        
 Therefore there exists a tower of 2 robots $T'$ during the Look phase of time 
 $t_{s} - 1$. Now assume, by contradiction that the two robots of $T'$ are 
 involved in $T$. If $T'$ is a 2-long-lived tower then during the Move phase of
 time $t_{s} - 1$ the two robots of $T'$ are edge-activated and consider two 
 opposed global directions, otherwise there is a contradiction with the fact
 that $T$ starts at time $t_{s}$. If $T'$ is a 2-short-lived tower then during
 the Move phase of time $t_{s} - 1$ the two robots of $T'$ are edge-activated
 (otherwise $T$ cannot be a 2-long-lived tower), and they consider two opposite 
 global directions (by definition of a 2-short-lived tower). 
 
 A robot can cross only one edge at each instant time. Since $n \geq 4$  
 whatever the situation (only one of the robots of $T$ moves or both of the
 robots of $T$ move during the Move phase of time $t_{s} - 1$) the two robots of
 $T$ cannot be again on a same node at time $t_{s}$. In conclusion, only one 
 robot of $T'$ is involved in $T$.
 
 
 Finally, assume by contradiction, that during the Move phase of time 
 $t_{s} - 1$, either both the robots of $T$ move (in this case, during the Move
 phase of time $t_{s} - 1$ the two robots consider two opposed global 
 directions otherwise they cannot meet to form $T$) or only the robot of $T$
 involved in $T'$ moves while the other robot of $T$ does not move (in this 
 case, during the Move phase of time $t_{s} - 1$ the two robots consider the 
 same global direction otherwise they cannot meet to form $T$). In both cases,
 thanks to the update of the variables 
 $Has\-Moved\-Pre\-vious\-Edge\-Acti\-va\-tion$ and 
 $Number\-Ro\-bots\-Pre\-vious\-Edge\-Acti\-va\-tion$ during the Compute phase
 of time $t_{s} - 1$, during the Move phase of the first time after time $t_{s}$
 where the robots of $T$ are edge-activated, they consider two opposed global 
 directions. Therefore there is a contradiction with the fact that $T$ is a 
 2-long-lived tower starting at time $t_{s}$.
\end{proof}

The next two lemmas show that the whole ring is visited between two consecutive 
2-long-lived towers if these two towers satisfy some properties.

\begin{lemma} \label{needToVisitTheWholeRing}
 Consider an execution $\mathcal{E}$ without any 3-long-lived tower but 
 containing a 2-long-lived tower $T = (S, [t_{s}, t_{e}])$. If there exists
 another 2-long-lived tower $T' = (S', [t_{s}', t_{e}'])$, with 
 $t_{s}' > t_{e} + 1$ and such that $T'$ is the first 2-long-lived tower after
 $T$ in $\mathcal{E}$, then all the nodes of $\mathcal{G}$ have been visited by
 at least one robot between time $t_{e}$ and time $t_{s}' - 1$.
\end{lemma}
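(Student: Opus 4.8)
The plan is to argue by contradiction: suppose some node $z$ is never visited during $[t_e,t_s'-1]$, and derive that no $2$-long-lived tower can be formed after $T$ — in particular not at time $t_s'$ — contradicting the existence of $T'$. First I would fix notation using Lemma~\ref{separationSecondTower2Robots}. Writing $S=\{r_a,r_b\}$ and letting $r_c$ be the third robot, that lemma gives the configuration right after $T$ breaks: at the Look phase of round $t_e+1$ one robot of $S$ (a ``pending sentinel'') is still on the node $u$ where $T$ sat at round $t_e$, pointing toward the adjacent edge it was stuck on, while the other robot of $S$ has moved to the opposite neighbour of $u$ and points the opposite way; moreover $r_c$ is not on $u$ at the Look phase of round $t_e$.

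Next I would prove a ``regime'' claim for the whole window $[t_e+1,t_s'-1]$: no $3$-robot tower and no $2$-long-lived tower occurs there. The second statement is exactly the hypothesis that $T'$ is the first $2$-long-lived tower after $T$ (recall $t_s'>t_e+1$). For the first, Lemma~\ref{leLemmeQuilFallait} tells us that a $3$-robot tower at a time $t\le t_s'-1$ requires a $2$-long-lived tower at $t-1$; for $t\ge t_e+2$ this is excluded inside the window, and for $t=t_e+1$ it is excluded because, by Lemma~\ref{separationSecondTower2Robots}, the two robots of $S$ already sit on distinct nodes at round $t_e+1$. Consequently every tower in this window is a $2$-short-lived tower, hence is broken at the first edge-activation of its two robots, which then lie on distinct nodes with opposite global directions; and, combining Observation~\ref{2robotsInTheSameDirection} with Lemma~\ref{NoThreeRobotsInSameDirection} (applicable since at round $t_e+1$ exactly two robots agree on the global direction, by the description above), exactly two robots share the global direction at every time of the window.

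I would then exploit the non-visit assumption. If $z$ is never visited in $[t_e,t_s'-1]$, no edge incident to $z$ is ever traversed during this interval, so the three robots remain inside the arc $A=\mathcal{G}\setminus\{z\}$, which is a path whose endpoints are the two neighbours of $z$. The heart of the proof is a bookkeeping of the motion on $A$. Since a robot changes its global direction only through \textsc{OppositeDirection} or \textsc{GiveDirection}, and both calls require membership in a tower, an \emph{isolated} robot moves monotonically toward an endpoint of $A$ and, once it reaches it, stays ``stuck'' there pointing toward $z$ forever (the $z$-incident edge can never serve it without visiting $z$). Moreover, reproducing the short predicate computation used in the proof of Lemma~\ref{separationSecondTower2Robots}, whenever a robot reaches a node occupied by a stuck robot the two of them end up with opposite global directions — otherwise they would agree, move together, and form a $2$-long-lived tower, contradicting the regime claim. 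Iterating these two facts, the confined dynamics must reach, strictly before time $t_s'$, a configuration with one robot permanently stuck at each endpoint of $A$ and the third robot forever oscillating between them (the only alternative being that some robot performs a full turn of the ring, which visits $z$); and from such a configuration the same ``opposite directions upon meeting'' computation shows that no $2$-long-lived tower can ever be formed again — in particular not at $t_s'$ — a contradiction. Hence $z$ is visited in $[t_e,t_s'-1]$, and since $z$ was arbitrary, every node is visited in that interval.

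The delicate step — and where I expect the real work to lie — is the last one: showing that the confined three-robot dynamics on the path $A$ provably converge, \emph{within} the bounded window $[t_e+1,t_s'-1]$, to the ``two stuck sentinels plus one oscillator'' configuration, thereby ruling out any intermediate evolution that simultaneously avoids visiting $z$ and avoids forming a $2$-long-lived tower. This will require a potential-style argument on the positions and directions of the three robots, together with an exhaustive (but routine) case analysis of the possible meetings — catch-up from behind, head-on collision on a node, arrival on an already-stuck robot — in each case checking, via the predicates of Algorithm~\ref{algo:SSPE} and their interaction with $Number\-Ro\-bots\-Pre\-vious\-Edge\-Acti\-va\-tion$, that the outcome is always a short-lived separation rather than an agreeing tower.
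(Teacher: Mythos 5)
Your proposal takes a genuinely different route from the paper's. The paper argues forward: it uses Lemma~\ref{separationSecondTower2Robots} to pin down the configuration $C'$ at the end of $T$ (the two robots of $S$ on one node with opposed global directions) and Lemma~\ref{existsTower2Robots} to pin down the configuration $C$ that must immediately precede $T'$ (a $2$-short-lived tower with a stuck robot and the third robot adjacent and converging), then observes that in between every meeting is a short-lived separation and isolated robots keep their direction, and concludes that any trajectory from $C'$ to $C$ sweeps the whole ring. You instead argue by contradiction on a single unvisited node $z$, confine the dynamics to the path $A=\mathcal{G}\setminus\{z\}$, and try to show no $2$-long-lived tower can arise there. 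Your setup (the regime claim, and the invariant ``exactly two robots agree on the global direction'' obtained from Observation~\ref{2robotsInTheSameDirection} and Lemma~\ref{NoThreeRobotsInSameDirection}) is sound and in fact makes explicit an ingredient the paper leaves implicit.

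However, there is a genuine gap, and it is exactly where you flag it. First, the decisive step is not carried out: you defer the case analysis showing that the confined dynamics cannot realize the precondition of Lemma~\ref{existsTower2Robots} (a stuck robot in a $2$-tower with the third robot adjacent and moving onto it, followed by agreement at the next edge-activation) to ``a potential-style argument'' and ``an exhaustive but routine case analysis''. This analysis is the entire content of the lemma and is not routine; for instance, ruling out the formation of a long-lived tower at an endpoint of $A$ (a sentinel at the $z$-neighbour with two robots arriving in single file) requires invoking the two-agree invariant, and the interior-node and head-on cases each need their own predicate computations. Second, the deferred argument has the wrong shape: you claim the system ``must reach, strictly before time $t_s'$,'' the two-sentinels-plus-oscillator configuration, but the adversary controls the edge schedule and can stall the confined dynamics arbitrarily, so there is no guarantee of convergence within the window $[t_e+1,t_s'-1]$; your dichotomy (converge or visit $z$) omits the case where neither has happened by $t_s'$, and then your argument says nothing about the formation of $T'$ at that time. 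What is actually needed is a safety statement --- under confinement to $A$ and the two-agree invariant, the configuration of Lemma~\ref{existsTower2Robots} followed by agreement can never occur at any time after $t_e$ --- proved by the very case analysis you postpone. As written, the proof does not close.
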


\begin{proof}
 Consider an execution $\mathcal{E}$ without any 3-long-lived tower but 
 containing a 2-long-lived tower $T = (S, [t_{s}, t_{e}])$. Assume that there
 exists another 2-long-lived tower $T' = (S', [t_{s}', t_{e}'])$, with 
 $t_{s}' > t_{e} + 1$ and such that $T'$ is the first 2-long-lived tower after
 $T$ in $\mathcal{E}$.
 
 Since by assumption there is no long-lived tower between the Look phase of time
 $t_{e} + 1$ and the Look phase of time $t_{s}' - 1$ included, then by 
 Lemma~\ref{noThreeShortLivedTower}, from the Look phase of time $t_{e} + 1$ to 
 the Look phase of time $t_{s}' - 1$ included, if some robots meet they only 
 form 2-short-lived towers. Therefore, by Lemma~\ref{existsTower2Robots}, at 
 time $t_{s}' - 1$ there exists a 2-short-lived tower $T_{short}$.
 
 To form $T'$, by Lemma~\ref{existsTower2Robots}, the configuration $C$ reached 
 is such that $T_{short}$ and the robot of $T'$ not involved in $T_{short}$ are
 on two adjacent nodes, the adjacent edge to the location of $T_{short}$ in the 
 global direction $d$ is missing at time $t_{s}' - 1$, and the two robots of 
 $T'$ are edge-activated and consider the global direction $d$ during the Move 
 phase of time $t_{s}' - 1$. During the Move phase of time $t_{e}$ the 
 configuration $C'$ is such that the two robots of $T$ are on a same node
 considering two opposed global direction. Moreover, from the Look phase of
 time $t_{e} + 1$ to the Look phase of time $t_{s}' - 1$ included, if two robots
 meet they separate once they are edge-activated considering two opposed global
 directions. Besides, while executing \PEF, a robot does not change the global 
 direction it considers if it is isolated. All this implies that to reach $C$ 
 from $C'$ all the nodes of $\mathcal{G}$ have been visited by at least one 
 robot between time $t_{e}$ and time $t_{s}' - 1$.
\end{proof}

\begin{lemma} \label{needToVisitTheWholeRing_bis}
 Consider an execution $\mathcal{E}$ without any 3-long-lived tower, and let 
 $T_{i} = (S_{i}, [t_{s\_i}, t_{e\_i}])$ be the $i^{th}$ 2-long-lived tower of
 $\mathcal{E}$, with $i \geq 2$. If
 $T_{i + 1} = (S_{i+1},$ $[t_{s\_i+1}, t_{e\_i+1}])$ exists and satisfies
 $t_{s\_i+1} = t_{e\_i}  + 1$, then all the nodes of $\mathcal{G}$ have been 
 visited by at least one robot between time $t_{s\_i} - 1$ and time
 $t_{s\_i+1} - 1$.
\end{lemma}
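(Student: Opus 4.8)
The plan is to pin down the configurations at the two endpoints of the interval $[t_{s\_i}-1,\,t_{e\_i}]$ (recall $t_{s\_i+1}-1=t_{e\_i}$) and then argue that the combined trajectories of the three robots between these endpoints must sweep the whole ring. Write $S_i=\{r_a,r_b\}$ and let $r_3$ be the third robot.

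First I would describe the breaking of $T_i$ and the formation of $T_{i+1}$. Since $t_{e\_i}$ is finite (as $T_{i+1}$ exists with $t_{s\_i+1}=t_{e\_i}+1$), Lemma~\ref{separationSecondTower2Robots} applies: $T_i$ sits on a node $u$ at round $t_{e\_i}$, $r_3$ is not on $u$ during the Look phase of $t_{e\_i}$, and during the Look phase of $t_{e\_i}+1$ one robot of $S_i$ — call it $r_a$ — is still on $u$ while the other, $r_b$, has left $u$ in the opposite direction. Applying Lemma~\ref{existsTower2Robots} to $T_{i+1}$ (formed at $t_{e\_i}+1$) yields a $2$-robot tower during the Look phase of $t_{e\_i}$ that involves exactly one robot of $S_{i+1}$, the other joining it at the Move phase of $t_{e\_i}$; since $r_3$ is off $u$, the only such tower is $T_i$ itself, so $S_{i+1}=\{r_a,r_3\}$ and it is $r_3$ that moves onto $u$. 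As $r_a$ stays put, the edge leaving $u$ in $r_a$'s direction is missing at $t_{e\_i}$, so $r_3$ reaches $u$ through the other (present) adjacent edge of $u$ — the very edge $r_b$ uses to leave — hence $r_b$ and $r_3$ swap places at $t_{e\_i}$. Symmetrically, Lemma~\ref{existsTower2Robots} applied to $T_i$ gives a $2$-robot tower $T'_i=\{r_c,r_3\}$ with $r_c\in S_i$ during the Look phase of $t_{s\_i}-1$, the remaining robot of $S_i$ joining them at $t_{s\_i}$. Thus $r_3$ is co-located with a robot of $S_i$ immediately before $T_i$ appears and again co-located with a robot of $S_i$ (inside $T_{i+1}$, on $u$) immediately after $T_i$ disappears.

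The heart of the proof is then to show that every node is visited between $t_{s\_i}-1$ and $t_{e\_i}$. By Lemma~\ref{propertyLongLivedTower} the robots of $T_i$ keep a single common global direction $d$ throughout $[t_{s\_i},t_{e\_i}]$, so $T_i$ progresses monotonically in direction $d$ (possibly pausing when edges are missing) until it is blocked at $u$; while isolated, $r_3$ never changes the direction it considers; and whenever two entities meet and then part, they part in opposite directions. By Lemma~\ref{noThreeShortLivedTower} and Lemma~\ref{leLemmeQuilFallait} there is no $3$-long-lived tower and any $3$-robot tower forming during $(t_{s\_i},t_{e\_i})$ is instantaneous. Tracking positions: $r_3$ is ``released'' from the node where $T_i$ forms at $t_{s\_i}$ and is only ``recaptured'' (into $T_{i+1}$, on $u$) at $t_{e\_i}+1$, arriving on the neighbour of $u$ reached through the present adjacent edge; combining the arc travelled by the moving tower $T_i$ with the arc travelled by the released robot $r_3$ (and, where needed, the step made by the robot of $S_i$ joining at $t_{s\_i}$) exhausts the node set.

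The main obstacle is exactly this last step: one must rule out the scenario in which the three robots remain confined to a short arc for the whole interval, which would make the statement false. Making it watertight requires a case analysis on which robot of $S_i$ is $r_c$, on whether $r_3$ and the common direction $d$ of $T_i$ point the same way or opposite ways, and on whether $r_3$ collides with $T_i$ during $T_i$'s lifetime; in each case one checks that the union of the three trajectories is the whole ring, using that $T_i$ is long-lived (so its robots genuinely travelled together in direction $d$ until blocked) and that $r_3$, being isolated except at the two endpoints, cannot reverse without meeting the tower. This is a more delicate version of the endpoint-to-endpoint reconstruction already performed in the proof of Lemma~\ref{needToVisitTheWholeRing}.
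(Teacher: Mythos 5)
Your proof follows essentially the same route as the paper's: you pin down the configurations at times $t_{s\_i}-1$ and $t_{e\_i}=t_{s\_i+1}-1$ via Lemmas~\ref{existsTower2Robots} and \ref{separationSecondTower2Robots} (in fact more precisely than the paper, which only records $S_{i+1}\neq S_{i}$), and then argue the sweep from the facts that isolated robots keep their direction and that any meeting ends with the two entities parting in opposite global directions. The final ``the trajectories must cover the whole ring'' step is left at the same level of informality as in the paper's own proof (which likewise does not carry out the case analysis you describe, nor explicitly rule out the tower changing direction via \textsc{GiveDirection} while staying together), so this is not a gap relative to the paper.
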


\begin{proof}  
 Consider an execution $\mathcal{E}$ without any 3-long-lived tower but 
 containing a 2-long-lived tower $T_{i} = (S_{i}, [t_{s\_i}, t_{e\_i}])$, with
 $i \geq 2$. Assume that there exists another 2-long-lived tower 
 $T_{i + 1} = (S_{i+1}, [t_{s\_i+1}, t_{e\_i+1}])$, with 
 $t_{s\_i+1} = t_{e\_i}  + 1$. By Lemma~\ref{existsTower2Robots}, to form 
 $T_{i + 1}$, a tower of 2 robots involving only one robot of $T_{i + 1}$ must
 be present at time $t_{s\_i+1} - 1$. Moreover $T_{i}$ is a tower of 2 robots 
 which is present in $\mathcal{G}$ from time $t_{s\_i}$ to time 
 $t_{s\_i+1} - 1$. Therefore $S_{i+1} \neq S_{i}$.
 
 To form $T_{i}$, by Lemma~\ref{existsTower2Robots}, the configuration $C$ 
 reached at time $t_{s\_i} - 1$ is such that there is a tower $T$ of 2 robots
 involving only one robot of $T_{i}$ and the other robot of $T_{i}$ which are on
 two adjacent nodes
 
 
 Similarly, by Lemma~\ref{existsTower2Robots}, and since 
 $t_{s\_i+1} = t_{e\_i}  + 1$, to form $T_{i + 1}$, the configuration $C'$ 
 reached at time $t_{s\_i+1} - 1$ is such that $T_{i}$ and the robot of 
 $T_{i + 1}$ not involved in $T_{i}$ are on two adjacent nodes, the adjacent 
 edge to the location of $T_{i}$ in the global direction $d$ is missing at time 
 $t_{s\_i+1} - 1$, and the two robots of $T_{i + 1}$ are edge-activated and
 consider the global direction $d$ during the Move phase of time 
 $t_{s\_i+1} - 1$. Moreover, since there is no 3-long-lived tower in
 $\mathcal{E}$, from the Look phase of time $t_{s\_i}$ to the Look phase of time
 $t_{s\_i+1} - 1$ included, if $T_{i}$ meets the other robot of the system, they
 form a 3-short-lived tower and hence they separate once they are edge-activated 
 considering two opposed global directions. Besides, while executing \PEF, a 
 robot does not change the global direction it considers if it is isolated. All
 this implies that to reach $C'$ from $C$ all the nodes of $\mathcal{G}$ have 
 been visited by at least one robot between time $t_{s\_i} - 1$ and time 
 $t_{s\_i+1} - 1$.
\end{proof}

\subsection{Correctness Proof}

Upon establishing all the above properties of towers, we
are now ready to state the main lemmas of our proof. Each of these three lemmas 
below shows that after time $t_{max}$ our algorithm performs the perpetual
exploration in a self-stabilizing way for a specific subclass of 
connected-over-time rings. 

\begin{lemma} \label{static}
 \PEF~is a perpetual exploration algorithm for the class of static rings of
 arbitrary size using three robots.
\end{lemma}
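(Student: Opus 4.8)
The plan is to observe that a static ring is exactly the degenerate case of a connected-over-time ring in which \emph{no} edge is ever missing, and that this single fact disables both mechanisms of \PEF~that could make a robot change its direction. By Lemma~\ref{convergence} and the definition of $t_{max}$, all three robots are coherent from round $t_{max}$ on; since self-stabilization only requires a correct suffix, it suffices to prove that every node of $\mathcal{G}$ is visited infinitely often after $t_{max}$. So I would fix a static ring $\mathcal{G}$ (\ie $G_i = G$ for every $i$) and fix a robot $r$; every edge being present at every round, $r$ is edge-activated at every round, and $ExistsEdgeOnCurrentDirection()(r,t)$ holds for every $t$.

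First I would show that, for every $t \geq t_{max}$, both guards of Algorithm~\ref{algo:SSPE} are false for $r$. Since $ExistsEdgeOnCurrentDirection()(r,t)$ is true, the conjunct $\lnot ExistsEdgeOnCurrentDirection()$ fails, hence $WeAreStuckInTheSameDirection()(r,t)$ is false. For the second guard, the previous edge-activation of $r$ is round $t-1$ (every round is an edge-activation in a static ring), and at that round \textsc{Update} assigned to $HasMovedPreviousEdgeActivation$ the value of $ExistsEdgeOnCurrentDirection()$, which was true; as $r$ is coherent at time $t$, $HasMovedPreviousEdgeActivation(r,t)$ is therefore true, so the conjunct $\lnot HasMovedPreviousEdgeActivation$ fails and $IWasStuckOnMyNodeAndNowWeAreMoreRobots()(r,t)$ is false as well. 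Consequently, after $t_{max}$, $r$ invokes neither \textsc{GiveDirection} nor \textsc{OppositeDirection}, so the variable $dir$ of $r$ is constant from $t_{max}$ on.

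Finally I would conclude: from round $t_{max}$ the direction of $r$ is a fixed (local) direction, the edge in that direction is present at every round, so $r$ crosses exactly one edge in that direction at every round and thus traverses the cycle $G$ cyclically. Hence, over any window of $n$ consecutive rounds after $t_{max}$, $r$ visits all $n$ nodes of $\mathcal{G}$, so every node is visited infinitely often by $r$, and a fortiori by at least one robot, which is the perpetual exploration specification. I do not expect a genuine obstacle in this lemma; the only delicate point is the coherence bookkeeping, namely checking that $t_{max}$ indeed ensures that $HasMovedPreviousEdgeActivation$ equals the (trivially true) value of $ExistsEdgeOnCurrentDirection()$ at the preceding round, which is precisely what Lemma~\ref{convergence} provides.
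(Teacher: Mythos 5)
Your proposal is correct and follows essentially the same route as the paper's proof: in a static ring every robot is edge-activated and able to move at every round, so after $t_{max}$ its variable $Has\-Moved\-Pre\-vious\-Edge\-Acti\-va\-tion$ is always true, both guards of Algorithm~\ref{algo:SSPE} are false, the direction is never changed, and the robot cycles through all $n$ nodes forever. Your version merely spells out the failure of each conjunct of the two predicates a bit more explicitly than the paper does.
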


\begin{proof}
 Assume that $\mathcal{G}$ is a static ring. While executing \PEF, a robot 
 considers a direction at each round. Moreover, a robot does not change the
 global direction it considers if its variable 
 $Has\-Moved\-Pre\-vious\-Edge\-Acti\-va\-tion$ is true. The variables of a 
 robot are updated during Compute phases of times where it is edge-activated.
 Since $\mathcal{G}$ is static, this implies that in each round all the robots
 are edge-activated and are able to move whatever the direction they consider.
 So, after $t_{max}$ their variables
 $Has\-Moved\-Pre\-vious\-Edge\-Acti\-va\-tion$ are always true. Hence, the
 robots never change their directions. 
 
 As $(i)$ the robots have a stable direction, $(ii)$ they always consider 
 respectively the same global direction, and $(iii)$ there always exists an 
 adjacent edge to their current locations in the global direction they consider,
 the robots move infinitely often in the same global direction. Moreover, as
 $\mathcal{G}$ has a finite size, this implies that all the robots visit 
 infinitely often all the nodes of $\mathcal{G}$.
\end{proof}

\begin{lemma} \label{SomeEdgesMissingSometimes}
 \PEF~is a perpetual exploration algorithm for the class of edge-recurrent but 
 non static rings of arbitrary size using three robots.
\end{lemma}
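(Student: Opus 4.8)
The plan is to work in the suffix of the execution after time $t_{max}$ (so that the three robots have coherent states, by Lemma~\ref{convergence}) and to exploit the defining feature of an edge-recurrent ring: its eventual underlying graph equals its footprint, so \emph{no edge is eventually missing} and every robot is infinitely often able to move in whichever direction its variable $dir$ currently holds. The first building block, essentially the computation already made in the proof of Lemma~\ref{static}, is that a robot which stays isolated throughout a suffix never changes $dir$ --- while it is alone, neither $WeAreStuckInTheSameDirection()$ nor $IWasStuckOnMyNodeAndNowWeAreMoreRobots()$ can become true --- hence it advances one node each time the edge ahead of it reappears and, the ring being finite, visits every node infinitely often. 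So it suffices to control the situations in which robots gather into towers. By Lemma~\ref{noThreeLongLivedTower}, the suffix either contains one $3$-long-lived tower (starting at $t_{max}$ and never recreated) or none at all; if it contains an \emph{infinite} $3$-long-lived tower, Corollary~\ref{breakingTowerCorollary} forces its robots' $WeAreStuckInTheSameDirection()$ to be eventually always false, while $IWasStuckOnMyNodeAndNowWeAreMoreRobots()$ is also eventually always false (once all three have been edge-activated inside the tower its count conjunct cannot hold, $NumberOfRobotsOnNode()$ staying $3$); by Lemma~\ref{propertyLongLivedTower} the three robots then keep a common fixed global direction, move forward together infinitely often, and visit every node infinitely often. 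We are thus reduced to a suffix $\mathcal{E}'$ with no $3$-long-lived tower, which is exactly the standing hypothesis of Lemmas~\ref{separationSecondTower2Robots}, \ref{existsTower2Robots}, \ref{needToVisitTheWholeRing} and \ref{needToVisitTheWholeRing_bis}.

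Inside $\mathcal{E}'$ I would split on the number of $2$-long-lived towers. If there are infinitely many, list them from the second one on as $T_2, T_3, \ldots$ (so that Lemma~\ref{needToVisitTheWholeRing_bis} is applicable), write $T_i = (S_i, [t_{s\_i}, t_{e\_i}])$, and examine two consecutive ones. Either $t_{s\_i+1} > t_{e\_i} + 1$, and Lemma~\ref{needToVisitTheWholeRing} gives that all nodes are visited between times $t_{e\_i}$ and $t_{s\_i+1} - 1$; or $t_{s\_i+1} = t_{e\_i} + 1$, and Lemma~\ref{needToVisitTheWholeRing_bis} gives the same conclusion between times $t_{s\_i} - 1$ and $t_{s\_i+1} - 1$; the residual possibility $t_{s\_i+1} = t_{e\_i}$ is excluded by Lemma~\ref{separationSecondTower2Robots}, which forbids the third robot to lie on the tower's node at round $t_{e\_i}$. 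Since there are infinitely many such intervals, every node is visited infinitely often.

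The remaining case --- $\mathcal{E}'$ containing only finitely many $2$-long-lived towers --- is the technical heart. After some time there is then no long-lived tower at all, hence by Lemma~\ref{noThreeShortLivedTower} never a tower of three robots, so every configuration consists of three isolated robots or of one $2$-short-lived tower plus one isolated robot; in this regime a robot changes $dir$ only inside a $2$-short-lived tower, and a robot that \emph{moved} into such a tower triggers neither \textsc{GiveDirection} nor \textsc{OppositeDirection} (its $HasMovedPreviousEdgeActivation$ is true), so a robot reverses only when it was blocked at its previous edge-activation and is then rejoined by another robot. I would conclude by contradiction: were the set $V_\infty$ of infinitely-often-visited nodes a nonempty proper subset of $V$, pick $a \in V_\infty$ incident to an edge $e$ leading to some $b \notin V_\infty$; past the stabilisation time no robot ever crosses $e$, although $e$ is present at infinitely many times by edge-recurrence. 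Every robot reaching $a$ arrives from the side opposite $b$, hence holds $e$'s direction and can leave $a$ only by reversing, which forces $2$-short-lived towers to re-form on $a$ arbitrarily often; a finite case analysis on these towers (using also Observation~\ref{2robotsInTheSameDirection}) shows that sustaining this with only three robots while $e$ is never used contradicts the recurrence of $e$ --- or, in the degenerate cases, already forces a long-lived tower. This confinement argument in the tower-free regime is the step I expect to be hardest; everything else reuses the tower machinery with routine bookkeeping.
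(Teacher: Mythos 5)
Your overall decomposition mirrors the paper's (first dispose of $3$-long-lived towers, then split on the number of $2$-long-lived towers, then treat the tower-free regime), and your infinite-$3$-long-lived-tower and infinitely-many-$2$-long-lived-towers branches are handled essentially as the paper handles them. But there is a genuine gap in the step ``$\mathcal{E}'$ contains only finitely many $2$-long-lived towers, hence after some time there is no long-lived tower at all.'' That implication is false: the last $2$-long-lived tower may have $t_{e}=+\infty$, i.e.\ two robots may stay glued together forever. Nothing in the edge-recurrent setting forces such a tower to break: Lemma~\ref{OneEdgeMissingTowerBroken} applies only when an eventual missing edge exists, and Corollary~\ref{breakingTowerCorollary} breaks the tower only if $WeAreStuckInTheSameDirection()$ is infinitely often true, which need not happen here (two robots stacked and agreeing on a global direction in a ring whose edges are almost always present simply travel together forever without ever invoking \textsc{GiveDirection}). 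The paper devotes an entire branch to exactly this scenario (its Case~2.1 with a last tower of infinite duration, further split into ``the pair meets the third robot finitely often'' and ``infinitely often''): in the first sub-case the pair eventually keeps a fixed global direction and sweeps the ring; in the second, the whole ring is shown to be visited between two consecutive meetings with the third robot. Your proof silently assumes this configuration away, and your subsequent tower-free analysis does not cover it.

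A secondary weakness: your concluding confinement argument for the tower-free regime is only a sketch, with the decisive step (``a finite case analysis on these towers \ldots\ shows that sustaining this \ldots\ contradicts the recurrence of $e$'') left unproven. The paper closes this case with a sharper invariant that you would need to state and prove: if a robot sitting on $u$ and pointing in global direction $gd$ ever reverses, it does so inside a $2$-short-lived tower whose partner, after the reversal, sits on $u$ pointing in direction $gd$; by induction some robot always occupies $u$ pointing in direction $gd$ until the edge in that direction reappears, at which point it crosses. Your plan gestures at this hand-off (``forces $2$-short-lived towers to re-form on $a$'') but does not carry it to the conclusion.
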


\begin{proof}
 Assume that $\mathcal{G}$ is an edge-recurrent but non static ring. Let us
 study the following cases.
 
 \begin{description}
  \item [Case 1:] \textbf{There exists at least one 3-long-lived tower in 
  $\mathbf{\mathcal{E}}$.}
  
   \begin{description}
    \item [Case 1.1:] \textbf{One of the 3-long-lived towers of 
    $\mathbf{\mathcal{E}}$ has an infinite duration.}
    
     Denote by $T = (S, [t_{s}, +\infty[)$ the 3-long-lived tower of
     $\mathcal{E}$ that has an infinite duration. Call $t \geq t_{s}$ the first
     time after time $t_{s}$ where the robots of $T$ are edge-activated. The 
     variables of a robot are updated during Compute phases of times where it is
     edge-activated. Therefore, since there are three robots in the system, from
     time $t_{act} + 1$, the condition ``$Number\-Of\-Ro\-bots\-On\-No\-de() 
     > Number\-Ro\-bots\-Pre\-vious\-Edge\-Acti\-va\-tion$'' is false for the 
     three robots of $T$. Therefore from time $t_{act} + 1$ the predicate
     $I\-Was\-Stuck\-On\-My\-No\-de\-And\-Now\-We\-Are\-More\-Ro\-bots()$ of
     each robot of $T$ is false.
     
     By Corollary~\ref{breakingTowerCorollary}, eventually, the predicates
     $We\-Are\-Stuck\-In\-The\-Same\-Dire\-ction()$ of the robots of $T$ are
     always false, otherwise $T$ is broken in finite time, which leads to a 
     contradiction. 
     
     Since eventually the predicates 
     $I\-Was\-Stuck\-On\-My\-No\-de\-And\-Now\-We\-Are\-More\-Ro\-bots()$ and 
     $We\-Are\-Stuck\-In\-The\-Same\-Dire\-ction()$ of the robots of $T$ are 
     always false, then eventually they consider always the same global direction. 
     $\mathcal{G}$ is edge-recurrent, therefore there exists infinitely often an
     adjacent edge to the location of $T$ in the global direction considered by
     the robots of $T$, then the robots are able to move infinitely often in the
     same global direction. Moreover, as $\mathcal{G}$ has a finite size, all
     the robots visit infinitely often all the nodes of $\mathcal{G}$.
  
    \item [Case 1.2:] \textbf{Any 3-long-lived tower of $\mathbf{\mathcal{E}}$
    has a finite duration.}
    
    By Lemma~\ref{noThreeLongLivedTower}, once a 3-long-lived tower is broken,
    it is impossible to have another 3-long-lived tower in $\mathcal{E}$. Then,
    $\mathcal{E}$ admits an infinite suffix that matches either case 2 or 3.
   \end{description}

  \item [Case 2:] \textbf{There exists at least one 2-long-lived tower in 
  $\mathbf{\mathcal{E}}$.}
   \begin{description}
    \item [Case 2.1:] \textbf{There exists a finite number of 2-long-lived 
    towers in $\mathbf{\mathcal{E}}$.}
    
    Let $T' = (S', [t_{s}', t_{e}'])$ be the last 2-long-lived tower of 
    $\mathcal{E}$.
    
    There is no 3-long-lived tower in $\mathcal{E}$ at time $t_{s}'$ (otherwise 
    Case 1 is considered), hence by Lemma~\ref{noThreeLongLivedTower} there
    is no 3-long-lived tower in $\mathcal{E}$. Moreover, if $T'$ has a finite 
    duration, then $\mathcal{E}$ admits an infinite suffix with no long-lived 
    tower, hence matching case 3.
    
    Otherwise, (\ie $T'$ has an
    infinite duration), as in Case 1.1, the robots of $T'$ eventually have their
    predicates $We\-Are\-Stuck\-In\-The\-Same\-Dire\-ction()$ always false, 
    otherwise, $T'$ is broken in finite time. Let $t_{false}$ be the time from
    which the robots of $T'$ have their predicates 
    $We\-Are\-Stuck\-In\-The\-Same\-Dire\-ction()$ always false. After time
    $t_{false}$, the only case when the robots of $T'$ change the global
    direction they consider, is when they meet the third robot of the system.
     
     \begin{description}
      \item [Case 2.1.1:] \textbf{The robots of $\mathbf{T'}$ meet the third
      robot finitely often.}
      
      After the time when the last tower of 3 robots is broken, the robots of 
      $T'$ have their predicates 
      $I\-Was\-Stuck\-On\-My\-No\-de\-And\-Now\-We\-Are\-More\-Ro\-bots()$
      always false. Let $t_{break}$ be the time when the last tower of 3 robots 
      if broken. From time $t = max \{t_{break}, t_{false}\} + 1$ the robots of
      $T'$ have their predicates
      $I\-Was\-Stuck\-On\-My\-No\-de\-And\-Now\-We\-Are\-More\-Ro\-bots()$ and 
      $We\-Are\-Stuck\-In\-The\-Same\-Dire\-ction()$ always false, therefore 
      they always consider the same global direction. Since $\mathcal{G}$ is 
      edge-recurrent, there is infinitely often an adjacent edge to the location
      of $T'$ in the direction considered by the robots of $T'$. This implies
      that they are able to move infinitely often in the same global direction.
      Moreover, as $\mathcal{G}$ has a finite size, this implies that all the
      robots visit infinitely often all the nodes of $\mathcal{G}$.
      
      \item [Case 2.1.2:] \textbf{The robots of $\mathbf{T'}$ meet the third 
      robot infinitely often.}
      
      Consider the execution after time $t_{false}$. The robot not involved in 
      $T'$ does not change its direction while it is isolated. Similarly, the 
      robots of $T'$ maintain their directions until they meet the third robot.
      Moreover, when the robots of $T'$ meet the third robot of the system, they
      form a 3-short-lived tower. Therefore once they are edge-activated, they 
      separate them considering opposed global directions. Then, we can deduce
      that all the nodes of $\mathcal{G}$ are visited between two consecutive 
      meetings of $T'$ and the third robot. As $T'$ and the third robot 
      infinitely often meet, all the nodes of $\mathcal{G}$ are infinitely often
      visited.
    
    \end{description}

    \item [Case 2.2:] \textbf{There exist an infinite number of 2-long-lived 
    towers in $\mathbf{\mathcal{E}}$.}
    
    By Lemmas~\ref{needToVisitTheWholeRing} and
    \ref{needToVisitTheWholeRing_bis}, we know that between two consecutive 
    2-long-lived towers (from the second one), all the nodes of $\mathcal{G}$ 
    are visited. As there is an infinite number of 2-long-lived towers, the
    nodes of $\mathcal{G}$ are infinitely often visited.
    
   \end{description}
  \item [Case 3:] \textbf{There exist no long-lived tower in
  $\mathbf{\mathcal{E}}$.}
   
  Then, we know, by Lemma~\ref{noThreeShortLivedTower}, that $\mathcal{E}$
  contains only configurations with either three isolated robots or one
  2-short-lived tower and one isolated robot. 
%
%

      We want to prove the following property. If during the Look phase of time
      $t$, a robot $r$ is located on a node $u$ considering the global direction
      $gd$, then there exists a time $t' \geq t$ such that, during the Look 
      phase of time $t'$, a robot is located on the node $v$ adjacent to $u$ in
      the global direction $gd$ and considers the global direction $gd$.
      
      Let $t" \geq t$ be the smallest time after time $t$ where the adjacent
      edge of $u$ in the global direction $gd$ is present in $\mathcal{G}$. As 
      all the edges of $\mathcal{G}$ are infinitely often present, $t"$ exists.
      
      
      \noindent $(i)$ If $r$ crosses the adjacent edge of $u$ in the global 
      direction $gd$ during the Move phase of time $t"$, then the property is 
      verified.
      
      \noindent $(ii)$ If $r$ does not cross the adjacent edge of $u$ in the 
      global direction $gd$ during the Move phase of time $t"$, this implies 
      that $r$ changes the global direction it considers during the Look phase
      of a time $t$. While executing \PEF, a robot can change the global 
      direction it considers only during Compute phases of times where it is 
      edge-activated and involved in a tower. Let $t_{act} \geq t$ be the first
      time after time $t$ such that during the Move phase of time $t_{act}$, $r$
      does not consider the global direction $gd$. Let $r'$ the robot involved 
      in a tower with $r$ at time $t_{act}$. Since there are only 2-short-lived 
      towers in the execution, the two robots $r$ and $r'$ consider two opposed 
      global directions during the Move phase of time $t_{act}$. Therefore 
      during the Move phase of time $t_{act}$, $r'$ is on node $u$ considering
      the global direction $gd$. By applying case $(ii)$ by 
      recurrence, we can say that from the Move phase of time $t$ to the Move 
      phase of time $t"$ there always exists a robot on node $u$ considering the 
      global direction $gd$. Therefore during the Move phase of time $t"$ a 
      robot moves on node $v$. Since the robot does not change the global 
      direction they consider during Look phases, during the Look phase of time
      $t" + 1$ this robot still considers the global direction $gd$.      
    
      This prove the property. By applying recurrently this property to any 
      robot, we prove that all the nodes are infinitely often visited.
      
  
 \end{description}
Thus, we obtain the desired result in every cases.
\end{proof}

\begin{lemma} \label{OneEdgeMissing}
 \PEF~is a perpetual exploration algorithm for the class of connected-over-time
 but not edge-recurrent rings of arbitrary size using three robots.
\end{lemma}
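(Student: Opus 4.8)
The plan is to exploit that, since $\mathcal{G}$ is connected-over-time but its eventual underlying graph differs from its footprint (a ring), there is \emph{exactly one} eventual missing edge $e$ — two non-recurrent edges would disconnect the eventual underlying graph. Fix a time $t_{missing}$ after which $e$ is never present, let $u$ and $v$ be the endpoints of $e$, and restrict attention to the suffix after $t_{missing}$: from then on the robots evolve on the line $L$ obtained by removing $e$ from the ring, and every edge of $L$ is recurrent. By Lemma~\ref{OneEdgeMissingTowerBroken} every long-lived tower has finite duration; since a long-lived tower involves $2$ or $3$ robots and since, by Lemma~\ref{noThreeLongLivedTower}, once there is no $3$-long-lived tower there is never one again, we may further restrict to a suffix $\mathcal{E}_0$ containing no $3$-long-lived tower, and split on the number of $2$-long-lived towers it contains.

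If $\mathcal{E}_0$ contains infinitely many $2$-long-lived towers, we are done immediately: Lemmas~\ref{needToVisitTheWholeRing} and~\ref{needToVisitTheWholeRing_bis} show that every node of $\mathcal{G}$ is visited between two consecutive $2$-long-lived towers (counted from the second one), hence infinitely often. If $\mathcal{E}_0$ contains only finitely many $2$-long-lived towers, then each of them having finite duration, there is an infinite suffix $\mathcal{E}'$ with \emph{no} long-lived tower at all, so by Lemma~\ref{noThreeShortLivedTower} every configuration of $\mathcal{E}'$ consists either of three isolated robots or of one $2$-short-lived tower plus one isolated robot. This last case is the heart of the proof, and the idea is to establish that $\mathcal{E}'$ converges to the ``sentinels/visitor'' configuration described in the algorithm's principle.

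For that I would prove the following local facts in $\mathcal{E}'$. $(i)$ A robot changes the direction it considers only when it is edge-activated inside a tower, and since all towers of $\mathcal{E}'$ are short-lived this can only reverse a robot via the predicate $IWasStuckOnMyNodeAndNowWeAreMoreRobots()$; in particular a robot can get \emph{permanently} stuck only while sitting at $u$ or at $v$ pointing into $e$. $(ii)$ An isolated robot not located at $u$ or $v$ keeps its direction, and — all edges of $L$ being recurrent and every interior meeting of two in-flight robots being resolved into a short-lived tower whose robots leave in opposite directions — it reaches $u$ or $v$ in finite time and becomes stuck there pointing into $e$. $(iii)$ When a robot reaches an endpoint already occupied by a stuck robot, the incoming robot has just moved while the resident one has not, so at their first common edge-activation $IWasStuckOnMyNodeAndNowWeAreMoreRobots()$ holds only for the resident, which reverses and leaves as a fresh ``visitor'' while the incoming robot stays stuck as the new ``sentinel''; the resulting tower therefore breaks at that very edge-activation and is short-lived, consistently with the case assumption. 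Combining $(i)$–$(iii)$ with Observation~\ref{2robotsInTheSameDirection} and a pigeonhole argument on the two endpoints of $e$, I would conclude that in finite time one robot is permanently stuck at $u$, one permanently stuck at $v$, and the third is a visitor that, each time it is turned back at an endpoint, crosses the whole of $L$ before reaching the other sentinel; since each such traversal is finite and there are infinitely many of them, every node of $\mathcal{G}$ is visited infinitely often.

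The main obstacle is precisely this convergence argument inside $\mathcal{E}'$: because the initial configuration is arbitrary, one must rule out the pathological direction patterns it allows (in particular the subcase where two or three robots drift toward the same endpoint of $e$), and one must verify, predicate by predicate, that the only configuration of two or three mutually interacting robots that can persist under ``no long-lived tower and $e$ missing forever'' is the sentinels/visitor one — equivalently, that no other stuck arrangement survives and that the visitor cannot be prevented from sweeping all of $L$. The two easy sub-cases (infinitely many $2$-long-lived towers, and the reductions performed via Lemmas~\ref{OneEdgeMissingTowerBroken} and~\ref{noThreeLongLivedTower}) are routine by contrast.
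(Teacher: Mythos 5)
Your decomposition is exactly the paper's: exactly one eventual missing edge $e$, restriction to the suffix after $t_{missing}$, elimination of $3$-long-lived towers via Lemmas~\ref{OneEdgeMissingTowerBroken} and~\ref{noThreeLongLivedTower}, the infinitely-many-$2$-long-lived-towers case dispatched by Lemmas~\ref{needToVisitTheWholeRing} and~\ref{needToVisitTheWholeRing_bis}, and a final suffix with no long-lived tower handled by a sentinels/visitor argument. The easy cases are fine. The problem is that the one step you yourself flag as ``the main obstacle'' is where the proof actually lives, and your sketch of it contains a claim that is false as stated: in step~$(ii)$ you assert that an isolated robot not at $u$ or $v$ ``reaches $u$ or $v$ in finite time.'' That is not true of a fixed robot. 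In the no-long-lived-tower suffix a robot may meet another robot, form a $2$-short-lived tower, and come out of it pointing the \emph{other} way; this can happen to the same robot arbitrarily often, so no individual robot is guaranteed to reach an endpoint of $e$.

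The missing idea — which is precisely what the paper's Case~3 supplies — is to replace the statement about a fixed robot by an invariant about a \emph{direction at a node}: if at some time a robot sits at node $w$ considering global direction $gd$, then at every subsequent round up to the first round where the edge out of $w$ in direction $gd$ is present, \emph{some} robot sits at $w$ considering $gd$. The proof is a short induction: the only way the robot currently holding that role abandons it is by being edge-activated in a $2$-short-lived tower, and by definition of a $2$-short-lived tower its partner then occupies $w$ and considers $gd$. Since that edge is recurrent (for $w$ not an endpoint of $e$ pointing into $e$), some robot eventually crosses it still considering $gd$, and the ``token'' $(w,gd)$ advances one node toward the extremity of $e$. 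Iterating this invariant $d$ times is what establishes both that a sentinel appears at each extremity of $e$ (your pigeonhole step also needs the paper's extra sub-argument for the configuration where all three robots initially consider the same global direction, which is resolved by the first $2$-short-lived tower formed at an extremity) and that the visitor's sweeps cover all of $L$. Without this invariant, steps~$(ii)$ and the final ``each traversal is finite'' claim do not go through, so as written the proposal has a genuine gap at its central step.
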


\begin{proof}
 Consider that $\mathcal{G}$ is a connected-over-time but not edge-recurrent 
 ring. This implies that there exists exactly one eventual missing edge $e$ in 
 $\mathcal{G}$. Denote by $\mathcal{E}^1$ the maximal suffix of $\mathcal{E}$ in
 which the eventual missing edge never appears. Let $t_{missing}$ the time after 
 which $e$ never appears again. Let us study the following cases.
 
 \begin{description}
  \item [Case 1:] \textbf{There exists at least one 3-long-lived tower in 
  $\mathbf{\mathcal{E}^{1}}$.}
  
  According to Lemma~\ref{OneEdgeMissingTowerBroken}, this 3-long-liver tower is 
  broken in finite time. Moreover, once this tower is broken, according to
  Lemma~\ref{noThreeLongLivedTower}, it is impossible to have a configuration 
  containing a 3-long-lived tower. Then, $\mathcal{E}^{1}$ admits an infinite 
  suffix that matches either case 2 or 3.

  \item [Case 2:] \textbf{There exists at least one 2-long-lived tower in 
  $\mathbf{\mathcal{E}^{1}}$.}
   \begin{description}
    \item [Case 2.1:] \textbf{There exists a finite number of 2-long-lived 
    towers in $\mathbf{\mathcal{E}^{1}}$.}
    
    According to Lemma~\ref{OneEdgeMissingTowerBroken}, the last 2-long-lived
    tower is broken in finite time. Since by Lemma~\ref{noThreeLongLivedTower},
    it cannot exists 3-long-lived tower in $\mathcal{E}^{1}$, then
    $\mathcal{E}^{1}$ admits an infinite suffix with no long-lived tower hence
    matching Case 3.

    \item [Case 2.2:] \textbf{There exist an infinite number of 2-long-lived 
    towers in $\mathbf{\mathcal{E}^{1}}$.}
    
    By Lemmas~\ref{needToVisitTheWholeRing} and
    \ref{needToVisitTheWholeRing_bis}, we know that between two consecutive 
    2-long-lived towers (from the second one), all the nodes of $\mathcal{G}$ 
    are visited. As there is an infinite number of 2-long-lived towers, all the
    nodes of $\mathcal{G}$ are infinitely often visited.
   \end{description}  
   
  \item [Case 3:] \textbf{There exist no long-lived tower in
  $\mathbf{\mathcal{E}^{1}}$.}
   
  By Lemma~\ref{noThreeShortLivedTower}, all configurations in $\mathcal{E}^{1}$
  contain either three isolated robots or one 2-short-lived tower and one 
  isolated robot. 

  \noindent $(1)$
  We want to prove the following property. If during the Look phase of a 
  time $t$ in $\mathcal{E}^{1}$, a robot considers a global direction $gd$
  and is located on a node at a distance $d \neq 0$ in 
  $G$ ($G$ is the footprint of $\mathcal{G}$) from the extremity of $e$ in the 
  global direction $gd$, then it exists a time $t' \geq t$ such that, during the
  Look phase of time $t'$, a robot is on a node at distance $d - 1$ in
  $G$ from the extremity of $e$ in the global direction 
  $gd$ and considers the global direction $gd$. Let $v$ be the adjacent node of 
  $u$ in the global direction $gd$.
      
  Let $t" \geq t$ be the smallest time after time $t$ where the adjacent edge of
  $u$ in the global direction $gd$ is present in $\mathcal{G}$. As all the edges
  of $\mathcal{G}$ except $e$ are infinitely often present and as $u$ is at a
  distance $d \neq 0$ in $G$ from the extremity of $e$ in the
  global direction $gd$, then the adjacent edge of $u$ in the global direction
  $gd$ is infinitely often present in $\mathcal{G}$. Hence, $t"$ exists.
 
      
  \noindent $(i)$ If $r$ crosses the adjacent edge of $u$ in the global 
  direction $gd$ during the Move phase of time $t"$, then the property is 
  verified.
  
  \noindent $(ii)$ If $r$ does not cross the adjacent edge of $u$ in the 
  global direction $gd$ during the Move phase of time $t"$, this implies 
  that $r$ changes the global direction it considers during the Look phase
  of a time $t$. While executing \PEF, a robot can change the global 
  direction it considers only during Compute phases of times where it is 
  edge-activated and involved in a tower. Let $t_{act} \geq t$ be the first
  time after time $t$ such that during the Move phase of time $t_{act}$, $r$
  does not consider the global direction $gd$. Let $r'$ the robot involved 
  in a tower with $r$ at time $t_{act}$. Since there are only 2-short-lived 
  towers in the execution, the two robots $r$ and $r'$ consider two opposed 
  global directions during the Move phase of time $t_{act}$. Therefore 
  during the Move phase of time $t_{act}$, $r'$ is on node $u$ considering
  the global direction $gd$. By applying case $(ii)$ by 
  recurrence, we can say that from the Move phase of time $t$ to the Move 
  phase of time $t"$ there always exists a robot on node $u$ considering the 
  global direction $gd$. Therefore during the Move phase of time $t"$ a 
  robot moves on node $v$. Since the robot does not change the global 
  direction they consider during Look phases, during the Look phase of time
  $t" + 1$ this robot still considers the global direction $gd$.      

  This prove the property. 
  
  \noindent $(2)$
  We now want to prove that there exists a time $t_{reachExtremities}$ in 
  $\mathcal{E}^{1}$ from which one robot is forever located on each extremity of
  $e$ pointing to $e$.
      
 First, we want to prove that a robot reaches one of the extremities of $e$ 
 in a finite time after $t_{missing}$ and points to $e$ at this time. If it is 
 not the case at time $t_{missing}$, then there exists at this time a robot
 considering a global direction $gd$ and located on a node $u$ at distance
 $d \neq 0$ in $G$ from the extremity of $e$ in the global
 direction $gd$. By applying $d$ times the property $(1)$, we prove that, during
 the Look phase of a time $t_{reach} \geq t_{missing}$, a robot (denote it $r$) 
 reaches the extremity of $e$ in the global direction $gd$ from $u$ (denote it
 $v$ and let $v'$ be the other extremity of $e$), and that this robot considers
 the global direction $gd$ during the Look phase of time $t_{reach}$. 
 
 Then, we can prove that from time $t_{reach}$ there always exists a robot on
 node $v$ considering the global direction $gd$. Indeed, note that no robot can
 cross $e$ in the global direction $gd$ from time $t_{reach}$ since $e$ is 
 missing from time $t_{missing}$. Moreover while executing \PEF, a robot can 
 change the global direction it considers only during Compute phases of times
 where it is edge-activated and involved in a tower. Therefore if at a time
 $t_{change} \geq t_{missing}$, $r$ changes the global direction it considers at
 time $t_{reach}$ this is because it is involved in a tower. Since there are 
 only 2-short-lived towers in the execution, at time $t_{change}$, $r$ is
 involved in a tower with a robot $r'$, and $r$ and $r'$ consider two opposed
 global directions during the Move phase of time $t_{change}$. Therefore during 
 the Move phase of time $t_{change}$, $r'$ is on node $v$ considering the global
 direction $gd$. By applying this argument by recurrence, we can say that from
 time $t_{reach}$ there always exists a robot on node $v$ considering the global
 direction $gd$. 
  
 Now we prove that this is also true for the extremity $v'$ of $e$. If there 
 exists at time $t_{reach}$ a robot on node $v'$ considering the global 
 direction $\overline{gd}$, or if it exists a robot considering the global 
 direction $\overline{gd}$ on a node $u'$ at distance $d \neq 0$ in 
 $G$ from $v'$ in the global direction $\overline{gd}$, then by using 
 similar arguments than the one used for $v$, we can prove the property $(2)$. If
 this is not the case, this implies that at time $t_{reach}$ all the robots
 consider the global direction $gd$. Then in finite time (after time
 $t_{reach}$) by the property $(1)$, a robot reaches node $v$. Since from time 
 $t_{reach}$ there is always a robot on node $v$, there is a 2-short-lived tower
 formed. Then by definition of a 2-short-lived tower, there exists a time at
 which one of the robots of this tower considers the global direction $gd$ while 
 the other considers the global direction $\overline{gd}$. Then we can use the 
 same arguments as the one used previously to prove the property $(2)$.
      
 \noindent $(3)$ It stays to prove that in the Case 3 all the nodes are 
 infinitely often visited. We know that from time $t_{reachExtremities}$ one 
 robot is forever located on each extremity of $e$ pointing to $e$. Call $r"$ 
 the robot that is not on node $v$ (resp. $v'$) and pointing to $e$ at time 
 $t_{reachExtremities}$. Assume that at time $t_{reachExtremities}$, $r"$ is on
 node $u'$ and considers the global direction $gd$. Then by applying recurrently
 the property $(1)$ we can prove that, in finite time, all the nodes between the
 current node of $r"$ at time $t_{reachExtremities}$ and $v$ in the global
 direction $gd$ are visited and that $r"$ reaches $v$. Call 
 $t_{act}' \geq t_{reachExtremities}$, the first time after time
 $t_{reachExtremities}$ where there are two robots on node $v$ that are 
 edge-activated. 
 At time $t_{act}'$, the robot that is on node $v$ and pointing to $e$ at time 
 $t_{reachExtremities}$ changes the global direction it considers (hence
 considers $\overline{gd}$) by construction of \PEF~and since the tower formed 
 is a 2-short-lived tower. 

 We can then repeat this reasoning (with $v$ and $v'$ alternatively in the role 
 of $u'$ and with $v'$ and $v$ alternatively in the role of $v$) and prove that
 all nodes are infinitely often visited.      
      

 \end{description}
Thus, we obtain the desired result in every cases.
\end{proof}

To conclude the proof, first note that even if the robots can start in a non 
coherent state, it exists a time $t_{max}$ from which all the robots of the 
system are in a coherent state (by Lemma~\ref{convergence}). Then it is 
sufficient to observe that a connected-over-time ring is by definition either 
static, edge-recurrent but non static, or connected-over-time but not
edge-recurrent. As we prove the correctness of our algorithm from the 
time the robots are in a coherent state in these three cases in 
Lemmas~\ref{static}, \ref{SomeEdgesMissingSometimes}, and \ref{OneEdgeMissing}
respectively, we can claim the following final result.

\begin{theorem}
 \PEF~is a self-stabilizing perpetual exploration algorithm for the class of
 connected-over-time rings of arbitrary size (greater or equal to four) using
 three robots with distinct identifiers.
\end{theorem}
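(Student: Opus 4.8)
The plan is to reduce the statement to the three correctness lemmas already established, via a short case analysis on the dynamicity of the ring, after first absorbing the effect of the arbitrary initialisation of the robots' variables.

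First I would invoke Lemma~\ref{convergence}: for each of the three robots $r_1, r_2, r_3$ there is a time after which its state is permanently coherent; letting $t_{max}$ be the maximum of these three times, all three robots are in a coherent state from round $t_{max}$ on. This is the step that neutralises the corruption of the robots' memory. The other source of arbitrariness in the initial configuration — the positions of the robots, in particular the possibility of initial towers — needs no separate treatment here, since it is already handled inside the three correctness lemmas, whose proofs start from an execution from an arbitrary configuration and which, moreover, internally carry out the passage to the suffix after $t_{max}$.

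Next I would split on the type of the connected-over-time ring $\mathcal{G}$. By definition of this class, the eventual underlying graph $U_\mathcal{G}^\omega$ is connected; since the footprint $G$ is a ring, only two situations can occur. Either every edge of $G$ is recurrent, so $U_\mathcal{G}^\omega = G$ and $\mathcal{G}$ is edge-recurrent — in which case I further distinguish whether $\mathcal{G}$ is static or edge-recurrent but non-static — or at least one edge of $G$ is eventually missing; and then exactly one, because two distinct eventual missing edges would disconnect $U_\mathcal{G}^\omega$ and contradict membership in the class, so $\mathcal{G}$ is connected-over-time but not edge-recurrent. Hence the three subclasses treated in Lemmas~\ref{static}, \ref{SomeEdgesMissingSometimes}, and \ref{OneEdgeMissing} exhaust the class of connected-over-time rings.

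Finally, in each of the three cases I would apply the corresponding lemma to conclude that every node of $\mathcal{G}$ is visited infinitely often by at least one robot in the suffix of $\mathcal{E}$ starting at round $t_{max}$; since this suffix is itself an execution of \PEF~from the configuration $\gamma_{t_{max}}$, the definition of self-stabilization is satisfied with $\gamma_i = \gamma_{t_{max}}$. As the rings considered have size $n \geq 4$, which is precisely the hypothesis under which the three lemmas are stated, the theorem follows. The real difficulty is not in this assembly step but upstream: proving Lemma~\ref{SomeEdgesMissingSometimes} and especially Lemma~\ref{OneEdgeMissing}, whose arguments rest on the full battery of tower-structure results of this section and, through Corollary~\ref{breakingTowerCorollary}, on the combinatorial Lemmas~\ref{separation_possible} and \ref{separation_possible_bis} that guarantee the \textsc{GiveDirection} mechanism eventually breaks any persistent long-lived tower.
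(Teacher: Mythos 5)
Your proposal matches the paper's own proof essentially verbatim: invoke Lemma~\ref{convergence} to obtain $t_{max}$, partition the class of connected-over-time rings into static, edge-recurrent but non-static, and connected-over-time but not edge-recurrent, and apply Lemmas~\ref{static}, \ref{SomeEdgesMissingSometimes}, and \ref{OneEdgeMissing} respectively to the suffix after $t_{max}$. Your additional remarks (exhaustiveness of the three subclasses via uniqueness of the eventual missing edge, and the explicit identification of $\gamma_i=\gamma_{t_{max}}$ in the definition of self-stabilization) are correct and only make explicit what the paper leaves implicit.
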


\section{Sufficiency of Two Robots for $n=3$}\label{sec:algo2robots}

   In this section, we present \PEFR, a self-stabilizing algorithm solving 
deterministically the perpetual exploration problem on connected-over-time rings
of size equal to 3, using two robots possessing distinct identifiers.

To present this algorithm we add a new predicate, named 
$I\-Am\-Stuck\-Alone\-On\-My\-Node()$ defined as follows.

\vspace{0.2cm}
\noindent\begin{tabular}{l@{}l@{}l}
$I\-Am\-Stuck\-Alone\-On\-My\-Node()\equiv$\\
& & $(NumberOfRobotsOnNode() = 1)$\\
& $\wedge$ & $\lnot ExistsEdgeOnCurrentDirection()$ \\
& $\wedge$ & $ExistsEdgeOnOppositeDirection()$
\end{tabular}
\vspace{0.1cm}

\begin{algorithm}
\caption{\PEFR} \label{SSPER}
\footnotesize
    \begin{algorithmic} [1]
            \If {$WeAreStuckInTheSameDirection()$} 
                \State \Call{GiveDirection}{}
            \EndIf
            \If {$I\-Am\-Stuck\-Alone\-On\-My\-Node()$}
		\State \Call{OppositeDirection}{}
		
	    \EndIf
	    \State \Call{Update}{}
    \end{algorithmic}
\end{algorithm}

The pseudo-code of \PEFR~ is given in Algorithm~\ref{SSPER}. 


\paragraph{Proof of correctness.} We now prove the correctness of this 
algorithm.

First, note that Lemmas~\ref{convergence}, \ref{separation_possible},
\ref{separation_possible_bis} are also true for \PEFR.

To show the correctness of \PEFR, we need to introduce some lemmas. We consider
that the two robots executing \PEFR~are $r_{1}$ and $r_{2}$. Let $t_{1}$, and 
$t_{2}$ be respectively the time at which the robot $r_{1}$ and $r_{2}$ are 
in a coherent state. Let $t_{max} = max\{t_{1}, t_{2}\}$. From 
Lemma~\ref{convergence}, the two robots are in a coherent state from $t_{max}$. 
In the remaining of the proof, we focus on the suffix of the execution after
$t_{max}$. The other notations correspond to the ones introduced in
Section~\ref{sec:algo3robots}.

\begin{lemma} \label{noTwoLongLivedTowerAnymore}
 Every execution starting from a configuration without a 2-long-lived tower
 cannot reach a configuration with a 2-long-lived tower.
\end{lemma}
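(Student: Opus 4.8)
The plan is to argue by contradiction, isolating the \emph{earliest} $2$-long-lived tower and dissecting the single round that creates it. Suppose some execution starts from a configuration with no $2$-long-lived tower but later contains one, and among all $2$-long-lived towers occurring in it pick $T=(S,[t_s,t_e])$ with $t_s$ least; since only two robots run \PEFR{} we have $S=\{r_1,r_2\}$, and the hypothesis gives $t_s\ge 1$, so round $t_s-1$ exists and no $2$-long-lived tower is present then. I would first argue that $r_1$ and $r_2$ occupy \emph{distinct} nodes at round $t_s-1$: were they co-located there, then on a $3$-node ring they would, during round $t_s-1$, either both stay put or both traverse the single edge leading to their common node at round $t_s$, so $(\{r_1,r_2\},[t_s-1,t_e])$ would again be a tower, and it would contain the edge-activation guaranteed in $[t_s,t_e[\,\subseteq\,[t_s-1,t_e[$ by Definition~\ref{LongLivedTower}, hence be a $2$-long-lived tower present at $t_s-1$, contradicting minimality. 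Write $a$ (resp.\ $b$) for the node of $r_1$ (resp.\ $r_2$) at round $t_s-1$; then $a$ and $b$ are the two distinct, hence adjacent, nodes, $c$ denotes the third node, and since $t_s\in[t_s,t_e]$ the robots are co-located at some node $v\in\{a,b,c\}$ at round $t_s$.

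The crux is a small invariant of \PEFR{} that disposes of $v\in\{a,b\}$: during any round, an \emph{isolated} robot sitting on a node at least one of whose two incident edges is present necessarily leaves that node. Indeed, if the robot's direction points toward a present incident edge it moves along it; and if it points toward a missing incident edge, then the present incident edge lies in its opposite direction, so $ExistsEdgeOnCurrentDirection()$ is false while $ExistsEdgeOnOppositeDirection()$ is true, and, the robot being alone, $IAmStuckAloneOnMyNode()$ holds, whence \textsc{OppositeDirection} turns it toward the present edge and it moves (the conjunct $NumberOfRobotsOnNode()>1$ keeps $WeAreStuckInTheSameDirection()$ false, so no other change interferes). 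Now if $v=a$ then $r_2$ reaches $a$ from $b$, so the edge $\{a,b\}$ is present during round $t_s-1$; but then $r_1$ is isolated on $a$ with a present incident edge, so it leaves $a$ that round, contradicting $r_1$ being on $v=a$ at round $t_s$. The case $v=b$ is symmetric (exchanging $r_1,r_2$ and $a,b$); both are impossible.

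It remains to rule out $v=c$. Here $r_1$ crosses $\{a,c\}$ and $r_2$ crosses $\{b,c\}$ during round $t_s-1$: in particular both robots move, one enters $c$ clockwise and the other counter-clockwise, so at round $t_s$ the two robots sit on $c$ holding \emph{opposite} global directions. Moreover each robot is edge-activated at round $t_s-1$ and has just moved, so \textsc{Update} sets $HasMovedPreviousEdgeActivation$ to true for both. Let $t_{act}\in[t_s,t_e[$ be the first edge-activation of the pair inside $[t_s,t_e[$ (it exists by Definition~\ref{LongLivedTower}); throughout the rounds $[t_s,t_{act})$ the robots' node has no incident edge, so they neither move nor change direction and these flags are untouched. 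Hence at round $t_{act}$ the robots are still on $c$, still with opposite directions and $HasMovedPreviousEdgeActivation$ true, so $IAmStuckAloneOnMyNode()$ is false (two robots are present) and $WeAreStuckInTheSameDirection()$ is false (its conjunct $\lnot HasMovedPreviousEdgeActivation$ fails): neither robot changes direction at $t_{act}$. But the pair is edge-activated, so some incident edge of $c$ is present and at least one robot moves; as they point in opposite directions they end round $t_{act}$ on distinct nodes, contradicting that $T$ keeps them co-located at round $t_{act}+1\le t_e$. Every case being contradictory, the lemma follows.

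I expect the real work to be phase-level bookkeeping rather than any deep idea. One has to be careful that the direction a robot holds during the \emph{Move} phase of round $t_s-1$ may differ from the one it held during the \emph{Look} phase (it can flip in the intervening \textsc{Compute}), that $IAmStuckAloneOnMyNode()$ is evaluated before \textsc{Update}, and that $HasMovedPreviousEdgeActivation$ and $NumberRobotsPreviousEdgeActivation$ propagate correctly through the silent rounds $[t_s,t_{act})$; this last point is also why, as the first lemma of the \PEFR{} analysis, the argument must be carried out directly from the mechanics of Algorithm~\ref{SSPER} (with $IAmStuckAloneOnMyNode()$ playing here the role of the ``sentinel/visitor'' switch of the three-robot algorithm) rather than by reusing the tower lemmas of Section~\ref{sec:algo3robots}.
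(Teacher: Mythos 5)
Your proof is correct and follows essentially the same route as the paper's: take the first $2$-long-lived tower, argue that it can only form by a head-on meeting (so both robots end up co-located with opposite global directions and $HasMovedPreviousEdgeActivation$ true), and conclude that at the next edge-activation neither $WeAreStuckInTheSameDirection()$ nor $IAmStuckAloneOnMyNode()$ fires, so the robots keep opposite directions and separate. The only difference is that you explicitly justify the ``they can only meet head-on'' step (via the isolated-robot-must-move invariant and the maximality argument ruling out co-location at $t_s-1$), which the paper's proof asserts without detail.
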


\begin{proof}
 Assume that $\mathcal{E}$ starts from a configuration which does not contain a
 2-long-lived tower. By contradiction, let $C$ be the first configuration of 
 $\mathcal{E}$ containing a 2-long-lived tower $T = (S, [t_{s}, t_{e}])$.

 Let $t_{act} \geq t_{s}$ be the first time after time $t_{s}$ where the 2 
 robots of $T$ are edge-activated. By definition of a long-lived tower,
 $t_{act}$ exists.
 
 For a 2-long-lived tower to be formed at time $t_{s}$, $r_{1}$ and $r_{2}$ must 
 meet at time $t_{s}$. While executing \PEFR, the two robots can meet at time 
 $t_{s}$ only because they are moving considering opposed global directions 
 during the Move phase of time $t_{s} - 1$. Therefore, since the variables of a
 robot are updated only during Compute phases of time where it is
 edge-activated, during the Look phase of time $t_{act}$, the predicates 
 $We\-Are\-Stuck\-In\-The\-Same\-Direction()$ of the two robots are false (since their
 variables $Has\-Moved\-Pre\-vious\-Edge\-Acti\-va\-tion$ are true). Moreover, 
 during the Look phase of time $t_{act}$ the predicates
 $I\-Am\-Stuck\-Alone\-On\-My\-Node()$ of the two robots are false (since their 
 predicates $NumberOfRobotsOnNode()$ is not equal to 1). Hence during the Move 
 phase of time $t_{act}$ the two robots still consider two opposed global 
 directions. Therefore $T$ is broken at time $t_{act}$, which leads to a 
 contradiction with the fact that $T$ is a 2-long-lived tower. This proves the 
 lemma.
\end{proof}

Let $t_{act1}$ (resp. $t_{act2}$) be the first time in the execution at which 
the robot $r_{1}$ (resp. $r_{2}$) is edge-activated. By definition, we have 
$t_{1} = t_{act1} + 1$ and $t_{2} = t_{act2} + 1$. By
Lemma~\ref{noTwoLongLivedTowerAnymore}, if there exists a 2-long-lived tower in
$\mathcal{E}$, then this 2-long-lived tower is present in the execution from 
time $t_{0} = 0$. In this case $t_{1} = t_{2} = t_{max}$ and at time
$t_{max} - 1$ the robots are edge-activated for the first time of the execution.

\begin{lemma} \label{propertyLongLivedTowerBis}
 The robots of a long-lived tower $T = (S, [t_{s}, t_{e}])$ consider a same 
 global direction at each time between the Look phase of round $t_{max}$ and the
 Look phase of round $t_{e}$ included.
\end{lemma}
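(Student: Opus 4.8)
The plan is to transfer the proof of Lemma~\ref{propertyLongLivedTower} to the \PEFR{} setting, with two adjustments: here there are only two robots (so $S=\{r_1,r_2\}$ and every tower is a $2$-tower), and the property is only claimed from round $t_{max}$ on, because the prefix before $t_{max}$ may be incoherent.

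First I would locate the tower. Since $T$ is a long-lived tower and there are only two robots, $S=\{r_1,r_2\}$; by Lemma~\ref{noTwoLongLivedTowerAnymore} such a tower cannot appear during the execution, hence $t_s=0$. As $r_1$ and $r_2$ are co-located at every round of $[0,t_e]$, they see the same adjacent edges and are therefore edge-activated at exactly the same rounds, so their common first edge-activation is $t_{act1}=t_{act2}=t_{max}-1$. Because $T$ is long-lived there is an edge-activation of $S$ in $[0,t_e[$, and since the first edge-activation of $S$ is $t_{max}-1$, necessarily $t_{max}-1<t_e$, so $t_e\geq t_{max}$ and the claimed interval is non-empty; moreover no robot of $S$ is edge-activated before round $t_{max}-1$.

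Next I would prove the base case: at the Look phase of round $t_{max}$ the two robots consider the same global direction. At round $t_{max}-1$ both robots are edge-activated and co-located, and the tower still exists at round $t_{max}$. If both robots move during the Move phase of $t_{max}-1$, the Tower definition forces them to traverse the same edge, so they consider the same global direction; if neither moves, then, the node being edge-activated, it has a present adjacent edge, and had the two robots pointed to opposite global directions one of them would cross that edge and leave, breaking the tower --- a contradiction; hence in both cases they consider the same global direction during the Move phase of $t_{max}-1$. Since $dir$ can be modified only during Compute phases, each robot considers at the Look phase of $t_{max}$ the same global direction it considers during the Move phase of $t_{max}-1$, so the two robots agree at the Look phase of $t_{max}$.

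Finally I would run the recurrence of Lemma~\ref{propertyLongLivedTower} over $t\in[t_{max},t_e[$: assuming the robots of $S$ agree on the global direction at the Look phase of $t$, note that at round $t$ the predicate $IAmStuckAloneOnMyNode()$ is false for both of them ($NumberOfRobotsOnNode()\neq 1$ since they are co-located), so the only possible change of $dir$ during the Compute phase of $t$ is through \textsc{GiveDirection}; whatever it does, the survival of the tower to round $t+1$ forces the two robots to still agree afterwards --- indeed, if they are edge-activated at $t$ they either traverse the same edge or both remain stuck at a missing edge while a present adjacent edge exists, which is only possible when they consider the same global direction during the Move phase of $t$, and if they are not edge-activated at $t$ they keep their common direction. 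Hence they agree at the Look phase of $t+1$ as well, and since directions change only in Compute phases, a straightforward induction gives that the robots of $S$ consider a same global direction from the Look phase of round $t_{max}$ to the Look phase of round $t_e$ included. The step I expect to be the main obstacle is exactly this base case: unlike in Lemma~\ref{propertyLongLivedTower}, we cannot inspect round $t_s$ directly, because the prefix up to $t_{max}$ may be incoherent and the robots may have different chirality, so the alignment of global directions must be extracted from the \emph{first} edge-activation and argued purely in terms of which edge is (or would be) crossed, not in terms of the stored value of $dir$.
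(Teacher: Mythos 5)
Your proposal is correct and follows essentially the same route as the paper's proof: it uses Lemma~\ref{noTwoLongLivedTowerAnymore} to pin $t_s=0$ and place the first edge-activation at $t_{max}-1$, establishes agreement at the Look phase of $t_{max}$ from the survival of the tower through that activation, and then runs the same induction over $[t_{max},t_e[$ as in Lemma~\ref{propertyLongLivedTower}. Your base case is merely spelled out in more detail (the explicit two-case analysis of both robots moving versus neither moving) than the paper's one-line "otherwise contradiction" argument, but the underlying reasoning is identical.
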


\begin{proof}
 Consider a long-lived tower $T = (S, [t_{s}, t_{e}])$. We know that 
 $t_{s} = t_{0} = 0$, that $t_{1} = t_{2} = t_{max}$ and that at time 
 $t_{max} - 1$ the robots are edge-activated for the first time of the 
 execution. During the Move phase of time $t_{max} - 1$, the two robots consider
 the same global direction, otherwise there is a contradiction with the fact
 that $T$ is a 2-long-lived tower.

 When executing \PEFR, a robot can change the global direction it considers only
 when it is edge-activated. Besides, during the Look phase of a time $t$ a robot
 considers the same global direction than the one it considers during the Move
 phase of time $t - 1$.
 
 Consider a time $t \in [t_{max}, t_{e}[$. If at time $t$ the robots of $S$ are 
 not edge-activated, then during the Move phase of time $t$ the robots of $S$ do
 not change the global direction they consider. 
 
 If at time $t$ the robots of $S$ are edge-activated, then during the Move phase
 of time $t$, since $t \neq t_{e}$, the robots of $S$ consider the same global
 direction, otherwise there is a contradiction with the fact that $T$ is a 
 long-lived tower from time $t_{s}$ to time $t_{e}$.
 
 Since at time $t_{max} - 1$ the robots of $S$ consider the same global
 direction using the two previous arguments by recurrence on each time 
 $t \in [t_{max}, t_{e}[$ and the fact that robots change the global directions
 they consider only during Compute phases, we can conclude that the robots of 
 $S$ consider a same global direction from the Look phase of time $t_{max}$ to 
 the Look phase of time $t_{e}$ included.
\end{proof}

\begin{lemma} \label{successive_bits_consideration_bis}
 For any long-lived tower $T = (S, [t_{s}, t_{e}])$, and any $t \leq t_{e}$,
 such that the robots of $S$ have been edge-activated twice between $t_{s}$ 
 included and $t$ not included, we have
 $We\-Are\-Stuck\-In\-The\-Same\-Dire\-ction()(r_{1}, t)$ $=$ 
 $We\-Are\-Stuck\-In\-The\-Same\-Dire\-ction()(r_{2}, t)$.
\end{lemma}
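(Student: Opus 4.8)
The plan is to transpose the proof of Lemma~\ref{successive_bits_consideration} to the two-robot setting, using Lemma~\ref{propertyLongLivedTowerBis} in place of Lemma~\ref{propertyLongLivedTower}. The reason the hypothesis now asks for \emph{two} edge-activations rather than one is precisely that Lemma~\ref{propertyLongLivedTowerBis} only guarantees that the robots of the tower agree on their global direction from the Look phase of round $t_{max}$ onward, whereas the first edge-activation of the tower occurs at round $t_{max}-1$; the second edge-activation is the first one at which direction agreement is available, and that is what is needed to synchronise the two robots' ``previous edge-activation'' variables.

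First I would fix a long-lived tower $T=(S,[t_{s},t_{e}])$; since $|S|>1$ and only two robots run \PEFR, we have $S=\{r_{1},r_{2}\}$, so $T$ is a $2$-long-lived tower. By Lemma~\ref{noTwoLongLivedTowerAnymore} and the observation made just before Lemma~\ref{propertyLongLivedTowerBis}, $t_{s}=t_{0}=0$, $t_{1}=t_{2}=t_{max}$, and the robots of $S$ are edge-activated for the first time of the execution at round $t_{max}-1$. Now let $t\le t_{e}$ be such that the robots of $S$ are edge-activated at (at least) two distinct rounds of $[t_{s},t[$, and let $t_{a1}<t_{a2}$ be the two smallest such rounds. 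Since $t_{max}-1$ is the first edge-activated round overall, $t_{a1}\ge t_{max}-1$, hence $t_{a2}\ge t_{max}$; moreover $t_{a2}<t\le t_{e}$, so in particular $t_{max}\le t_{a2}<t\le t_{e}$ and $[t_{a2},t_{e}]\subseteq[t_{max},t_{e}]$.

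The core step is to show that after the Compute phase of round $t_{a2}$ the two robots hold identical values in $NumberRobotsPreviousEdgeActivation$ and $HasMovedPreviousEdgeActivation$, and that this remains true up to round $t$. At round $t_{a2}$, and at every edge-activated round of the interval $[t_{a2},t_{e}]$, the robots of $S$ are co-located (tower definition) and, by Lemma~\ref{propertyLongLivedTowerBis}, consider the same global direction; hence their predicates $NumberOfRobotsOnNode()$ (both equal to $2$) and $ExistsEdgeOnCurrentDirection()$ coincide, so the call to \textsc{Update} assigns them identical values. As these two variables are modified only by \textsc{Update}, and only on edge-activated rounds, and since nothing changes on non-edge-activated rounds, we get $NumberRobotsPreviousEdgeActivation(r_{1},t)=NumberRobotsPreviousEdgeActivation(r_{2},t)$ and $HasMovedPreviousEdgeActivation(r_{1},t)=HasMovedPreviousEdgeActivation(r_{2},t)$. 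Finally, at round $t$ (which satisfies $t_{max}\le t\le t_{e}$) the robots of $S$ are again co-located and, by Lemma~\ref{propertyLongLivedTowerBis}, consider the same global direction, so $NumberOfRobotsOnNode()$, $ExistsEdgeOnCurrentDirection()$ and $ExistsEdgeOnOppositeDirection()$ also coincide for $r_{1}$ and $r_{2}$ at round $t$. Since $WeAreStuckInTheSameDirection()$ is a Boolean combination of exactly these five quantities, it evaluates identically for $r_{1}$ and $r_{2}$ at round $t$, which is the claim.

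The main obstacle — and the only place where the two-edge-activation hypothesis is genuinely used — is the synchronisation of the two ``previous edge-activation'' variables. The update performed at the first edge-activation, round $t_{max}-1$, cannot be guaranteed consistent across the two robots, because at that round Lemma~\ref{propertyLongLivedTowerBis} does not yet apply and the robots may still disagree on $dir$ (their direction variables being arbitrarily initialised); thus $HasMovedPreviousEdgeActivation$ could be assigned different values. One must therefore wait for the second edge-activation, which by the counting argument above necessarily lies at round $t_{max}$ or later, where direction agreement is available and the update becomes identical for both robots.
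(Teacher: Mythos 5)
Your proposal is correct and follows essentially the same route as the paper's proof: reduce to $t_{s}=0$ with the first edge-activation at $t_{max}-1$, invoke Lemma~\ref{propertyLongLivedTowerBis} for direction agreement from $t_{max}$ on, observe that the second edge-activation (your $t_{a2}$, the paper's $t_{act}$) therefore synchronises $HasMovedPreviousEdgeActivation$ and $NumberRobotsPreviousEdgeActivation$, and conclude since $WeAreStuckInTheSameDirection()$ depends only on these synchronised quantities. Your closing remark explaining why the first edge-activation cannot be relied upon is a helpful addition the paper leaves implicit.
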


\begin{proof}
 Consider a long-lived tower $T = (S, [t_{s}, t_{e}])$. We know that 
 $t_{s} = t_{0} = 0$, that $t_{1} = t_{2} = t_{max}$ and that at time 
 $t_{max} - 1$ the robots are edge-activated for the first time of the 
 execution. Assume that between $t_{s}$ included and $t_{e}$ not included, the 
 robots of $T$ are edge-activated two or more times.
 
 By definition of a long-lived tower and by
 lemma~\ref{propertyLongLivedTowerBis}, from the Look phase of time $t_{max}$
 to the Look phase of time $t_{e}$ included, all the robots of $S$ are on a same 
 node and consider a same global direction. Therefore the values of their 
 respective predicates $Number\-Of\-Ro\-bots\-On\-No\-de()$, 
 $Exists\-Edge\-On\-Current\-Dire\-ction()$ and
 $Exists\-Edge\-On\-Oppo\-si\-te\-Dire\-ction()$ are identical from the Look 
 phase of time $t_{max}$ to the Look phase of time $t_{e}$ included.
 
 Let $t_{act} \geq t_{max}$ be the first time after $t_{max}$ such that the 
 robots of $T$ are edge-activated. By assumption, $t_{act}$ exists. When 
 executing \PEFR, a robot updates its variables
 $Has\-Moved\-Pre\-vious\-Edge\-Acti\-va\-tion$ and 
 $Number\-Ro\-bots\-Pre\-vious\-Edge\-Acti\-va\-tion$ respectively with the 
 values of its predicates $Exists\-Edge\-On\-Current\-Dire\-ction()$ and 
 $Number\-Of\-Ro\-bots\-On\-No\-de()$, only during Compute phases of times
 when it is edge-activated. Therefore, from the Look phase of time $t_{act} + 1$
 to the Look phase of time $t_{e}$ included, the robots of $S$ have the same
 values for their variables $Has\-Moved\-Pre\-vious\-Edge\-Acti\-va\-tion$ and 
 $Number\-Ro\-bots\-Pre\-vious\-Edge\-Acti\-va\-tion$.

 The predicate $We\-Are\-Stuck\-In\-The\-Same\-Dire\-ction()$ depends only on
 the values of the variables $Has\-Moved\-Pre\-vious\-Edge\-Acti\-va\-tion$, 
 $Number\-Ro\-bots\-Pre\-vious\-Edge\-Acti\-va\-tion$ and on the values of the 
 predicates $Number\-Of\-Ro\-bots\-On\-No\-de()$, 
 $Exists\-Edge\-On\-Current\-Dire\-ction()$, and 
 $Exists\-Edge\-On\-Oppo\-si\-te\-Dire\-ction()$. As seen previously, all these
 values are identical for all the robots of $S$ from the Look phase of time
 $t_{act} + 1$ until the Look phase of time $t_{e}$ included. This prove the
 lemma.
\end{proof}

From the Lemmas~\ref{successive_bits_consideration_bis}, 
\ref{separation_possible} and \ref{separation_possible_bis}, by noticing that
the robots of a long-lived tower $T$ cannot have their predicates
$I\-Am\-Stuck\-Alone\-On\-My\-Node()$ true as long as their are involved in $T$,
we can again obtain the corollary~\ref{breakingTowerCorollary} (the proof
is not exactly the same since the predicate 
$I\-Was\-Stuck\-On\-My\-No\-de\-And\-Now\-We\-Are\-More\-Ro\-bots()$ does not 
exist in \PEFR, however the proof is very similar, therefore not repeated in 
this section).

\begin{theorem}\label{th:algo2robots}
 \PEFR~is a deterministic self-stabilizing perpetual exploration algorithm for
 the class of connected-over-time rings of size equals to $3$ using 2 
 fully synchronous robots possessing distinct identifiers.
\end{theorem}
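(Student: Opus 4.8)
The plan is to mirror, \emph{mutatis mutandis}, the correctness proof of \PEF. First I would invoke Lemma~\ref{convergence} to discard an arbitrary prefix and work only on the suffix of $\mathcal{E}$ after the time $t_{max}$ from which both $r_{1}$ and $r_{2}$ are in a coherent state. Then I would split on the dynamicity of $\mathcal{G}$: a connected-over-time ring of size $3$ is either static, edge-recurrent but non-static, or connected-over-time but not edge-recurrent, so it suffices to prove that from $t_{max}$ every node is visited infinitely often in each of these three sub-classes, exactly as Lemmas~\ref{static}, \ref{SomeEdgesMissingSometimes} and \ref{OneEdgeMissing} do for three robots.

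The case analysis is considerably simpler here because there are only two robots: every tower involves both of them, and by Lemma~\ref{noTwoLongLivedTowerAnymore} a $2$-long-lived tower, if it ever occurs, is present from time $0$; hence there is at most one long-lived tower in the whole execution. So in each sub-class I would distinguish only three situations: (a) an infinite-duration $2$-long-lived tower, (b) a finite-duration $2$-long-lived tower, (c) no long-lived tower at all. Situation (b) reduces to (c): once the tower breaks, Lemma~\ref{noTwoLongLivedTowerAnymore} yields an infinite suffix with no long-lived tower. Situation (a) can arise only in a static or an edge-recurrent ring, since if $\mathcal{G}$ has an eventual missing edge the argument of Lemma~\ref{OneEdgeMissingTowerBroken}, together with the \PEFR-analogue of Corollary~\ref{breakingTowerCorollary} already noted in the text, breaks any long-lived tower in finite time; and in situation (a) I would argue as in Case~1.1 of Lemma~\ref{SomeEdgesMissingSometimes}: the predicate $I\-Am\-Stuck\-Alone\-On\-My\-Node()$ is false whenever both robots are co-located, and by the \PEFR-analogue of Corollary~\ref{breakingTowerCorollary} the predicate $We\-Are\-Stuck\-In\-The\-Same\-Dire\-ction()$ is eventually permanently false (otherwise the tower would break), so the two robots eventually fix their common global direction, and, an adjacent edge in that direction being infinitely often present, they circulate around the $3$-ring forever. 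The static case with no long-lived tower is immediate: no edge is ever missing, so neither guarding predicate can ever hold after $t_{max}$, no robot ever changes direction, and both circulate forever.

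The substantial work, and the main obstacle, is situation (c) in the two dynamic sub-classes, where every tower is $2$-short-lived (the two robots meet and, once edge-activated, leave in opposed global directions). There I would establish a ``progress'' property analogous to the ones proved in Case~3 of Lemmas~\ref{SomeEdgesMissingSometimes} and \ref{OneEdgeMissing}: if at some time a robot is on a node $u$ while considering a global direction $gd$ such that the neighbour of $u$ along $gd$ is not blocked by the eventual missing edge, then in finite time some robot occupies that neighbour while still considering $gd$; chaining this recurrently gives infinitely many visits to every node. Proving this property is the delicate point, because in \PEFR\ (unlike \PEF) an \emph{isolated} robot may change direction: facing a missing edge while the opposite edge is present, $I\-Am\-Stuck\-Alone\-On\-My\-Node()$ flips it. The key observations I would use are: (i) when a flip occurs the new current edge is present in that very round, so the robot actually moves that round --- a flip is never wasted, hence the robot traverses an edge at every edge-activation; (ii) in the edge-recurrent sub-class all three edges are recurrent, so a robot cannot be flipped forever without eventually crossing the edge it currently faces, and thus advances around the ring; (iii) in the non-edge-recurrent sub-class the unique eventual missing edge $e$ forces any robot reaching an endpoint of $e$ to turn around, so such a robot thereafter bounces along the whole length-$2$ path between the endpoints of $e$, touching all three nodes; (iv) whenever the two robots collide, the resulting short-lived tower ejects them in opposite directions, so progress of \emph{some} robot in the relevant direction is never destroyed.

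The crux, as indicated, is ruling out a robot becoming trapped oscillating between two adjacent nodes without ever reaching the third; this is exactly where having two robots (hence unavoidable short-lived collisions that reshuffle directions) and the forced turn-around at the eventual missing edge are both essential. Once the progress property is secured, combining it with the three sub-class analyses and the reduction of (b) to (c) establishes that \PEFR\ performs perpetual exploration from $t_{max}$ on every connected-over-time ring of size $3$, which with Lemma~\ref{convergence} proves the theorem.
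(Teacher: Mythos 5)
Your handling of the long-lived-tower cases is sound and essentially matches the paper's own Case~1 (the extra stratification by dynamicity sub-class is harmless there, since the paper's argument that an infinite $2$-long-lived tower keeps circulating works uniformly: if its current-direction edge were not infinitely often present, $We\-Are\-Stuck\-In\-The\-Same\-Dire\-ction()$ would become true and the tower would break). The genuine gap is in your situation (c), which is the heart of the theorem. The progress property you import from the three-robot proof is established there via the invariant ``some robot remains at $u$ pointing in direction $gd$ until one crosses'', and that invariant rests on the fact that in \PEF\ an \emph{isolated} robot never changes direction, so $gd$ can only be handed off to a co-located partner. In \PEFR\ this fails: an isolated robot facing a missing edge flips and \emph{leaves} $u$ in the opposite direction, and nobody inherits $gd$ at $u$. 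Your observation (ii), offered to repair this, is false: a robot can be flipped at every edge-activation and cross an edge every time while never advancing towards the third node --- it bounces between two adjacent nodes $a$ and $c$ while the adversary keeps each edge incident to $b$ absent exactly when the robot faces it, which is compatible with all three edges being recurrent. This is precisely the scenario behind the one-robot impossibility (Theorem~\ref{no_perpetual_exploration_one_robot}), so no single-robot, local hand-off argument can close it; and your observation (iv) is too weak because in \PEFR\ directions change outside of collisions.

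The paper closes this case by a different, $3$-ring-specific argument that your proposal would need to reconstruct. First, an isolated robot with at least one present adjacent edge \emph{always moves} (it either follows its direction or, via $I\-Am\-Stuck\-Alone\-On\-My\-Node()$, flips onto a present edge); hence two isolated robots on a $3$-ring can only meet by both moving onto the third node, so every meeting certifies that all three nodes were visited during that round and the previous one, and infinitely many $2$-short-lived towers immediately yield perpetual exploration. Second, if the robots are eventually never co-located and some node $u$ is never visited, they occupy the other two nodes forever; every move must then be a simultaneous swap across the edge joining them, after which both robots point towards $u$, so avoiding $u$ forces both edges incident to $u$ to be absent at every subsequent activation (and if the robots never move again, all three edges are eventually always absent) --- either way contradicting the connected-over-time assumption. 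Without an argument of this kind, exploiting both the forced move of an isolated edge-activated robot and the geometry of the $3$-ring, the crux you yourself identify remains open and the proposal does not establish the theorem.
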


\begin{proof}
 Consider that $\mathcal{G}$ is a connected-over-time ring of size $3$. First
 note that even if the robots can start in a non coherent state, 
 by Lemma~\ref{convergence}, it exists a time $t_{max}$ from which all the
 robots are in a coherent state. Let us study the following cases occurring when
 the robots are in a coherent state.
 
 \begin{description}
  \item [Case 1 :] \textbf{There exists at least one 2-long-lived tower in 
  $\mathbf{\mathcal{E}}$.}
  
  By Lemma~\ref{noTwoLongLivedTowerAnymore}, once a 2-long-lived tower is
  broken, it is not possible to have again a 2-long-lived tower in
  $\mathcal{E}$. Therefore there exists only one 2-long-lived tower in 
  $\mathcal{E}$.
 
    If the 2-long-lived tower of $\mathcal{E}$ has a finite duration, then
    by Lemma~\ref{noTwoLongLivedTowerAnymore}, $\mathcal{E}$ admits an infinite 
    suffix with no long-lived tower hence matching Case 2.
    
    If the 2-long-lived tower $T$ of $\mathcal{E}$ has an infinite 
    duration, the robots of $T$ eventually have their predicates 
    $We\-Are\-Stuck\-In\-The\-Same\-Dire\-ction()$ always false, otherwise,
    by Corollary~\ref{breakingTowerCorollary} $T$ is broken in finite time,
    which leads to a contradiction. Let $t_{false}$ be the time from which the 
    robots of $T$ have their predicates 
    $We\-Are\-Stuck\-In\-The\-Same\-Dire\-ction()$ always
    false. After time $t_{false}$ the robots of $T$ always consider the same 
    global direction (since their predicates 
    $I\-Am\-Stuck\-Alone\-On\-My\-Node()$ cannot be true). Moreover, after time 
    $t_{false}$ there exists infinitely often an adjacent edge to the location 
    of $T$ in the global direction considered by the robots of $T$, otherwise
    there exists a time after $t_{false}$ when the predicates 
    $We\-Are\-Stuck\-In\-The\-Same\-Dire\-ction()$ of the robots of $T$ are
    true, which is a contradiction. Hence after time $t_{false}$ the robots of 
    $T$ are infinitely often able to move in the same global direction. Since
    $\mathcal{G}$ has a finite size, all the robots visit infinitely often all 
    the nodes of $\mathcal{G}$.

  \item [Case 2:] \textbf{There exist no long-lived tower in
  $\mathbf{\mathcal{E}}$.}
   
   If there is no long-lived tower, this implies that if a tower is formed, then
   it is a 2-short-lived tower. By the connected-over-time assumption, each node
   has at least one adjacent edge infinitely often present. This implies that 
   any short-lived tower is broken in finite time. Two cases are now possible.
   
  \begin{description}
    \item [Case 2.1:] \textbf{There exists infinitely often a 2-short-lived tower
    in the execution.}
    

    Note that, if a tower is formed at a time
    $t$, then the three nodes have been visited between time $t - 1$ and time
    $t$. Then, the three nodes are infinitely often visited by a robot in the
    case where there exists infinitely often a 2-short-lived tower in the
    execution.

    \item [Case 2.2:] \textbf{There exists a time $\mathbf{t_{isolated}}$ after
    which the robots are always isolated.}

    By contradiction, assume that there
    exists a time $t'$ such that a node $u$ is never visited after $t'$. This
    implies that, after time $max\{t_{isolated}, t'\}$, either the robots are 
    always switching their position or they stay on their respective nodes. 

    In the first case, during the Look phase of each time greater than
    $max\{t_{isolated}, t'\}$, the respective variables $dir$ of the two robots
    contain the direction leading to $u$ (since each robot previously moves in 
    this direction). As at least one of the adjacent edges of $u$ is infinitely 
    often present, a robot crosses it in a finite time, that is contradictory
    with the fact that $u$ is not visited after $t'$.

    The second case implies that both adjacent edges to the location of both 
    robots are always absent after time $t_{isolated}$ (since an isolated robot
    moves as soon as it is possible, by definition of the predicate 
    $I\-Am\-Stuck\-Alone\-On\-My\-Node()$), that is contradictory with the
    connected-over-time assumption.
   \end{description}
 \end{description}
Thus, we obtain the desired result in every cases.
\end{proof}

\section{Conclusion}\label{sec:conclu}

   In this paper, we addressed the open question: ``What is the minimal size of a  
swarm of self-stabilizing robots to perform perpetual exploration of highly dynamic graphs?''. 
We give a first answer to this question by exhibiting the necessary
and sufficient numbers of such robots
to perpetually explore any connected-over-time ring, \ie any dynamic ring with 
very weak assumption on connectivity: every node is 
infinitely often reachable from any another one without any recurrence, periodicity, 
nor stability assumption. 
More precisely, we showed that necessary and sufficient numbers of robots proved in
\cite{BDP17} in a fault-free setting (2 robots for rings of size 3 and 3 robots for
rings of size greater than 4) still hold in the self-stabilizing setting
at the price of the loss of anonymity of robots.

In addition to the above contributions, our results overcome the robot networks
state-of-the-art in a couple of ways. First, at the exception of the algorithms from
\cite{BDP17}, it is the only algorithms dealing
with highly dynamic graphs. All previous solutions made some assumptions on 
periodicity or on all-time connectivity of the graph.
Second, it is the first self-stabilizing algorithm for the problem of 
exploration, either for static or for dynamic graphs.

This work opens an interesting field of research with numerous open questions.
First, we should investigate the necessity of every assumption made in this 
paper. For example, we assumed that robots are synchronous. Is this problem 
solvable with asynchronous robots? 
Second, it would be worthwhile to explore other problems in this rather 
complicated environment, \eg gathering, leader election, \etc. It may also
be interesting to consider other classes of dynamic graphs and other classes of
faults, \eg crashes of robots, Byzantine failures, \etc.

\bibliographystyle{plain}
\bibliography{biblio}
\end{document}

%
%
%
%
%

%
%
%
%
%
%
%
%
%
%
%
%
%
%
%